\documentclass[11pt]{article}

\usepackage[colorlinks, linkcolor=blue,citecolor=blue]{hyperref}           
\usepackage{color}

\usepackage{graphicx,subfigure,amsmath,amssymb,amsfonts,bm,epsfig,epsf,url,dsfont,tcolorbox,bbm,soul}
\usepackage{amsthm,mathrsfs}
\usepackage{tikz}
\usepackage{multirow}
\usepackage{bbm}      
\usepackage{booktabs}
\usepackage{cases}
\usepackage{enumitem}
\usepackage[small,bf]{caption}
\usepackage[top=1in,bottom=1in,left=1in,right=1in]{geometry}
\usepackage{fancybox}
\usepackage{hyperref}
\hypersetup{
    colorlinks,
    linkcolor={blue!80!black},
    citecolor={red!80!black},
    urlcolor={blue!80!black}
}
\usepackage{algorithm}
\usepackage{verbatim}
\usepackage{algorithmic}
\usepackage{mathtools}
\usepackage{braket}

\usepackage[normalem]{ulem}
\newcommand\redout{\bgroup\markoverwith{\textcolor{red}{\rule[0.5ex]{2pt}{0.8pt}}}\ULon}

\usepackage{scalerel,stackengine}

\stackMath
\newcommand\reallywidehat[1]{
\sav{\tmpbox}{\stretchto{%
0  \scaleto{%
    \scalerel*[\widthof{\ensuremath{#1}}]{\kern-.6pt\bigwedge\kern-.6pt}%
    {\rule[-\textheight/2]{1ex}{\textheight}}
  }{\textheight}%
}{0.5ex}}%
\stackon[1pt]{#1}{\tmpbox}%
}

\newcommand{\floor}[1]{\left \lfloor #1 \right \rfloor}

\newtheorem{theorem}{Theorem}
\newtheorem{cor}[theorem]{Corollary}

\newtheorem{lemma}[theorem]{Lemma}
\newtheorem{prop}[theorem]{Proposition}

\theoremstyle{remark}
\newtheorem{definition}{Definition}
\newtheorem{remark}{Remark}
\newtheorem{construction}{Construction}

\newenvironment{fminipage}%
  {\begin{Sbox}\begin{minipage}}%
  {\end{minipage}\end{Sbox}\fbox{\TheSbox}}

\makeatletter
\newcommand*{\rom}[1]{\expandafter\@slowromancap\romannumeral #1@}
\makeatother

\DeclareMathOperator{\supp}{supp}
\DeclareMathOperator{\Span}{Span}
\newcommand{\var}{\mathrm{Var}}
\newcommand{\cov}{\mathrm{cov}}

\newcommand{\nc}{\newcommand}
\nc{\un}{{\underline{n}}}  \nc{\ux}{{\underline{x}}}  \nc{\uX}{{\underline{X}}}  \nc{\uY}{{\underline{Y}}}  
\nc{\uy}{{\underline{y}} } \nc{\ue}{{\underline{e}}}  \nc{\uf}{{\underline{f}} } \nc{\ur}{{\underline{r}}} \nc{\Corr}{\mathrm{corr}}
\nc{\1}{{\mathbbm{1}}}
\nc\bfx{\boldsymbol x}
\nc{\Cov}{\mathrm{Cov}}

\newcommand{\Ind}{\mathbbm{1}}

\newcommand{\tr}{\mathrm{tr}}

\newcommand{\abs}[1]{\left|#1\right|}
\newcommand{\R}{\mathbb{R}} 
\newcommand{\N}{\mathbb{N}}

\newcommand{\Z}{\mathbb{Z}}

\newcommand{\C}{\mathbb{C}}

\newcommand{\E}{\mathbb{E}}

\newcommand{\cC}{\mathcal{C}} \nc\sC{{\mathscr C}}
\newcommand{\cE}{\mathcal{E}}  
\def\P{{\mathbb P}}

\newcommand{\calA}{{\cal A}}
\newcommand{\calB}{{\cal B}}

\newcommand{\calD}{{\cal D}}
\newcommand{\calE}{{\cal E}}
\newcommand{\calF}{{\cal F}}

\newcommand{\calH}{{\cal H}}

\newcommand{\calJ}{{\cal J}}

\newcommand{\calM}{{\cal M}}
\newcommand{\calN}{{\cal N}}

\newcommand{\calP}{{\cal P}}
\newcommand{\calQ}{{\cal Q}}

\newcommand{\calS}{{\cal S}}

\DeclarePairedDelimiterX{\cond}[1]{[}{]}{\setargs{#1}}
\NewDocumentCommand{\setargs}{>{\SplitArgument{1}{;}}m}
{\setargsaux#1}
\NewDocumentCommand{\setargsaux}{mm}
{\IfNoValueTF{#2}{#1} {#1\,\delimsize|\,\mathopen{}#2}}

\newcommand{\be}{\begin{equation}}
\newcommand{\ee}{\end{equation}}
\newcommand{\beqna}{\begin{eqnarray}}
\newcommand{\eeqna}{\end{eqnarray}}

\newcommand{\p}[1]{\left(#1\right)}
\newcommand{\pp}[1]{\left[#1\right]}
\newcommand{\ppp}[1]{\left\{#1\right\}}
\newcommand{\norm}[1]{\left\|#1\right\|}

\usepackage{xspace}
\usepackage{array}
\usepackage{tabularx}
\usepackage{booktabs} 
\usepackage{adjustbox} 

\newcommand{\s}[1]{\mathsf{#1}}
\newcommand{\sr}[1]{\mathrm{#1}}
\newcommand{\su}[1]{\underline{\mathsf{#1}}}

\makeatletter
\def\thanks#1{\protected@xdef\@thanks{\@thanks
        \protect\footnotetext{#1}}}
\makeatother

\makeatletter
\renewcommand{\paragraph}{%
  \@startsection{paragraph}{4}%
  {\normalfont\normalsize\bfseries}%
}
\makeatother

\allowdisplaybreaks
\date{}
\usepackage{etoolbox}
\patchcmd{\thebibliography}
  {\settowidth}
  {\setlength{\itemsep}{0pt plus 0.1pt}\settowidth}
  {}{}

\usepackage{etoolbox}
\AtBeginEnvironment{abstract}{\small\selectfont}

\begin{document}
\title{Asymptotically good bosonic Fock state codes}
 \author{Dor Elimelech$^1$\hspace*{.4in} Arda Aydin$^1$\hspace*{.4in}  Alexander Barg$^{1,2}$}\thanks{$^1$Institute for Systems Research, University of Maryland, College Park, MD 20742. $^2$Department of ECE, University of Maryland, College Park, MD 20742. Emails: 
 \{dor,arda,abarg\}@umd.edu. The research of the first author was supported by a Rothschild Fellowship granted by the Yad HaNadiv Foundation. The research of the last two authors was partially supported by NSF grants CIF-2330909 and CIF-2526035. }

\maketitle

 \begin{abstract}
We study the error-correction properties of multi-mode Fock-state codes under amplitude-damping (AD) noise, focusing on the asymptotic regime in which the total excitation of the code states grows without limit and the number of photon losses induced by the noise scales linearly with it. In this setting, existing code families, which correct only sublinearly many photon losses, do not protect against amplitude-damping (AD) noise with a constant loss parameter. We address this gap by constructing asymptotically good Fock-state codes relying on random classical codes in the discrete simplex. Our approach is based on a new equivalence between approximate correction for the AD channel and exact or approximate correction of sufficiently many photon losses under a truncated AD channel. Unlike many standard constructions of random quantum codes, our construction introduces randomness through the underlying classical indexing structure. Randomization also enables another desirable feature: bounded per-mode occupancy, which limits the number of photons in any individual mode and thereby increases the coherence lifetime of the code states. Finally, via a relation between Fock-state codes and permutation-invariant codes, our results also yield asymptotically good families of qudit permutation-invariant codes as well as codes in monolithic nuclear state spaces.
 \end{abstract}

{\small
\tableofcontents}

\section{\label{sec:Introduction} Introduction}

Large-scale quantum computing is essential for running advanced quantum algorithms. Maintaining reliability at scale requires error correction codes that satisfy specific criteria imposed by hardware limitations. While various state spaces can be used for that purpose, bosonic systems provide a promising approach by encoding information in the infinite-dimensional Hilbert space of a harmonic oscillator. In this work, we study Fock state codes in the asymptotic regime when the total excitation scales to infinity.

Fock states, a.k.a. number states, are the eigenstates of the number operator $\hat{n}$, forming an infinite basis for the Hamiltonian of the harmonic oscillator. A Fock state $\ket{n}, n=0,1,2,\ldots$ represents the number of photons occupying a specific mode of the electromagnetic field. Fock state codes are designed to protect quantum information by encoding it in the Fock basis. The encoded information in such systems is subject to decoherence resulting from photon loss. This noise model corresponds to the \textit{amplitude damping (AD)} channel, $\calN$, and admits a Kraus representation
\begin{gather} \label{eq: AD channel}
\rho \mapsto \sum_{k=0}^{\infty}A_k\rho A_k^\dagger\\
\intertext{
with Kraus operators}
 \label{eq: AD channel Kraus}
    A_k=\sum_{n=k}^{\infty} \sqrt{\binom{n}{k}} \sqrt{(1-\gamma)^{n-k} \gamma^k}|n-k\rangle\langle n|, \quad k=0,1,2,\ldots .
\end{gather}
Here, $\gamma\in(0,1]$ is a constant denoting the probability of decay, and the integer $k$ represents the number of photons lost. Fock state codes can be generalized to a multi-mode setting by utilizing $q$  harmonic oscillators rather than a single one. The $q$-mode Fock space is spanned by the product states $\ket{n_0}\otimes \ket{n_1}\otimes \ldots \otimes \ket{n_{q-1}}, n_i=0,1,\ldots; i=0,1,\ldots,q-1$.  In this paper, we focus on \textit{constant excitation} Fock state codes, in which every state satisfies the fixed excitation condition $\sum_{i=0}^{q-1}n_i=N$.

We study error correction with multi-mode Fock state codes used over the AD channel $\calN$. For this noise model, it is typically assumed that the channel acts independently on each mode of the composite system. Error-correcting properties of Fock state codes are often evaluated
relying on a quantity termed code fidelity, which serves as a computationally tractable proxy for the worst-case fidelity, $\calF$. Given a quantum code $Q$, the code fidelity is defined as 
        $$
   \calF_{\calA,Q} := \min_{\ket{\psi}\in Q}\sum_{A\in \calA}\bra{\psi}A^\dagger A\ket{\psi},
  $$
where $\calA$ is the set of errors that are corrected by the code $Q$. 
For the AD channel, we can define $\calA$ as the set of operators $\calA_{\leq t}$ representing at most $t$ photon losses. 
For the case of constant excitation codes, prior works rely on an approximation of the code fidelity  $\calF\geq 1-O(\gamma^{t+1})$, supporting the claim of error correction for fixed-length codes in channels with a small loss parameter.
 This approach, initially proposed in  \cite{chuang1997bosonic} and used in a number of follow-up papers \cite{wasilewski2007protecting,leung1997approximate,OuyangADCode}, has been a common starting point in the code analysis. 
 We argue that the utility of code fidelity is limited for
two distinct reasons. First, in the case of constant excitation codes, considered here and in \cite{chuang1997bosonic,aydin2025quantum},  $\calF_{\calA_{\le t}}$ is a universal constant that does not depend on the code (see Prop.~\ref{prop:ConstExFidel}), and second, code fidelity as a performance
metric is of little use unless the sequence of codes $(Q_N)_N$
recovers from a proportion of photon losses linear in $N$. It is easily shown, as we argue below, that
codes that do not satisfy this condition cannot perform well on the AD channel unless $\gamma$ vanishes as $N$ increases.

This leaves the challenge of constructing positive rate Fock state codes that correct a linear proportion of losses, typically referred to as \textit{asymptotically good}, which until this work was left unanswered. 
 As far as code constructions are concerned, for a long while
the best known scaling of the number of correctable errors $t$ for codes of total excitation $N$ has been $t\propto \sqrt{N}$, due to a construction of Bergmann and van Loock \cite{VanLoock}. A recent paper \cite{aydin2025quantum} took another step further by introducing a general construction of multi-mode constant excitation Fock state codes relying on existing classical $\ell_1$-simplex codes, used to produce positive rate codes capable of correcting $o(N/\log N)$ errors. While the codes presented in \cite{aydin2025quantum} offer a significant advance in error-correction capability, they suffer from three major  pitfalls:  
\begin{enumerate}
    \item The sublinear photon-loss correction implies no protection against the AD channel when $N$ is large.
    \item Their construction relies on a result from convex geometry, whose implementation incurs computation cost that scales doubly exponentially with the total number of photons, rendering the construction virtually unfeasible.
    \item They lack a desirable feature of Fock states: a limited photon count per mode (balanceness). Deterministic generation of Fock states with higher photon numbers remains a significant experimental bottleneck for realizing scalable Fock state codes \cite{PhysRevLett.125.093603, Waks_2006, Hofheinz2008-lz}. Furthermore, high-photon-number states are also more fragile: the probability of losing a single photon from $\ket{n}$ is approximately $n$ times higher than the $\ket{1}$ state, so the coherence lifetime of a Fock state $\ket{n}$ scales as $1/n$  \cite{Wang2008DecayofFock}. This phenomenon limits the maximum number of photons that can be utilized in a single mode for realizing Fock state codes. 
\end{enumerate}

In this work, we overcome these shortcomings by (1) replacing the classical deterministic $\ell_1$-simplex codes employed in  \cite{aydin2025quantum} with randomized constructions and (2) replacing exact error-correction mechanisms with \textit{approximate} ones. Harnessing the resulting randomness in the obtained Fock-state code, we dispense with the convex geometry argument
in favor of a simple low-complexity averaging procedure, thereby overcoming the complexity obstacle. In addition, these random $\ell_1$ codes have better parameters than the deterministic $\ell_1$ codes used in \cite{aydin2025quantum}, which enables approximate correction of linearly many photon losses and supports strong protection against the AD channel.

Although relaxing the exact quantum error correction (QEC) requirement to an approximate one ostensibly reflects a performance loss, this is in fact not the case. Indeed, even exact QEC of $t$-amplitude errors ultimately translates only to approximate QEC for the AD channel that is not restricted to a fixed number of photon losses. To make this perspective rigorous, we formally define a truncated AD channel $\calN_{\le t}$ and relate the worst-case entanglement fidelity of the code used on it to its performance on the non-truncated AD channel. Specifically, we prove that a code is an approximate quantum error-correcting (AQEC) code for the AD channel if and only if it is an (exact or approximate) QEC code for the truncated AD channel $\calN_{\le t}$. Our formalization enables us to extend the definition of asymptotically good families to the approximate correction case (see Def.~\ref{def:  AAG codes}). Coupled with the equivalence of exact and approximate error correction on the AD channel $\calN$ (Theorem~\ref{th:TruncToAD}), this leads to a conclusion that a code sequence is asymptotically good in either approximate or exact sense on the $\calN_{\le t}$ channel if and only if it is approximate asymptotically good for the AD channel (see Corollary~\ref{cor:operative}).

To prove that our codes are indeed asymptotically good, we use the framework of approximate error correction developed by B\'eny and Oreshkov \cite{beny2010general}, who derived necessary and sufficient conditions for a code to be $\epsilon$-error-correcting for a noisy channel. The averaging procedure over the random classical $\ell_1$ codewords together with concentration arguments ensures that with probability exponentially close to $1$, the resulting quantum code satisfies the AQEC conditions for linearly many photon losses while maintaining a positive rate. We also note that, although our randomized construction only succeeds with high probability, once a realization is given, one can efficiently verify whether the resulting code is good by performing a number of checks proportional to $(\text{number of Kraus operators})^2 \times (\text{dimension of the code})^2$. 

More broadly, this fits the general theme that random quantum code constructions are natural candidates for approximate quantum error correction. However, unlike existing works in which randomness is introduced at the level of the physical Hilbert space, such as Haar-random codes or constructions based on randomly selected many-body eigenstates \cite{klesseApproximateQuantumError2007,kong2022near,ma2025haar,brandao2019quantum}, our construction introduces randomness at the level of the underlying classical indexing structure from which the quantum codewords are generated. This is advantageous because the resulting quantum code inherits useful properties of random classical codes. In particular, relying on randomization of classical $\ell_1$ codes, we obtain Fock state codes that satisfy the balancedness property in a strong form, with at most logarithmic per-mode excitation.

The paper is structured as follows. We start with a formalization of the truncated AD channel $\calN_{\le t}$ and quantify its relation to the ``full'' AD channel $\calN$. In this part (Section~\ref{sec: truncated}), we also establish a relation between exact and approximate error correction and define asymptotically good Fock state code families. Section~\ref{sec:FockCodesConst} is devoted to the construction of 
the described Fock state code families. Supporting results about classical $\ell_1$ codes are 
collected in Section~\ref{sec:ClasicalELl_1}. Finally, in Section~\ref{sec: PI Spin} we recall that one of the main results of \cite{aydin2025quantum} establishes a relation
 between Fock state codes and qudit permutation invariant (PI) codes \cite{ruskaiExchange,ouyangPI,aydin2023family} as well as families of spin codes \cite{gross}. Leveraging our results for Fock state codes, we prove the existence of 
asymptotically good codes from these code families, advancing the best previously known scaling of $O(N/\log N)$ for the
number of correctable errors \cite{aydin2025quantum}, where $N$ is the code length for PI codes and total spin for spin codes, respectively.

\section{Preliminaries}
\subsection{Notation and definitions}
We begin by introducing the notation used throughout the paper. Underlined letters $\un=(n_0,\dots,n_{q-1})$ refer to $q$-tuples of nonnegative integers. Random variables are denoted by sans-serif symbols (e.g., $\s{X}, \s{Y}$). We use $\Z_0$ and $\N$ to denote the set of non-negative and positive integers, respectively. For $K\in \N$ we define  $[K]\triangleq \ppp{0,1,\dots,K-1}$. For an integer $N,q\in \N  $, consider the set of all ordered integer partitions (a discrete simplex), given by
   $$
   \calS_{q,N}\triangleq \Big\{\un\in \Z_0^{q} ~:~ \sum_{i=0}^{q-1} n_i=N\Big\}.
   $$
The number of points in the simplex is given by the balls-and-bins calculation, 
   $$
   \abs{\calS_{q,N}}=\binom{N+q-1}{q-1}.
   $$
   By abuse of notation, we will use the notation $\calS_{\alpha,N}$ also whenever $\alpha>1$ is not an integer, with the convention that $\calS_{\alpha,N}\triangleq \calS_{\floor{\alpha},N}$.
For $N\in \N$ and $\un\in \Z^q$, the multinomial coefficient $\binom{N}{\un}$ is defined in a standard way:
\[\binom{N}{\un}\triangleq \begin{cases}
\frac{N!}{\prod_{i=0}^{q-1} n_i!} & \un\in \calS_{q,N}\\
0 & \text{otherwise.}
\end{cases}\]
We will often use the binary entropy function $h_2:[0,1]\to[0,1]$, defined as
    $$
    h_2(x)=-x\log_2 x-(1-x)\log_2(1-x),\label{eq:binaryEnt}
    $$
to approximate the growth rate of binomial coefficients.

Given a Hilbert space $\calH$, denote by $L(\calH)$ and $D(\calH)$ the space of bounded linear operators and density operators on it, respectively. We consider the Hilbert space $\ell_2(\N)$ of all sequences $\mathbf{a}=(a_n)_{n=0}^\infty$ with finite $\ell_2$ norm: $\norm{\mathbf{a}}_2=\sqrt{\sum_{n}|a_n|^2}< \infty $. We denote the standard orthonormal basis of $\ell_2(\N)$ by $\ppp{\ket{0},\ket{1},\dots}$, where $\ket{n}$ denotes the sequence $\boldsymbol{a}^{(n)}$ given by $a^{(n)}_i=\delta_{n}(i)$, with $\delta_n$ denoting the Kronecker delta at $n$.

We use standard asymptotic notation: for functions $f,g:\N \to \R_+$, we say that $f=O(g)$ if there exists a constant $c>0$ such that $f(n)\leq c g(n)$ for all $n\in \N$. We say that $f=o(g)$ if $\lim_{n\to\infty}f(n)/g(n)=0$.

\subsection{Exact and approximate quantum codes} 
A quantum channel is given by a completely positive trace-preserving (CPTP) map  $\calN:L(\calH)\to L(\calH')$, where $\calH$ and $\calH'$ are two Hilbert spaces. Such CPTP maps can be equivalently represented by a set of Kraus error operators $\ppp{A_k}_k$ satisfying the completeness condition $\sum_{k}A_k^\dag A_k =I$, such that the action of $\calN$ 
on a state $\rho$ has the form $\calN(\rho)=\sum_{k}A_k \rho A_k^\dag$. A code $Q$ in $\cal H$ is said to correct an error set $\calE$ if there exists a decoding operation (i.e., a CPTP map) $\calD:L(\calH')\to L(\calH) $ such that for any channel $\calN$ with $\{A_k\}_k\subset\calE$ we have 
\[\calD \circ \calN |_{L(C)}=I_{L(C)},\]
meaning that any input state $\ket{\psi_{\mathrm{in}}}\in Q$ can be perfectly recovered from its noisy version. 
The well-known Knill-Laflamme (KL) conditions imply that a code $Q$ with an orthonormal basis $\ket{c_0},\dots,\ket{c_{k-1}}$ corrects $\calE$ if there are constants $\ppp{\lambda_{k,l}}_{k,l}$, such that for any $A_k,A_l\in \cE$ and $i,j$ we have 
\begin{align}
    \bra{c_i}A_k^\dag A_l \ket{c_j}=\lambda_{k,l}\cdot \delta_{i,j}.\label{eq:KLcond}
\end{align}

Under approximate quantum error-correction (AQEC), we allow the recovered state to be slightly different, but still close,  to the original state. Many frameworks for AQEC were studied in the literature \cite{mandayam2012towards,schumacher2002approximate,beny2009conditions,ma2025haar, beny2010general}, suggesting various criteria and tools to gauge the performance of an AQEC code. We follow the formalism introduced by B\'eny and Oreshkov \cite{beny2010general}, in which the performance of an AQEC code is measured using the worst-case entanglement fidelity, defined as
   \begin{equation}\label{eq: EF}
   \calF_e(\calN,\calM)\triangleq \inf_{\rho\in D(\calH)}\calF\p{\calN\otimes I_{L(\calH)}(\ket{\psi_\rho}\bra{\psi_\rho}),\calM\otimes I_{L(\calH)}(\ket{\psi_\rho}\bra{\psi_\rho})},
   \end{equation}
    where $ \calN, \calM :L(\calH)\to L(\calH')$ are two quantum channels, $\ket{\psi_{\rho}}$ is any purification of $\rho$, and $\calF(\rho,\tau)=\norm{\sqrt{\rho}\sqrt{\tau}}_1$ is the fidelity of a pair of states (the trace norm $\norm{\cdot}_1$ is introduced below in Definition~\ref{def:norms}). While entanglement fidelity provides a natural measure of approximation between channels, it is often technically convenient to work with an equivalent figure of merit. Below, we quantify approximation using the Bures distance, defined as
\[ d(\calN,\calM):=\sqrt{1-\calF_e(\calN,\calM)}.\]
\begin{tcolorbox}[width=1\linewidth, left=1mm, right=1mm, boxsep=0pt, boxrule=0.5pt, sharp corners=all, colback=white!95!black] 
\begin{definition}[Approximate quantum error correction]\label{def:AQEC}
     A code $Q\subseteq \calH$ is called an $\varepsilon$-approximate error-correcting code (AQECC) for a channel $\calN$ if there exists a decoding operation $\calD$ such that
    \[d\p{\calD \circ \calN|_{L(Q)},I_{L(Q)}}=\sqrt{1-\calF_{e}\p{\calD \circ \calN|_{L(Q)},I_{L(Q)}}}\leq  \varepsilon.\]
Extending this definition, consider a sequence of channels $(\calN)_N$ on Hilbert spaces $(\calH_N)_N$ such that there exists
a sequence of codes $Q_N\subset \calH_N$ with the property that for each $N,$ $Q_N$ is an $\epsilon_N$-AQECC for
$\calN_N$. If $\varepsilon_N\to 0$ as $N\to\infty$, we say that $\calQ=(Q_N)_N$ is an approximate error-correcting sequence  of codes for $(\calN_N)_N$.
\end{definition}
\end{tcolorbox}
We observe that a code that is an $\varepsilon$-AQECC for a channel $\calN$ becomes an exact code if $\varepsilon=0$.
\begin{remark}
    In contrast to exact quantum error correction, the approximate case does not satisfy the linearity property with respect to the set of correctable errors. For exact quantum codes, the KL conditions imply that if a code corrects a set of error operators, then it also corrects their linear span, making the error-set model rich and meaningful. In the approximate setting, this equivalence fails: approximate correctability of a given set of error operators does not extend in a controlled way to linear combinations or rescalings, and it is therefore unclear which quantum channels can be meaningfully composed from a fixed approximately correctable error set. As a consequence, in the approximate case, we switch perspective from 
    the error set to the channel view, specifying a noise model, and determining whether it can be approximately inverted.
   This approach generally results in a systematic formulation of AQEC. One exception is presented in a recent work \cite{ma2025haar}, which develops an approximate error-set model; however, it relies crucially on the strong properties of Haar random codes and highly regular unitary error families and does not readily extend beyond this special setting.
\end{remark}

In \cite{beny2010general}, B\'eny and Oreshkov derived necessary and sufficient conditions for approximate quantum error correction on a given channel, formulated in terms of its Kraus operators:

\begin{theorem}[AQEC conditions \cite{beny2010general}]\label{th:BenyOreshkov}
    Let $\calN$ be a channel with Kraus operators $\calE=\ppp{A_k}_{k=0}^{M-1}$. A code space $Q\subset \calH$ is an $\varepsilon$-AQECC if and only if there {exists a density matrix  $\lambda=\ppp{\lambda_{k,l}}_{k,l}$} such that $d(\Lambda, \Lambda+\calB)\leq \varepsilon$, where  $\Lambda,\calB:L(\calH)\to L(\C^{M})$ are the quantum channels given by 
    \begin{equation}\label{eq: Lambda B}
    \Lambda(\rho)=\sum_{k,l=0}^{M-1} \lambda_{k,l}\tr(\rho)\ket{k}\bra{l}, \quad \calB(\rho)=\sum_{k,l=0}^{M-1} \tr(B_{k,l}\rho)\ket{k}\bra{l},
    \end{equation}
        where \[B_{k,l}\triangleq PA_k^\dag A_l P - \lambda_{k,l}P,\]
         $P$ is the projection on $Q$, and $\ppp{\ket{k}}_{k=0}^{M-1}$ is an orthonormal basis of $\C^M$.
\end{theorem}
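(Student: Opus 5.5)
The plan follows the complementary-channel approach of B\'eny and Oreshkov. Let $V:\calH\to\calH'\otimes\C^M$, $V\ket{\psi}=\sum_k A_k\ket{\psi}\otimes\ket{k}$, be the Stinespring isometry of $\calN$, and let $\hat\calN(\rho)=\tr_{\calH'}(V\rho V^\dagger)=\sum_{k,l}\tr(A_l^\dagger A_k\rho)\ket{k}\bra{l}$ be the complementary channel. Since $\rho$ is supported on $Q$, we have $\rho=P\rho P$. After relabeling the indices (or transposing), the restriction of $\hat\calN$ to $L(Q)$ is exactly $\Lambda+\calB$, because
\[
\tr(PA_k^\dagger A_lP\rho)=\lambda_{k,l}\tr\rho+\tr(B_{k,l}\rho).
\]
The channel $\Lambda$ is a constant channel: it discards its input and prepares the fixed state $\lambda$. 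The theorem therefore reduces to the following information--disturbance duality: the optimal recovery fidelity for $\calN$ on $Q$ equals the optimal fidelity between $\hat\calN|_{L(Q)}$ and a constant channel. In formulas,
\[
\sup_{\calD}\calF_e(\calD\circ\calN|_{L(Q)},I_{L(Q)})=\sup_{\lambda}\calF_e(\hat\calN|_{L(Q)},\Lambda_\lambda),
\]
after which the statement follows by taking $d=\sqrt{1-\calF_e}$.

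For the direction ($\Leftarrow$), I would use Uhlmann's theorem. Fix a state $\rho$ on $Q$ with purification $\ket{\psi_\rho}\in Q\otimes R$. The output of $V\otimes I$ is a purification of $(\hat\calN\otimes I)(\psi_\rho)$ on the environment and reference. The state $\lambda\otimes\rho_R$ is purified by $\ket{\phi_\lambda}\otimes\ket{\psi_\rho}$. Uhlmann gives a partial isometry $W$ on $\calH'$, into $Q\otimes(\text{purifying system of }\lambda)$, that attains the fidelity. Tracing out the ancilla, $W$ defines a decoder $\calD$, and monotonicity of fidelity under the partial trace transfers the bound. The main obstacle is that $\calF_e$ is a worst case over $\rho$, while Uhlmann's $W$ depends on $\rho$. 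I would handle this with a minimax argument: the fidelity is jointly concave in the appropriate variables, the set of decoders is convex and compact, and the set of inputs $\rho$ (through the purification, as a convex function of $\rho$) is convex and compact. Sion's theorem then lets us exchange $\inf_\rho$ and $\sup_{\calD}$. This is exactly how B\'eny--Oreshkov proceed, and I would cite their convexity lemma for the fidelity between channels.

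For the direction ($\Rightarrow$), suppose $\calD$ achieves error $\varepsilon$, and let $U$ be a Stinespring dilation of $\calD\circ\calN$ into $Q\otimes E_{\mathrm{tot}}$, where $E_{\mathrm{tot}}$ contains $\C^M$. Near-identity of $\calD\circ\calN$ implies, again by Uhlmann and applied uniformly in $\rho$, that $U\ket{\psi}$ is close to $\ket{\psi}\otimes\ket{\phi}$ for a fixed environment state $\ket{\phi}$. Tracing out everything except the $\C^M$ register then shows that $\hat\calN$ is close in $\calF_e$ to the constant channel that outputs $\tr_{\text{rest}}\ket{\phi}\bra{\phi}=:\lambda$. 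Since the partial trace can only increase fidelity, this gives $d(\Lambda,\Lambda+\calB)\le\varepsilon$. The fact that $\lambda$ is a density matrix follows automatically, and $B_{k,l}$ is then defined from it.
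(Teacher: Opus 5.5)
The paper gives no proof of this theorem; it quotes it from B\'eny--Oreshkov. Your outline follows their complementary-channel argument, so the only question is whether your sketch is complete.

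The $(\Leftarrow)$ half is. Write $F_\rho$ for the fidelity evaluated on $\psi_\rho$. Uhlmann plus monotonicity gives $\max_{\calD}F_\rho(\calD\circ\calN,I_{L(Q)})\ge F_\rho(\hat{\calN},\Lambda_\lambda)$ for each $\rho$. Sion then applies because $F_\rho(\calD\circ\calN,I_{L(Q)})=\bigl(\sum_j|\tr(\rho K_j)|^2\bigr)^{1/2}$ (with $K_j$ the Kraus operators of $\calD\circ\calN$) is convex in $\rho$, while its square is linear in $\calD$.

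The gap is in $(\Rightarrow)$. Uhlmann's theorem applied to a fixed $\rho$ produces an environment vector $\phi_\rho$, and hence a $\rho$-dependent $\lambda_\rho$. It says nothing ``uniformly in $\rho$'', yet the statement needs a single $\lambda$ for all inputs. This is the same obstacle you flagged in the other direction, and it needs its own exchange of $\min$ and $\max$.

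One way to do it: take $E_{\mathrm{tot}}=E_{\calD}\otimes\C^M$ and $K_\phi=(I_Q\otimes\bra{\phi})U$. Then $F_\rho(\calD\circ\calN,I_{L(Q)})=\max_{\norm{\phi}\le 1}\mathrm{Re}\,\tr(\rho K_\phi)$. The function $(\rho,\phi)\mapsto\mathrm{Re}\,\tr(\rho K_\phi)$ is real-bilinear on a product of compact convex sets. Von Neumann's minimax theorem therefore yields one $\phi$ with $\mathrm{Re}\,\tr(\rho K_\phi)\ge 1-\varepsilon^2$ for all $\rho$; it can be taken of unit norm when $\varepsilon<1$, the only nontrivial case.

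Only after this does your partial-trace step give $\calF_e(\hat{\calN}|_{L(Q)},\Lambda_\lambda)\ge 1-\varepsilon^2$ with the fixed $\lambda=\tr_{E_{\calD}}\ket{\phi}\bra{\phi}$. In that step you must discard $Q\otimes E_{\calD}$ but keep the reference $R$, not just the $\C^M$ register. Alternatively, apply Sion to $(\rho,\lambda)\mapsto F_\rho(\hat{\calN},\Lambda_\lambda)$. It is concave in $\lambda$, and convex in $\rho$ because Uhlmann writes it as a maximum of real-linear functions of $\rho$.

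A smaller point: the identification with $\hat{\calN}$ cannot be made ``by transposing''. As printed, $\Lambda+\calB$ is the map $\rho\mapsto\sum_{k,l}\tr(A_k^\dagger A_l\rho)\ket{k}\bra{l}=\hat{\calN}(\rho)^{T}$. With a reference attached, this is a partial transpose, which changes fidelities and can destroy positivity. For the qubit channel with $A_0=\ket{0}\bra{0}$, $A_1=\ket{0}\bra{1}$ and $Q=\C^2$, it is the transpose map itself. The statement has to be read with the ordering of the complementary channel, i.e.\ with $PA_l^\dagger A_kP$ in the $\ket{k}\bra{l}$ entry and correspondingly for $\lambda$. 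That is the reading your argument actually uses.
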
 

In the following proposition, we modify the assumptions of Theorem~\ref{th:BenyOreshkov} to derive an approximate analog of the KL conditions in \eqref{eq:KLcond} that forms a sufficient
condition for AQEC and is tailored to the specific structure of the codes that we study in Section~\ref{sec:FockCodesConst}.  We remark that analyses in a similar general spirit, though for different settings, 
have earlier appeared in the literature, for instance in \cite{brandao2019quantum,battistel2017general} for Pauli error channels.

\begin{prop}[Approximate KL conditions]\label{prop:KLapprox1}
Let $\calN$ be a quantum channel with Kraus operators $\calE=\ppp{A_k}_{k=0}^{M-1}$, and let $Q\subseteq H$ be a quantum code with an orthonormal basis $\ppp{\ket{c_i}}_{i=0}^{K-1}$ that satisfies 
     \begin{equation}\label{eq: orth}
     \bra{c_i}A_k^\dag A_l \ket{c_j}=0 
     \end{equation}
if $i\neq j$ or $\ell\neq k$. Then $Q$ is an $\varepsilon$-AQECC with respect to $\calE$ if there exist {non-negative} numbers $\ppp{\lambda_{l}}_{l=0}^{M-1}$ such that \(
    \sum_{l=0}^{M-1} \lambda_l=1
\) and
\begin{align*}
    \max_{i,l}\abs{\bra{c_i}A_l^\dag A_l \ket{c_i}-\lambda_{l}}\leq  \frac{\varepsilon^2}{K M}.
\end{align*}

\end{prop}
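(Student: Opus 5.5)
Following Theorem~\ref{th:BenyOreshkov}, I take the diagonal density matrix $\lambda=\mathrm{diag}(\lambda_0,\dots,\lambda_{M-1})$. It is a valid density matrix because $\lambda_l\ge 0$ and $\sum_l\lambda_l=1$. With this choice it suffices to show $d(\Lambda,\Lambda+\calB)\le\varepsilon$.

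By the orthogonality hypothesis \eqref{eq: orth}, $PA_k^\dag A_lP=0$ for $k\neq l$. Hence $B_{k,l}=0$ for $k\ne l$, and each diagonal operator is
\[B_{l,l}=\sum_{i=0}^{K-1}\bigl(\bra{c_i}A_l^\dag A_l\ket{c_i}-\lambda_l\bigr)\ket{c_i}\bra{c_i}=\sum_i \delta_{i,l}\ket{c_i}\bra{c_i},\]
where $|\delta_{i,l}|\le \varepsilon^2/(KM)$. Both $\Lambda$ and $\Lambda+\calB$ are therefore measure-and-prepare channels into diagonal states of $\C^M$. For an input $\ket{\psi_\rho}$ on $Q\otimes R$ they output
\[\sigma_1=\sum_l \lambda_l \ket{l}\bra{l}\otimes \rho_R, \qquad \sigma_2=\sum_l \ket{l}\bra{l}\otimes \bigl(\lambda_l\rho_R+X_l\bigr),\]
with $X_l=\tr_Q\bigl[(B_{l,l}\otimes I)\ket{\psi_\rho}\bra{\psi_\rho}\bigr]$.

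Next I bound $1-\calF(\sigma_1,\sigma_2)$. The cleanest route is the Fuchs--van de Graaf-type inequality $1-\calF(\sigma_1,\sigma_2)\le \tfrac12\|\sigma_1-\sigma_2\|_1$. Since the two states are block diagonal in $l$,
\[\|\sigma_1-\sigma_2\|_1=\sum_l \|X_l\|_1\le \sum_l \|B_{l,l}\|_1\le \sum_l\sum_i|\delta_{i,l}|\le KM\cdot\frac{\varepsilon^2}{KM}=\varepsilon^2 .\]
Here I use that the partial trace contracts the trace norm, and that $\|(B\otimes I)\ket{\psi}\bra{\psi}\|_1\le \|B\|_\infty\le\|B\|_1$; any cruder bound is also fine because of the $KM$ slack. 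This gives $1-\calF\le \varepsilon^2/2\le\varepsilon^2$ uniformly in $\rho$. Taking the infimum over $\rho$ yields $d(\Lambda,\Lambda+\calB)\le\varepsilon$, and Theorem~\ref{th:BenyOreshkov} concludes the proof.

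The main obstacle is the fidelity-versus-distance step. I must check that the fidelity convention $\calF=\|\sqrt\rho\sqrt\tau\|_1$ (not squared) satisfies $1-\calF\le\frac12\|\cdot\|_1$, which holds by Fuchs--van de Graaf. A second check is that $\calB$ is only defined on $L(Q)$-supported inputs, so the infimum in \eqref{eq: EF} runs over states on $Q$. If the trace-norm bound turned out too lossy, the fallback is to compute the fidelity blockwise: it equals $\sum_l \calF(\lambda_l\rho_R,\lambda_l\rho_R+X_l)$, and each term can be estimated directly.
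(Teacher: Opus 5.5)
Your proof is correct, and it reaches the conclusion by a somewhat different route than the paper. Both arguments take the diagonal choice $\lambda_{k,l}=\delta_{kl}\lambda_l$ and observe that $B_{k,l}=0$ for $k\neq l$ while $B_{l,l}$ is diagonal in the basis $\{\ket{c_i}\}$. Both then invoke Theorem~\ref{th:BenyOreshkov} together with a Fuchs--van de Graaf inequality.

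The paper works at the level of channels. It bounds $\norm{\calB}_\diamond$ in four steps:
\begin{enumerate}
\item reduce to a pure input in $Q^{\otimes 2}$ via Lemma~\ref{lem:HermitianPreserving};
\item expand that input in a Schmidt basis;
\item apply the triangle inequality over the $K^2$ terms and then Cauchy--Schwarz, obtaining $\norm{\calB}_\diamond\le KM\max_{i,l}|\delta_{i,l}|$;
\item convert to Bures distance through Lemma~\ref{lem:DiamonaBures}.
\end{enumerate}

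You instead compute the two output states for an arbitrary purification, use their block-diagonal structure in the $\C^M$ register, and bound each block by contractivity of the partial trace. You then apply Fuchs--van de Graaf state by state. This bypasses Lemma~\ref{lem:HermitianPreserving} and the Schmidt/Cauchy--Schwarz step entirely.

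Your route is also sharper. The factor $K$ in the paper's bound is an artifact of the term-by-term triangle inequality. If you keep your intermediate estimate $\norm{X_l}_1\le\norm{B_{l,l}}_\infty=\max_i|\delta_{i,l}|$ rather than passing to $\norm{B_{l,l}}_1$, you get $\norm{\sigma_1-\sigma_2}_1\le M\max_{i,l}|\delta_{i,l}|$. So the statement in fact holds under the weaker hypothesis $\max_{i,l}|\bra{c_i}A_l^\dag A_l\ket{c_i}-\lambda_l|\le 2\varepsilon^2/M$.

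One point you use only implicitly, and should state: Fuchs--van de Graaf needs $\sigma_2$ to be a genuine state. It is, since $\lambda_l\rho_R+X_l=\tr_Q[(PA_l^\dag A_lP\otimes I)\ket{\psi_\rho}\bra{\psi_\rho}]\ge 0$ and the blocks have total trace $1$ by Kraus completeness. Your factor-$2$ slack also makes the argument insensitive to whether $\calF$ denotes the root or the squared fidelity.
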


\begin{proof} Consider the set $\ppp{\lambda_{k,l}}_{k,l}$ such that $\lambda_{k,l}=\delta_{kl}\lambda_l$, where $\lambda_l$ are
as in the above statement, and let $\Lambda$ and $\calB$ be the operators defined in \eqref{eq: Lambda B}. {Note that the matrix defined by $\lambda_{k,l}$ is a density matrix since it is diagonal with positive entries and unit trace by the assumption.} 
By Theorem~\ref{th:BenyOreshkov}, it is sufficient to show that the Bures distance $d(\Lambda,\Lambda+\calB)<\varepsilon$. We will estimate this distance using the diamond norm (see Definition~\ref{def:norms}) and then use Eq.~\eqref{eq:diamond-Bures} in Lemma~\ref{lem:DiamonaBures}. 
It is thus sufficient to show that  $\norm{\Lambda+\calB-\Lambda}_{\diamond}=\norm{\calB}_{\diamond}<\varepsilon^2$, where $\calB:L(\calH)\to L(\C^M)$ is now given by
    $$
    \calB(\rho)=\sum_{k,l=0}^{M-1} \tr(B_{k,l}\rho)\cdot \ket{k}\bra{l}=\sum_{l=0}^{M-1}\tr(B_{l,l}\rho)\cdot \ket{l}\bra{l}, 
    $$
where the last equality follows since $B_{k,l}=0$ for $k\ne l$ by the assumption \eqref{eq: orth}.

 By Lemma~\ref{lem:HermitianPreserving}, we can write $\norm{\calB}_{\diamond}=\norm{I_{\calH}\otimes \calB (\ket{\psi}\bra{\psi})}_1$ for some $\ket{\psi}\in C^{\otimes 2}$.  By the Schmidt decomposition, we can write $\ket\psi$ as
 $\ket\psi=\sum_{i=0}^{K-1} p_i \ket{i}\otimes \ket{i'}$, where $\sum_ip_i^2=1$ and $\ppp{\ket{i}}_i$, $\ppp{\ket{i'}}_i$ are orthonormal bases of (two copies of) $Q$. Let 
      $$
 \norm{B}_{\s{op}}=\sup_{\norm{X}_1\leq 1}\norm{B(X)}_1
     $$
be the operator norm of $\calB$ with respect to the trace norm.  

With these preparations, we compute
    \begin{align}
     \norm{\calB}_{\diamond}&=\norm{I_{L\p{C}}\otimes \calB (\ket{\psi}\bra{\psi})}_1=\Big\|\sum_{i,j=0}^{K-1} p_ip_j\ket{i}\bra{j}\otimes \calB(\ket{i'}\bra{j'})\Big\|_1 \notag\\
          &\leq \sum_{i,j=0}^{K-1} p_ip_j \norm{\ket{i}\bra{j}}_1\cdot  \norm{\calB(\ket{i'}\bra{j'})}_1 \label{eq: triangle}\\
               &\le\norm{\calB}_{{\s{op}}}\cdot \sum_{i,j=0}^{K-1} p_ip_j
          =\norm{\calB}_{{\s{op}}}\cdot \p{\sum_{i}^K p_i}^2 \label{eq: orthonormal}\\
          &\leq \norm{\calB}_{{\s{op}}}\cdot \p{\sqrt{\sum_{i=0}^{K-1}1^2 \cdot  \sum_{i}^K p_i^2}}^2 \label{eq: Cauchy}\\
     &=K\cdot \sup_{\norm{X}_1\leq 1}\norm{\calB(X)}_1= K\cdot \sup_{\norm{X}_1\leq 1}\norm{\sum_{l=0}^{M-1}\tr(B_{l,l}X)\cdot \ket{i}\bra{i}}_1 \notag\\ 
    &= K\cdot \sup_{\norm{X}_1\leq 1}\sum_{l=0}^{M-1}\abs{\tr(B_{l,l}X) }\leq K\cdot \sum_{l=0}^{M-1}\sup_{\norm{X}_1\leq 1}\abs{\tr(B_{l,l}X) }, \label{eq: sup}
\end{align}
where \eqref{eq: triangle} follows by the triangle inequality and multiplicativity of the trace norm with respect to the tensor product, \eqref{eq: orthonormal} uses the fact that $\norm{\ket{i}\bra{j}}_1=1$ for orthonormal vectors $\ket{i},\ket{j}$
as well as the definition of $\norm{\cdot }_{{\s{op}}}$, and \eqref{eq: Cauchy} follows by the Cauchy–Schwarz inequality.

By the duality principle, Lemma~\ref{lem:Duality}, the supremum in \eqref{eq: sup} can be bounded as 
    \begin{equation}\label{eq: sup B}
    \sup_{\norm{X}_1\leq 1}\abs{\tr(B_{l,l}X)}\leq \norm{B_{l,l}}_\infty, 
    \end{equation}
where $\norm{\cdot}_\infty$ is the usual spectral norm given by the maximal singular value. Recalling that $\ppp{\ket{c_i}}_{i=0}^{K-1}$ is an orthonormal basis, we obtain
   \begin{align*}
    B_{l,l}&=\sum_{i,j=0}^{K-1} \ket{c_i}\bra{c_i}A_l^\dag A_l  \ket{c_j}\bra{c_j} - \lambda_{l,l}\cdot \sum_{i=0}^{K-1} \ket{c_i}\bra{c_i}\\
    &=\sum_{i=0}^{K-1} (\bra{c_i}A_l^\dag A_l \ket{c_i} - \lambda_{l,l})\cdot \ket{c_i}\bra{c_i},
     \end{align*}
where the second equality follows by \eqref{eq: orth}. 
Thus, the set of eigenvalues of $B_{l,l}$ has the form $\{\bra{c_i}A_l^\dag A_l \ket{c_i} - \lambda_{l,l}\}_k$. In particular, 
    \[
    \norm{B_{l,l}}_{\infty}\leq \max_{i}\abs{\bra{c_i}A_l^\dag A_l \ket{c_i} - \lambda_{l,l}}.
    \]
Denote the right-hand side of this inequality by $\varepsilon_{\max}$. Combining this with \eqref{eq: sup} and \eqref{eq: sup B}, we obtain 
\[\norm{\calB}_\diamond\leq K M\varepsilon_{\max}.\]
This completes the proof.
\end{proof}

\section{Bosonic Fock state codes and the AD channel}
We consider a bosonic quantum system comprising $q$ modes, each associated with the separable Hilbert space $\calH_{\mathrm{Fock}}= \ell^2(\N)$, spanned by the number states $\ppp{\ket{n}}_{n\in \Z_0}$.  The joint Hilbert space of the $q$-mode system is 
       \[
       \calH_{q}\triangleq\bigotimes_{i\in [q]}\calH_{\mathrm{Fock}}=\mathrm{Span}\ppp{\ket{\un}=\ket{n_0}\otimes \cdots \otimes \ket{n_{q-1}} ~:~ \un\in \Z_0^q}.
       \]
The total excitation of a number state $\ket{\un}\in \calH_q  $ is given by $N=\sum_{i=0}^{q-1} n_i$. Note that $q$-mode 
number states $\ket{\un}$ with excitation $N$ are in one-to-one correspondence with vertices of the simplex $\calS_{q,N}$, a fact that was used
in the constructions of \cite{aydin2025quantum}. 

We focus on $q$-mode bosonic Fock state codes with constant excitation. As we show in Section~\ref{sec:AmplitudeDamping},  the restriction to constant total excitation both renders the encoding space finite-dimensional and imposes algebraic structure on the action of physically relevant noise operators, which can be exploited in the analysis of quantum error correction. Specifically, a code $Q\subset \calH_q$ of dimension $K$ and constant excitation $N$ is a linear subspace of the space 
    \[
    \calH_{q,N}\triangleq\mathrm{Span}\ppp{\ket{\un} ~:~ \un \in \calS_{q,N}}\subset \calH_q,
    \]
spanned by the basis states
     \[\ket{c_i}=\sum_{\un \in \calS_{q,N}}\alpha_{\un}^{(i)}\cdot \ket{\un}, \quad i=0,\dots,K-1.
     \] 
     The {\em rate} of a $K$-dimensional quantum code $Q$ is defined as 
\[R(Q)\triangleq \frac{\log_2 K}{\log_2 \dim(\calH_{q,N})}=\frac{\log_2K}{\log_2\binom{N+q-1}{q-1}}.\]
where in the second equality we assumed constant excitation.

Although states in $\calH_{q,N}$ have constant total excitation, this condition alone does not preclude the possibility that a large fraction of photons is concentrated in a single mode. To rule out such highly unbalanced patterns, we will limit ourselves to codewords with bounded per-mode occupancy, assuming that every mode cannot host more than a fixed maximum number of photons. This is a simple technical constraint that keeps the code space inside a low-occupation sector and avoids pathological concentration on a few modes. Similar local bounded-occupancy assumptions appear naturally in bosonic encodings used for optical loss tolerance \cite{ralph2005loss}, where logical information is deliberately kept within the vacuum/single-photon sector of each mode. This assumption was earlier used in \cite{grassl2018quantum}, which constructed families of single-error-correcting codes in this regime.
 \begin{definition}
 Let $N,q$, and $B\leq N$ be positive integers. A state in the space $\calH_{q,N}^{(B)}$,
 \[\calH_{q,N}^{(B)}\triangleq\sr{Span}\ppp{\ket{\un} ~:~ \un\in \calS_{q,N}, \norm{\un}_\infty\leq B},\]
 is said to have $B$-local excitation. A Fock state code $Q\subseteq \calH_{q,N}^{(B)}$ is said to have $B$-local excitation. 
 \end{definition}
While this definition implies a deterministic per-mode excitation bound for each codeword, it is also natural to consider relaxed formulations in which local excitation is controlled only in a probabilistic sense with respect to measurement outcomes on a given codeword. Rather than requiring a uniform bound on the maximal photon number per mode, one may instead require that the probability of observing excitation above a prescribed threshold $
B$ under photon-number measurement is small for any input codeword. Such a relaxation allows states whose amplitudes are not strictly confined to a bounded-occupancy sector, while still suppressing most of the time highly concentrated excitation patterns. From this perspective, approximate local boundedness may provide comparable protection against pathological high-excitation configurations, while allowing a larger collection of admissible codes. Accordingly, we give the following definition.
\begin{definition}\label{def:ApproxLocBound}
  Let $N,q$, and $B\leq N$ be positive integers, and let $\Pi_B$ be the orthogonal projection on $\calH_{q,N}^{(B)}$.  A Fock state code $Q\subseteq \calH_{q,N}$ is said to have an $\varepsilon$-approximate $B$-local excitation if for any input codeword, the outcome of the measurement $\ppp{M_0=\Pi_B,M_1=I-\Pi_B}$ gives $0$ with probability at least $1-\varepsilon$, i.e., 
     \[
     \inf_{\ket{\psi_{\sr{in}}}\in Q} \bra{\psi_{\sr{in}}}\Pi_B \ket{\psi_{\sr{in}}}\geq 1-\varepsilon. 
     \]
  
  Let $B=(B_N)_N$ be a sequence of positive numbers.  A sequence of Fock state codes $(Q_N)_N$, $Q_N\subseteq \calH_{q,N}$,  is said to have approximate asymptotically $B$-local excitation if 
      \[
      \inf_{\ket{\psi_{\sr{in}}}\in Q_N} \bra{\psi_{\sr{in}}}\Pi_{B_N} \ket{\psi_{\sr{in}}}\geq 1-o(1).
      \]
 \end{definition}
 
\subsection{The AD channel }\label{sec:AmplitudeDamping}
In this section, we re-examine the formulation of error-correcting conditions for the AD channel accepted in the literature. 

The dominant noise mechanism in bosonic platforms is AD, which models the loss of energy carriers, such as photons or phonons, from the system. The noise strength is characterized by a loss parameter $\gamma\in[0,1]$, typically small, which is an intrinsic property of a single bosonic mode and quantifies its coupling to a loss channel. In multi-mode systems, we assume that $\gamma$  remains a fixed constant for each mode and does not scale with the total number of modes. The AD channel on a $q$-mode bosonic space, $\calN:\calH_q\to \calH_q$, is defined by the set of Kraus operators $\ppp{A_{\ur}}_{\ur\in\Z_0^q}$: 
\[A_{\ur} = A_{r_0}\otimes \cdots \otimes A_{r_{q-1}},\]
where $A_{r_i}$ is the single-mode amplitude error corresponding to $r_i$ photon losses, Eq~\eqref{eq: AD channel Kraus}. The action of $A_{r_i}$ is described as follows:
   $$
  \ket n \;\mapsto 
    \sqrt{\binom{n}{r_i}}(1-\gamma)^{\frac{n-r_i}{2}}\gamma^{\frac{r_i}{2}}\cdot \ket{n-r_i} \Ind_{\{n\ge r_i\}}.
   $$
The error operator $A_{\ur}$ with, $\ur \in \calS_{q,r}$, $r\leq t$, is called an $r$-photon loss error. 

\begin{definition}[Code distance]
A Fock state code $Q$ is said to have distance $d(Q)=t+1$ 
if it can correct any combination of up to $t$ photon-loss events across all modes. In other words, $Q$ is a quantum code that corrects the set of errors $\cE_{\leq t}=\ppp{A_{\ur} ~:~ \ur\in \bigcup_{r\leq t}\calS_{q,r}}$. 
\end{definition}
\begin{tcolorbox}[width=1\linewidth, left=1mm, right=1mm, boxsep=0pt, boxrule=0.5pt, sharp corners=all, colback=white!95!black]  
\begin{definition}[Asymptotically good Fock state codes]
A sequence of codes $\calQ=(Q_N)_N, N\ge 1$, $Q_N\subseteq \calH_{q_{_N},N}$ is called \emph{asymptotically good} if 
    the distance $d(Q_N)$ grows linearly with $N$ and the rate stays positive as $N$ increases:
    \begin{equation}\label{eq:limits}
    \min\Big(R(\calQ),\liminf_{N\to\infty} \frac{d(Q_N)}{N}\Big)>0,
    \end{equation}
where $R(\calQ):=\liminf_{N\to\infty} R(Q_N)$.  
\end{definition}
\end{tcolorbox}

Early work on bosonic Fock state codes under AD noise focused on codes that correct up to a total of 
$t$ photon-loss events. Such codes were introduced in \cite{chuang1997bosonic,leung1997approximate} and subsequently studied in numerous follow-up works, e.g., \cite{grassl2018quantum,OuyangADCode}. A key motivation of this view has been provided in 
\cite{chuang1997bosonic}, which defined the ``code fidelity'' for constant-excitation Fock state codes under $t$ AD errors:
\begin{align*}
 \calF_{\leq t}(Q)\triangleq \inf_{\ket{\psi_{\mathrm{in}}}\in Q} \sum_{\substack{\ur\in \calS_{q,r}\\ r\leq t}} \bra{\psi_{\mathsf{in}}} A_{\ur}^\dag A_{\ur }\ket{\psi_{\mathrm{in}}}.
\end{align*}
As shown in \cite{chuang1997bosonic}, once the excitation of the code is fixed, this quantity does not depend on a particular choice of the code that is
capable of correcting $t$ AD errors. This observation led the authors of \cite{chuang1997bosonic} to suggest the code-fidelity as a universal measure of how well a code correcting $t$ AD errors performs in the presence of the (non-truncated) AD channel (see Sec.~\ref{sec: truncated} for a discussion of the truncation). Relying on the asymptotic behavior of $\calF_{\leq t}$, it was suggested in \cite{chuang1997bosonic} that exact correction of $t$ photon losses might suffice to achieve good approximate error correction against the full AD channel (with a potentially arbitrary number of losses) when $\gamma$ is small. 

A closer look at the code fidelity derivation leads us to a different interpretation of this quantity:
\begin{quote}{\em $\calF_{\leq t}$ is  a universal constant describing the entire constant-excitation subspace $\calS_{q,N}$, and it does not depend on a specific code.}\end{quote}
Thus, the code fidelity is the probability that at most $t$ photons are lost under the AD process for any input state with excitation $N$, and it is independent of the choice of code within that subspace. We also show that a more delicate asymptotic analysis of this quantity is required to determine how well a $t$-error-correcting code actually performs when $N$ increases and $\gamma$ and $t$ change as functions of $N$.
\begin{prop}
    \label{prop:ConstExFidel}
    For any quantum state $\ket{\psi}\in \calH_{q,N}$  we have 
     \[ 
     {\sum_{r\leq t}}\sum_{\substack{\ur\in \calS_{q,r}}} \bra{\psi} A_{\ur}^\dag A_{\ur }\ket{\psi}=
     \sum_{r=0}^t \binom{N}{r}\gamma^{r}(1-\gamma)^{N-r}.
    \]
    \end{prop}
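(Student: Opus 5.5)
The plan is to show that each operator $A_{\ur}^\dag A_{\ur}$ is diagonal in the Fock basis. The quadratic form then loses all cross terms, and the claim reduces to a multinomial identity on each basis vector $\ket{\un}$.

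\textbf{Step 1 (diagonality).} From the single-mode action, $A_{r}\ket{n}=\sqrt{\binom{n}{r}}(1-\gamma)^{\frac{n-r}{2}}\gamma^{\frac r2}\ket{n-r}$, so
\[
A_r^\dag A_r=\sum_{n\ge 0}\binom{n}{r}(1-\gamma)^{n-r}\gamma^{r}\ket{n}\bra{n},
\]
using the convention $\binom{n}{r}=0$ for $n<r$. Taking tensor products gives
\[
A_{\ur}^\dag A_{\ur}\ket{\un}=w_{\ur}(\un)\ket{\un},\qquad w_{\ur}(\un)=\prod_{i=0}^{q-1}\binom{n_i}{r_i}\gamma^{r_i}(1-\gamma)^{n_i-r_i}.
\]
Write $\ket{\psi}=\sum_{\un\in\calS_{q,N}}\alpha_{\un}\ket{\un}$ with $\sum|\alpha_{\un}|^2=1$. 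Then
\[
\bra{\psi}A_{\ur}^\dag A_{\ur}\ket{\psi}=\sum_{\un}|\alpha_{\un}|^2 w_{\ur}(\un).
\]

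\textbf{Step 2 (sum over $\ur$ with fixed $|\ur|=r$).} For $\un\in\calS_{q,N}$ and $\ur\in\calS_{q,r}$, the exponents factor out of the product: the total power of $\gamma$ is $\sum_i r_i=r$ and the total power of $1-\gamma$ is $\sum_i (n_i-r_i)=N-r$. Hence
\[
\sum_{\ur\in\calS_{q,r}}w_{\ur}(\un)=\gamma^r(1-\gamma)^{N-r}\sum_{\ur\in\calS_{q,r}}\prod_i\binom{n_i}{r_i}=\binom{N}{r}\gamma^r(1-\gamma)^{N-r}.
\]
The last equality is Vandermonde's identity, seen by comparing the coefficient of $x^r$ in $\prod_i(1+x)^{n_i}=(1+x)^N$. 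The resulting quantity is independent of $\un$.

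\textbf{Step 3 (conclusion).} Summing over $r\le t$ and using $\sum_{\un}|\alpha_{\un}|^2=1$ yields the claimed expression for every $\ket{\psi}\in\calH_{q,N}$. Since the value is the same for all such states, the infimum defining $\calF_{\le t}$ equals the same constant.

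The only step needing care is Step 2. The factoring of the powers of $\gamma$ and $1-\gamma$ relies on constant excitation, which is exactly what makes the quantity independent of the code. Terms with $r_i>n_i$ vanish, consistently with Vandermonde's identity.
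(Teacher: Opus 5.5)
Your proof is correct and follows essentially the same route as the paper. The paper imports from earlier work an expression for $\bra{\psi}A_{\ur}^\dag A_{\ur}\ket{\psi}$ that involves only the $|\alpha_{\un}|^2$ and simplifies to your weight $\gamma^r(1-\gamma)^{N-r}\prod_i\binom{n_i}{r_i}$; it then sums over $\ur\in\calS_{q,r}$ by the same Vandermonde-type identity and uses normalization, so your direct derivation of diagonality from the single-mode Kraus operators just makes the cited step self-contained.
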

The right-hand side of this relation is the probability that a binomial random variable is at most $t$. Below, we use the notation
    $$
    p_{N,t}:=\P\pp{\sr{Binomial}(N,\gamma)\leq t}.
    $$
Although $p_{N,t}$ depends on $\gamma$, we omit it from the notation since it is fixed throughout all arguments.
\begin{proof}
    Let $\ket{\psi}=\sum_{\un\in \calS_{q,N}}\alpha_{\un}\cdot \ket{\un}$. As shown in \cite[Eq.~(11)]{aydin2025quantum}, for $\ur\in \calS_{q,r}$ we have  
    \[ \bra{\psi} A_{\ur}^\dag A_{\ur }\ket{\psi}= (1-\gamma)^{N-r}\gamma^{r}\binom{N}{r}\binom{r}{\ur}\cdot\sum_{\un\in \calS_{q,N}}\abs{\alpha_{\un}}^2\frac{\binom{N-r}{\un-\ur}}{\binom{N}{\un}}.\]
    Summing over all $\ur\in\bigcup_{r\leq t}\calS_{q,r}$ we obtain 
    \begin{align*}
        \sum_{\substack{\ur\in \calS_{q,r}\\ r\leq t}} \bra{\psi} A_{\ur}^\dag A_{\ur }\ket{\psi}&=\sum_{r=0}^t\sum_{\ur\in\calS_{q,r}}\sum_{\un\in\calS_{q,N}}\abs{\alpha_{\un}}^2(1-\gamma)^{N-r}\gamma^{r}\binom{N}{r}\binom{r}{\ur}\frac{\binom{N-r}{\un-\ur}}{\binom{N}{\un}}
        \\&=\sum_{r=0}^t (1-\gamma)^{N-r}\gamma^{r}  \sum_{\un\in\calS_{q,N}} \abs{\alpha_{\un}}^2 \sum_{\ur\in\calS_{q,r}} \prod_{i=0}^{q-1}\binom{n_i}{r_i}\\
        &=\sum_{r=0}^t (1-\gamma)^{N-r}\gamma^{r}  \binom{N}{r}\sum_{\un\in\calS_{q,N}} \abs{\alpha_{\un}}^2 \\
        &=\sum_{r=0}^t (1-\gamma)^{N-r}\gamma^{r}   \binom{N}{r},
    \end{align*}
    where the second equality follows by a direct calculation (\cite[Lemma A.1]{aydin2025quantum}) and the last equality follows since $\sum_{\un\in\calS_{q,N}} \abs{\alpha_{\un}}^2=1$ as $\ket{\psi}$ is a quantum state.
\end{proof}

Relying on this proposition, we argue that the term ``code fidelity'' is not applicable because $\calF_{\le t}$ does not depend on the code and is simply a binomial probability, $p_{N,t}$, $ t\le N$. In \cite{chuang1997bosonic,wasilewski2007protecting,leung1997approximate,OuyangADCode}, $p_{N,t}$  was approximated as 
   $$
p_{N,t}=1-\binom{N}{t+1}\gamma^{t+1}+O(\gamma^{t+2}).
   $$
This approximation, appropriate for smaller codes, becomes untenable for larger $N$ and $t$. 
For instance, the single-error-correcting code with $N=4$ photons in \cite{chuang1997bosonic} has fidelity $\approx 1 - 6\gamma^2$, which is a good estimate for small $\gamma$. At the same time, the code correcting $t=4$ errors with $N=50$ photons presented in the same paper yields a fidelity bound $\approx 1 - 2\cdot 10^6\gamma^5$ which {becomes vacuous}
for most values of $\gamma\in(0,1)$. 

Turning to asymptotics, we note that in the physically relevant scenario, $\gamma$ is a small constant that does not vanish with $N$, and therefore 
the above approximation is vacuous unless $t$ exceeds the expected number of photon losses $N\gamma$. Specifically, by the weak law of large numbers, $p_{N,t}$ changes from 
almost 0 to almost 1 when $t$ increases within the interval $[(1-\varepsilon)\gamma N, (1+\varepsilon)\gamma N]$ for any $\varepsilon>0$.
 This implies that codes that cannot support correcting a linear number of AD errors will perform poorly on the AD channel with loss parameter $\gamma$ since the probability that the input state will be subjected to a correctable error pattern vanishes with the increase of $N$.
 
 In the next section, we give an operational interpretation of $p_{N,t}$ by relating it to the {\em worst-case entanglement fidelity}, and show that in the asymptotic regime of $N\to\infty,$ $t>(1+\varepsilon)\gamma N$ a code (approximate or exact) that corrects $t$ AD errors, supports approximate correction of the AD noise with respect to the worst-case entanglement fidelity criterion.

\subsection{The truncated AD channel and AQEC}\label{sec: truncated}
As discussed above, given an AD channel, it is not clear what we mean by saying that a code ``corrects $t$ errors''. In this section, we provide a formal
answer to this question.

 Consider the space $\calH_{q,N}$ of constant excitation $q$-mode Fock states and the operation $\calN_{\leq t}$ on $L(\calH_{q,N})$ defined by the normalized truncated AD channel: 
 \begin{align}
     \calN_{\leq t}(\rho)=\frac{1}{p_{N,t}}\sum_{\substack{\ur \in \calS_{q,r}\\r\leq t}} A_{\ur} \rho A_{\ur}^\dag.\label{eq:Trunc<t}
 \end{align}
 
 Proposition~\ref{prop:ConstExFidel} implies that when restricted to $\calH_{q,N}$, $\calN_{\leq t}$, it is trace preserving, and hence a CPTP map. This is formulated in the following proposition, whose formal proof appears in Appendix~\ref{sec:additProof}.
 \begin{prop}\label{obs:quantumChannelTrunc}
     The map $\calN_{\leq t}$ is a quantum channel on $\calH_{q,N}$.
 \end{prop}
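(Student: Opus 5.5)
The plan is to exhibit a Kraus representation of $\calN_{\leq t}$ as a map $L(\calH_{q,N})\to L(\calH_q)$ and check the completeness relation on $\calH_{q,N}$. The natural Kraus operators are $\tilde A_{\ur}\triangleq p_{N,t}^{-1/2}A_{\ur}|_{\calH_{q,N}}$ for $\ur\in\bigcup_{r\le t}\calS_{q,r}$. Since this is a finite family of linear maps, $\rho\mapsto\sum_{\ur}\tilde A_{\ur}\rho\tilde A_{\ur}^\dag$ is automatically completely positive. It therefore remains only to prove trace preservation, namely
\[
\sum_{\substack{\ur\in\calS_{q,r}\\ r\le t}}\tilde A_{\ur}^\dag\tilde A_{\ur}=I_{\calH_{q,N}}.
\]
Note that the output lives in $\bigoplus_{r\le t}\calH_{q,N-r}$, so I will regard the channel as landing in $L(\calH_q)$ (or in the finite-dimensional span of these sectors).

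The key step is to show that each $A_{\ur}^\dag A_{\ur}$ is diagonal in the Fock basis. By the explicit action of $A_{r_i}$, the operator $A_{\ur}$ maps $\ket{\un}$ to a multiple of $\ket{\un-\ur}$. Hence $\bra{\um}A_{\ur}^\dag A_{\ur}\ket{\un}=0$ unless $\um-\ur=\un-\ur$, that is, unless $\um=\un$. Consequently $S\triangleq\sum_{\ur}A_{\ur}^\dag A_{\ur}$, restricted and compressed to $\calH_{q,N}$, is diagonal in the basis $\{\ket{\un}:\un\in\calS_{q,N}\}$.

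Its diagonal entries can be read off from Proposition~\ref{prop:ConstExFidel} applied to $\ket{\psi}=\ket{\un}$: each entry equals $p_{N,t}$. Directly, the entry is $\sum_{r\le t}\gamma^r(1-\gamma)^{N-r}\sum_{\ur\in\calS_{q,r}}\prod_i\binom{n_i}{r_i}$, and Vandermonde's identity collapses the inner sum to $\binom{N}{r}$. Therefore $S|_{\calH_{q,N}}=p_{N,t}I$, and dividing by $p_{N,t}$ gives the completeness relation. Equivalently, $\tr\calN_{\le t}(\rho)=p_{N,t}^{-1}\tr(S\rho)=\tr\rho$ for all $\rho\in L(\calH_{q,N})$.

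The only delicate point is the off-diagonal vanishing. Proposition~\ref{prop:ConstExFidel} only gives $\bra\psi S\ket\psi=p_{N,t}$ for unit vectors $\psi$. This already suffices by polarization: a Hermitian operator whose quadratic form equals $p_{N,t}\|\psi\|^2$ on all of $\calH_{q,N}$ is $p_{N,t}I$ there. So the off-diagonal argument can be replaced by polarization if preferred. I also need $p_{N,t}>0$, which holds for $\gamma<1$, or for $\gamma=1$ with $t\ge N$; otherwise the map is undefined, and I would note this caveat.
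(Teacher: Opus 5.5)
Your proof is correct and takes essentially the paper's route: complete positivity comes from the Kraus form, and trace preservation comes from Proposition~\ref{prop:ConstExFidel}. You lift that proposition to the operator identity $\sum_{\ur}\tilde A_{\ur}^\dag\tilde A_{\ur}=I_{\calH_{q,N}}$ via Fock-basis diagonality (or polarization), whereas the paper writes $X=T_1+iT_2$, decomposes each part spectrally, and uses linearity of the trace. Your caveat that $p_{N,t}>0$ is needed, which fails only for $\gamma=1$ and $t<N$, is a valid point the paper leaves implicit.
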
 

It has been observed since the early days of AQEC that the notion of distance, which plays a central role for exact quantum codes, does not admit a meaningful analog in the approximate setting. In particular, the authors of \cite{crepeau2005approximate} showed that the equivalence between correcting arbitrary errors and erasures, on which the operational interpretation of distance relies, breaks down for approximate quantum error-correcting codes, and concluded that there is no appreciable notion of distance for AQEC. Proposition \ref{obs:quantumChannelTrunc} nevertheless allows us to formulate a notion of asymptotically good approximate Fock state codes by quantifying code performance directly through approximate correction guarantees and bypassing the concept of distance. We formalize this approach in the following definition.

\begin{tcolorbox}[width=1\linewidth, left=1mm, right=1mm, boxsep=0pt, boxrule=0.5pt, sharp corners=all, colback=white!95!black]
\begin{definition}[Asymptotically good approximate Fock state codes]\label{def:  AAG codes} A sequence of AQECCs $\calQ=(Q_N)_N$, $Q_N\subseteq \calH_{q_N,N}$, is called  asymptotically good if there exists $\delta > \gamma >0$ such that $Q_N, N\ge 1$ is an {$\varepsilon_N$}-AQECC for the channel $\calN_{\leq \floor{\delta N}}$, {where $\varepsilon_N\downarrow 0$,} and the sequence $\calQ$ has a positive asymptotic rate, $R(\calQ)>0$. 
\end{definition}
\end{tcolorbox}

We are now ready to give the main statement in this part, showing that the performance of a code correcting $t$ errors on an AD channel is governed by $p_{N,t}$. We remark that while we formulate the statement for  $t$-AQECCs, it also includes exact $t$-error correcting codes, which can be viewed as approximate ones with $\varepsilon=0$.  
\begin{theorem}
\label{th:TruncToAD}
    Assume that $Q\subseteq \calH_{q,N}$ is an $\varepsilon$-AQECC  for the \emph{truncated} channel $\calN_{\leq t}$, then $Q$ is an $\varepsilon'$-AQECC for the AD channel $\calN$, where 
    \[\varepsilon'=\sqrt{1-(1-\varepsilon^2) p_{N,t}}\,.\]
    Conversely, if $Q\subseteq \calH_{q,N}$ is an $\varepsilon$-AQECC for the AD channel $\calN$, then it is also $\varepsilon'$-AQECC for the truncated  channel $\calN_{\leq t}$ with 
    \[\varepsilon'=\frac{\varepsilon}{\sqrt{p_{N,t}}}.\]
\end{theorem}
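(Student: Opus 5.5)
The plan is to compare the two channels at the level of Kraus decompositions and use the linearity of the (squared) overlap with a pure reference state. Restricted to $\calH_{q,N}$, split the AD channel as
\[
\calN = p_{N,t}\,\calN_{\le t} + (1-p_{N,t})\,\calN_{>t},\qquad \calN_{>t}(\rho)=\frac{1}{1-p_{N,t}}\sum_{\substack{\ur\in\calS_{q,r}\\ r>t}}A_{\ur}\rho A_{\ur}^\dag .
\]
By Proposition~\ref{prop:ConstExFidel} and the completeness relation for $\{A_{\ur}\}$, the map $\calN_{>t}$ is also a channel on $L(\calH_{q,N})$; this is the same argument as Proposition~\ref{obs:quantumChannelTrunc}.

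Both channels output into $\bigoplus_{M\le N}\calH_{q,M}$. Any decoder $\calD$ defined on the output of one of them can be extended to a CPTP map on this whole space, for instance by appending a fixed state on the orthogonal complement. With this extension the same recovery map can be used for both channels.

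Next I use that one argument of the fidelity is pure. For any purification $\ket{\psi_\rho}$ of $\rho$ supported on $Q$, and any channel $\calM$, set
\[
f_\calM(\psi_\rho)=\bra{\psi_\rho}(\calM\otimes I)(\ket{\psi_\rho}\bra{\psi_\rho})\ket{\psi_\rho}.
\]
Then $\calF(\calM\otimes I(\psi_\rho),\psi_\rho)=\sqrt{f_\calM(\psi_\rho)}$, and $f$ is affine in $\calM$. Consequently
\[
f_{\calD\circ\calN}=p_{N,t}\,f_{\calD\circ\calN_{\le t}}+(1-p_{N,t})\,f_{\calD\circ\calN_{>t}}.
\]

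For the forward direction, take the decoder $\calD$ that achieves $\varepsilon$ for $\calN_{\le t}$. Drop the nonnegative $\calN_{>t}$ term. The hypothesis gives $\sqrt{f_{\calD\circ\calN_{\le t}}}\ge 1-\varepsilon^2$ uniformly in $\rho$. Hence
\[
\calF_e(\calD\circ\calN,I)\ge \sqrt{p_{N,t}}\,(1-\varepsilon^2)\ge p_{N,t}(1-\varepsilon^2),
\]
using $\sqrt{p}\ge p$. This is exactly $d\le\sqrt{1-(1-\varepsilon^2)p_{N,t}}$. The infimum over $\rho$ causes no trouble because the bound holds pointwise for each $\rho$.

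For the converse, take the decoder $\calD$ that works for $\calN$ and use it for $\calN_{\le t}$. Bound $f_{\calD\circ\calN_{>t}}\le 1$. This gives
\[
p_{N,t}\bigl(1-f_{\calD\circ\calN_{\le t}}\bigr)\le 1-f_{\calD\circ\calN}=(1-F)(1+F),\qquad F=\sqrt{f_{\calD\circ\calN}}\ge 1-\varepsilon^2 .
\]
It remains to convert this into $1-\sqrt{f_{\calD\circ\calN_{\le t}}}\le \varepsilon^2/p_{N,t}$.

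I expect this conversion to be the main obstacle, because square roots are not affine. The crude chain $1-\sqrt{X}\le 1-X\le 2\varepsilon^2/p_{N,t}$ loses a factor $2$, i.e.\ it gives $\varepsilon'=\sqrt{2}\,\varepsilon/\sqrt{p_{N,t}}$. To recover the stated constant, I would first rewrite everything in terms of $1-\calF$, using $1-\sqrt{a}=(1-a)/(1+\sqrt a)$ for both fidelities. I would then compare the denominators $1+F$ and $1+\sqrt{f_{\calD\circ\calN_{\le t}}}$, and I suspect the inequality needs care exactly there. If that comparison fails, I would instead invoke Theorem~\ref{th:BenyOreshkov}. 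The Kraus operators of $\calN_{\le t}$ are a rescaled subset of those of $\calN$, so the corresponding complementary channels $\Lambda+\calB$ are related by compressing to the block $\{\ur:|\ur|\le t\}$ and dividing by $p_{N,t}$. The Bures distance to a suitably compressed and renormalized $\lambda$ should then scale by $1/\sqrt{p_{N,t}}$, which gives the claimed $\varepsilon/\sqrt{p_{N,t}}$ directly.
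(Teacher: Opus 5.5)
\textbf{Forward direction.} This part is fine and is the paper's argument. Extending $\calD_{\le t}$ by a fixed output state outside the sectors with at least $N-t$ photons is exactly the paper's decoder $\tilde{\calD}\circ\calM$. Dropping the $\calN_{>t}$ contribution is the paper's step too, and your root-fidelity bookkeeping even gives $\sqrt{p_{N,t}}(1-\varepsilon^2)$.

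\textbf{Converse: the gap.} The converse is not proved. You correctly reach the paper's key inequality $p_{N,t}(1-f_{\le t})\le 1-f$. But with $\calF=\sqrt f$ your hypothesis only gives $f\ge(1-\varepsilon^2)^2$, and neither proposed repair yields $\varepsilon/\sqrt{p_{N,t}}$.
\begin{itemize}
\item \emph{Comparing denominators.} This needs $\sqrt{f_{\le t}}\ge\sqrt f$. By $f=p_{N,t}f_{\le t}+(1-p_{N,t})f_{>t}$, that is equivalent to $f_{\le t}\ge f_{>t}$, i.e., to the recovery doing at least as well on the $\le t$ branch as on the $>t$ branch. Nothing in the hypothesis gives this.
\item \emph{B\'eny--Oreshkov compression.} The complementary channel of $\calN_{\le t}$ is indeed $p_{N,t}^{-1}$ times the compression of that of $\calN$ to the block $\{\ur:|\ur|\le t\}$, and this block has weight exactly $p_{N,t}$ for every code state. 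But pinching with respect to the block gives, for root fidelity, $\calF\le\sqrt{p_{N,t}s}\,x+\sqrt{(1-p_{N,t})(1-s)}\le\sqrt{p_{N,t}x^2+1-p_{N,t}}$. Here $s$ is the weight of $\lambda$ on the block and $x$ is the fidelity of the normalized compressions. This is the same inequality again.
\end{itemize}
Both routes therefore stop at a truncated-channel root fidelity of at least $\sqrt{1-(2\varepsilon^2-\varepsilon^4)/p_{N,t}}$. For $\varepsilon>0$ and $p_{N,t}<1$ this is strictly weaker than $1-\varepsilon^2/p_{N,t}$. What your inequality controls with the factor $1/\sqrt{p_{N,t}}$ is $\sqrt{1-\calF^2}$, not $\sqrt{1-\calF}$.

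\textbf{The missing point.} What you are missing is the fidelity convention under which the paper's proof actually runs. There, the fidelity with the pure reference is taken to be the overlap $\bra{\psi}\sigma\ket{\psi}$ itself. The proof appeals to ``fidelity is linear whenever one of the involved states is pure,'' and the Fuchs--van de Graaf form $1-\sqrt{\calF_e}\le\frac12\norm{\cdot}_\diamond\le\sqrt{1-\calF_e}$ quoted in the appendix is also the one for this squared fidelity. This is so even though the displayed definition $\norm{\sqrt\rho\sqrt\tau}_1$ is the root fidelity you followed. With $\calF_e=\inf f$, your own decomposition settles each direction in one line:
\begin{itemize}
\item forward: $f\ge p_{N,t}f_{\le t}\ge p_{N,t}(1-\varepsilon^2)$;
\item converse: $1-\varepsilon^2\le f\le p_{N,t}f_{\le t}+1-p_{N,t}$, i.e.\ $f_{\le t}\ge 1-\varepsilon^2/p_{N,t}$.
\end{itemize}
So either adopt that convention consistently in both directions, or keep $\calF=\sqrt f$ and state the converse with the weaker $\varepsilon'$ you can actually prove.
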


The proof is long and technical, so it is presented in the appendix. We note that this theorem underscores the operational significance of approximate and exact asymptotically good Fock state codes, as made explicit in the following corollary.

\begin{tcolorbox}[width=1\linewidth, left=1mm, right=1mm, boxsep=0pt, boxrule=0.5pt, sharp corners=all, colback=white!95!black]
\begin{cor}\label{cor:operative}
    A sequence of codes $\calQ=(Q_N)_N$ is \emph{asymptotically good} (in the approximate or exact sense) if and only if it is an AQECC code sequence for the AD channel with a sufficiently small (constant) loss parameter $\gamma>0$. 
\end{cor}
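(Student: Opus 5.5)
The plan is to read the corollary as two implications and obtain both from Theorem~\ref{th:TruncToAD} together with the weak law of large numbers for $p_{N,t}=\P\pp{\sr{Binomial}(N,\gamma)\le t}$. The exact case needs no separate treatment: an exact code correcting $\calE_{\le t}$ is a $0$-AQECC for $\calN_{\le t}$, because the Kraus operators of $\calN_{\le t}$ are rescalings of elements of $\calE_{\le t}$. Hence linear distance $d(Q_N)\ge \delta N$ with $\delta>\gamma$ puts the sequence inside Definition~\ref{def:  AAG codes} with $\varepsilon_N=0$. The phrase ``sufficiently small $\gamma$'' is read as follows. For a sequence with $\liminf d(Q_N)/N=\delta_0>0$, or with an approximate parameter $\delta_0$, every $\gamma<\delta_0$ qualifies. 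Conversely, for an AD-good sequence we only need one $\delta\in(\gamma,1)$. The rate condition $R(\calQ)>0$ is common to both notions and passes through unchanged.

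\emph{Forward direction.} Assume $Q_N$ is an $\varepsilon_N$-AQECC for $\calN_{\le\floor{\delta N}}$ with $\delta>\gamma$ and $\varepsilon_N\to0$. The first part of Theorem~\ref{th:TruncToAD} makes $Q_N$ an $\varepsilon_N'$-AQECC for $\calN$, where $\varepsilon'_N=\sqrt{1-(1-\varepsilon_N^2)p_{N,\floor{\delta N}}}$. Since $\floor{\delta N}\ge (\gamma+\eta)N$ for some $\eta>0$ and all large $N$, Chebyshev's inequality (or a Chernoff bound) gives $1-p_{N,\floor{\delta N}}\le \gamma(1-\gamma)/(\eta^2 N)\to0$. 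Therefore $\varepsilon'_N\to0$, and $(Q_N)$ is an approximate error-correcting sequence for the AD channel.

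\emph{Converse.} Assume $Q_N$ is an $\varepsilon_N$-AQECC for $\calN$ with $\varepsilon_N\to0$. Fix any $\delta\in(\gamma,1)$; for instance $\delta=(1+\gamma)/2$ works, and this is where $\gamma<1$, i.e.\ ``sufficiently small'', is used. By the second part of Theorem~\ref{th:TruncToAD}, $Q_N$ is an $\varepsilon_N/\sqrt{p_{N,\floor{\delta N}}}$-AQECC for $\calN_{\le\floor{\delta N}}$. The same concentration bound gives $p_{N,\floor{\delta N}}\to1$, so this parameter tends to $0$, and Definition~\ref{def:  AAG codes} holds.

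\emph{Main obstacle.} The analytic content is only the concentration estimate, which is routine. The real difficulty is making the logical quantifiers precise: the definition requires $\varepsilon_N\downarrow0$ monotonically, and ``sufficiently small $\gamma$'' must be matched with the threshold $\delta$. I would handle monotonicity by replacing $\varepsilon_N$ with $\sup_{M\ge N}\varepsilon_M$. This is legitimate because an $\varepsilon$-AQECC is trivially an $\varepsilon''$-AQECC for every $\varepsilon''\ge\varepsilon$. I would also state explicitly that in the exact case the relevant range is $\gamma<\liminf d(Q_N)/N$.
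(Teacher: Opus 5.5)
Your proposal is correct and is essentially the paper's argument. Both directions come from the two halves of Theorem~\ref{th:TruncToAD} together with $p_{N,\floor{\delta N}}\to 1$ (your Chebyshev bound is the weak law made explicit). Exact codes are absorbed as $0$-AQECCs, and your running-supremum fix for $\varepsilon_N\downarrow 0$ is harmless tidying that the paper skips.

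One caveat concerns your reading of ``sufficiently small''. The claim that every $\gamma<\delta_0$ qualifies is justified only in the exact case, where correctability of $\calE_{\le t}$ on $\calH_{q,N}$ does not depend on $\gamma$ (each $A_{\ur}$ there is a nonzero scalar times a fixed operator). The approximate notion, by contrast, is tied to the $\gamma$ that defines $\calN_{\le t}$, and changing $\gamma$ reweights the $r$-loss components by factors exponential in $N$. So in the approximate case you must keep $\gamma$ fixed, as your actual argument does.
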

\end{tcolorbox}
\begin{proof}
    Assume that $\calQ$ is an AQECC sequence for the AD channel $\calN$ with $\gamma_0>0$ such that $Q_N, N\ge 1$ is an $\varepsilon_N$-AQECC for $\calN$. Suppose further that the limiting rate of $\cC$, \eqref{eq:limits}, is positive. Fix a number $\delta\in(0,1)$ such that $\gamma_0 < \delta$.
The weak law of large numbers implies that
  $$
  p_{N,\floor{\delta N}}=1-o(1),
  $$
  and by Theorem~\ref{th:TruncToAD}, $Q_N$ is $\varepsilon'$-AQECC for $\calN_{\leq \floor{\delta N} }$ and
$\varepsilon'_N=\varepsilon_N/\sqrt{p_{N,\floor{\delta N}}}=o(1)$, as desired. 

Conversely,  let $\calQ=(Q_N)_N$ be an asymptotically good approximate sequence with respect to $\calN_{\floor{\delta N}}$, with $\delta>\gamma_0>0$ and let $\varepsilon_N\downarrow 0$. Similarly to the previous case, we have $p_{N,\floor{\delta N}}=1-o(1)$, and therefore $Q_N$ is AQECC for $\calN$ with 
    \[
    \varepsilon'_N=\sqrt{1-(1-\varepsilon_N^2)p_{N,t}}=\sqrt{1-(1-o(1))(1-o(1))}=o(1). \qedhere
    \]
\end{proof}


\section{Asymptotically good Fock state codes from \texorpdfstring{$\ell_1$}{} codes}\label{sec:FockCodesConst}

We now turn to the construction of bosonic Fock state quantum codes derived from the classical $\ell_1$ codes we introduced in Section~\ref{sec:ClasicalELl_1}. Our approach follows the general framework of \cite{aydin2025quantum}, which is based on partitioning a classical code into disjoint subsets and associating a quantum basis codeword to each partition element via an appropriately chosen linear combination of the corresponding Fock states. We briefly review this construction and observe that the resulting quantum codes always satisfy the assumptions of Proposition~\ref{prop:KLapprox1}; in particular, they admit reliable identification of the relevant error set.

Using this framework, we obtain three distinct families of asymptotically good Fock state codes derived from the three different $\ell_1$ code constructions presented in Section~\ref{sec:ClasicalELl_1}. The Gilbert-Varshamov (GV) type $\ell_1$ codes lead to Fock state codes achieving exact quantum error correction with the most favorable asymptotic rate–distance tradeoff among the constructions considered. However, this approach relies on computationally involved procedures. One of them is the need to construct a good $\ell_1$ code sequence
given in Proposition~\ref{prop:GVbound}, and the other requires finding a partition of the code into blocks (groups of codewords) such that the resulting Fock state codes satisfy the error correction condition. The complexity of these tasks is at least exponential in the number of modes, rendering the construction infeasible for physically relevant system sizes.

To overcome this limitation, we turn to random $\ell_1$
 codes, which yield families of asymptotically good Fock state codes in the approximate error-correction regime. This construction avoids both computational barriers mentioned above and is therefore significantly simpler from an implementation standpoint. From an operational perspective, this relaxation incurs no essential loss: as established in Corollary~\ref{cor:operative}, both the exact and approximate code families for the truncated amplitude-damping channel imply approximate quantum error correction for the full (and physically more relevant) amplitude-damping channel. Consequently, the choice between these constructions reflects a tradeoff between implementation complexity and asymptotic distance–rate performance, rather than a fundamental degradation in error-correction capability.

 \subsection{The construction of \texorpdfstring{\cite{aydin2025quantum}}{}}
 We start with the definition of 
 classical  $\ell_1$ codes on the simplex, which form the combinatorial backbone of our Fock state code constructions. Define a metric on $\calS_{q,N}$, called the $\ell_1$ distance:
    \[
    d(\ux,\uy)=\norm{\ux-\uy}_1=\frac{1}{2} \sum_{i=0}^{q-1} \abs{x_i-y_i}.
    \]
    A classical code $C\in \calS_{q,N}$ is said to have $\ell_1$ distance at least $t$, $d(C)\ge t$, if every pair of distinct vectors $\ux,\uy\in C$ satisfies $d(\ux,\uy)\geq t$.
    The study of $\ell_1$ codes has a long history in classical coding theory, motivated by applications ranging from constrained storage systems and flash memories \cite{barg2010codes} to DNA storage \cite{yehezkeally2019reconstruction,yehezkeally2021uncertainty} and other communication problems \cite{kovacevic2018multisets,goyal2024gilbert}. Constructions of $\ell_1$ codes often are based on tools from additive number theory, such as the Bose–Chowla theorem and Sidon sets. Here we focus on random coding and geometric GV-type arguments to show the existence of $\ell_1$ codes with favorable rate–distance tradeoffs beyond what is achievable by algebraic constructions. 

A general approach to the construction of Fock state codes from classical $\ell_1$ codes was suggested in \cite{aydin2025quantum}.

 \begin{tcolorbox}[width=1\linewidth, left=1mm, right=1mm, boxsep=0pt, boxrule=0.5pt, sharp corners=all, colback=white!95!black] 
\begin{construction}\label{construction: AAB} Let $C_N\subseteq \calS_{q,N}$ be a code with $\ell_1$ distance $t+1$. Given a partition of $C_N$ into disjoint subsets $I_0,I_1,\dots,I_K$, consider the Fock state code $Q_N$ with the basis 
 $\{\ket{c_0},\dots,\ket{c_{K-1}}\}\subset \calH_{q,N}$ where 
 \begin{align}
     \ket{c_i}= \sum_{\un\in I_i}\alpha_{\un}^{(i)
 } \ket{\un},\label{eq:CodeBasis}
 \end{align}
and $\set{\alpha_{\un}^{(i)}}_{\un \in I_i }$ is a set of coefficients such that $\sum_{\un \in I_i}|\alpha_{\un}^{(i)}|^2=1$.
\end{construction}
\end{tcolorbox}
\noindent We start by observing that $Q_{N}$ always satisfies the KL orthogonality condition. The only remaining constraint to be addressed for constructing exact or approximate quantum error-correcting codes is the non-deformation condition. Although this fact is established as part of the proof of Theorem~7 in \cite{aydin2025quantum}, we briefly outline the main idea and use it to show that such a code always admits error identification, in an even more general setting.
 \begin{prop} \label{obs:KLorth}For $i\ne j$ and $\ur\ne \ur'$ such that $\norm{\ur}_1,\norm{\ur'}_1\leq t$, 
     the basis codewords defined in \eqref{eq:CodeBasis} satisfy 
     \begin{align}
         \bra{c_i}A_{\ur}^\dag A_{\ur'} \ket{c_j}=0. \label{eq:condOrth}
     \end{align}
 \end{prop}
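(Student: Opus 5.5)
The plan is to expand both codewords in the Fock basis and show that every individual cross term vanishes, using the $\ell_1$ distance of $C_N$. I read the statement as requiring \eqref{eq:condOrth} whenever $i\neq j$ \emph{or} $\ur\neq\ur'$, which is the orthogonality hypothesis \eqref{eq: orth} of Proposition~\ref{prop:KLapprox1}. The only facts needed are that each $A_{\ur}$ maps a number state to a multiple of a single number state, and that the blocks $I_i$ are disjoint subsets of a code with $\ell_1$ distance $t+1$.

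\textbf{Step 1: expand into single-state terms.} Write $\ket{c_i}=\sum_{\un\in I_i}\alpha^{(i)}_{\un}\ket{\un}$ and $\ket{c_j}=\sum_{\underline m\in I_j}\alpha^{(j)}_{\underline m}\ket{\underline m}$. From the single-mode action of the Kraus operators,
\[
A_{\ur}\ket{\un}=\beta_{\un,\ur}\ket{\un-\ur},\qquad A_{\ur'}\ket{\underline m}=\beta_{\underline m,\ur'}\ket{\underline m-\ur'},
\]
for scalars $\beta$ (zero if some coordinate would go negative). Hence
\[
\bra{c_i}A_{\ur}^\dag A_{\ur'}\ket{c_j}=\sum_{\un\in I_i}\sum_{\underline m\in I_j}\overline{\alpha^{(i)}_{\un}}\,\alpha^{(j)}_{\underline m}\,\overline{\beta_{\un,\ur}}\,\beta_{\underline m,\ur'}\,\langle \un-\ur\,|\,\underline m-\ur'\rangle .
\]
It therefore suffices to show that $\un-\ur\neq\underline m-\ur'$ for every $\un\in I_i$ and $\underline m\in I_j$.

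\textbf{Step 2: suppose a term survives.} Assume $\un-\ur=\underline m-\ur'$, so $\un-\underline m=\ur-\ur'$. Split into two cases.
\begin{itemize}
\item If $\un=\underline m$, then $\ur=\ur'$. Also $\un\in I_i\cap I_j$, which forces $i=j$ because the blocks are disjoint. Both conclusions contradict the hypothesis.
\item If $\un\neq\underline m$, these are distinct codewords of $C_N$, so $d(\un,\underline m)\ge t+1$.
\end{itemize}

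\textbf{Step 3: bound the distance in the second case.} Both $\un$ and $\underline m$ lie in $\calS_{q,N}$, so $\un-\underline m$ has zero coordinate sum. Hence $\ur$ and $\ur'$ have the same total weight $s$, and $s\le t$. Then
\[
d(\un,\underline m)=\tfrac12\sum_{k}|r_k-r'_k|\le\tfrac12\Big(\sum_k r_k+\sum_k r'_k\Big)=s\le t,
\]
which contradicts $d(\un,\underline m)\ge t+1$. So no cross term survives, and \eqref{eq:condOrth} follows.

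The only subtle point is Step 3. The bound relies on the halved $\ell_1$ normalization together with the observation that the equal-excitation constraint forces $\norm{\ur}_1=\norm{\ur'}_1$. Without that observation, one would get only $d(\un,\underline m)\le t$ via a cruder triangle-inequality bound, which would not suffice for a code of distance $t+1$.
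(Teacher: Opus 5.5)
Your proof is correct and is essentially the paper's argument: expand both codewords in the Fock basis, note that a surviving term requires $\un-\ur=\underline{m}-\ur'$, and rule this out by disjointness of the blocks when $\un=\underline{m}$ and by the distance $t+1$ of $C_N$ when $\un\neq\underline{m}$. The paper first restricts to equal weights using orthogonality of the excitation sectors $\calH_{q,N-r}$, and your Step 3 states the resulting distance bound more cleanly than the paper does.

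One small correction to your closing remark: the equal-weight observation is not needed for the contradiction. Indeed, $\tfrac12\sum_k|r_k-r'_k|\le\tfrac12(\norm{\ur}_1+\norm{\ur'}_1)\le t<t+1$ holds regardless of whether the weights agree.
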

 \begin{proof}
     First, note that $A_{\ur}$ maps $\ket{c_i}$ to $\calH_{q,N-r}$, and since $\calH_{q,N-r} $ and $ \calH_{q,N-r'} $ are orthogonal to one another, we may assume that $\ur,\ur'\in \calS_{q,r}$ for some $r<N$. A simple calculation (see \cite[pp. 8]{aydin2025quantum}) shows that 
     \begin{align*}
         \bra{c_i}A_{\ur}^\dag A_{\ur'} \ket{c_j}=\sum_{\substack{\un\in I_i\\ \un'\in I_j}}\gamma^{r}(1-\gamma)^{N-r}\binom{N}{r}\sqrt{\frac{\binom{r}{\ur}\binom{r'}{\ur}\binom{N-r}{\un -\ur}\binom{N-r}{\un' -\ur'}}{\binom{N}{\un }\binom{N}{\un'}}}(\alpha_{\un}^{(i)})^* \alpha_{\un}^{(j)}\cdot \braket{\un-\ur | \un'-\ur'}.
     \end{align*}
Suppose that $\braket{\un-\ur | \un'-\ur'}\ne 0$ for some $\un,\un',\ur,\ur'$, which then should satisfy $\un-\ur=\un'-\ur'$.
 First, suppose that $i\ne j$, then $I_i\cap I_j=\emptyset$ by assumption. This implies that $\un\neq \un'$ and therefore, $d(\un,\un')\geq t$ by the assumption on the distance of $C_N$. On the other hand, $\norm{\ur-\ur'}_1\leq 2r$, implying that
 $d(\un-\ur,\un'-\ur')\ge 1$, so the equality cannot hold. Now let $i=j$, in which case $\un$ and $\un'$ can be distinct or coincide. If they are distinct, then the equality cannot be true by the same argument as above. If $\un= \un'$, then $\un-\ur\ne \un'-\ur'$ by our assumption of $\ur\ne \ur'$. We conclude that $\braket{\un-\ur | \un'-\ur'}=0$ for each term in the sum, and so \eqref{eq:condOrth} holds.
 \end{proof} 

Below, we argue that the codes $Q_N$ can identify up to $t$ errors when used on the AD channel. To formalize this claim, 
we recall that with probability $p_{N,t}$, the action of the channel is described by one of the operators $A_r$ and that with 
probability $1-p_{N,t}$ the input state $\ket{\psi_{\sr{in}}}$ is subjected to the loss of $>t$ photons. An error resulting in a loss of at most $t$ photons can be identified if there exists a projective measurement $\ppp{\Pi_{\ur}:\ur\in \calS_{q,r}, r\leq t}\cup \ppp{\Pi_{>t}}$ such that once $\ket{\psi_{\sr{in}}}$ is submitted to the channel, the outcome $\ur$ is obtained if the channel acts as $A_{\ur}$, while the outcome $>t$ is obtained with probability $1-p_{N,t}$. Furthermore, this measurement remains valid even if $I_0,\dots, I_K$ are not disjoint, meaning that the code $Q_N$ is any subspace spanned by Fock states corresponding to classical codewords from $C_N$. That is formulated in Corollary~\ref{cor:identification}, whose proof appears in Appendix~\ref{sec:additProof}.
 
 \begin{cor}\label{cor:identification}
 The code $Q_N=\langle \ket{c_i}, i\in[K]\rangle$ can identify all errors caused by at most $t$ photon losses.
 \end{cor}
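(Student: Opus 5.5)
We build the projective measurement explicitly from the structure of the classical code $C_N$. For each $\ur\in\calS_{q,r}$ with $r\le t$, let
\[
V_{\ur}\triangleq \Span\ppp{\ket{\un-\ur} ~:~ \un\in C_N,\ \un\ge \ur},
\]
where $\un\ge\ur$ is meant coordinatewise. Let $\Pi_{\ur}$ be the orthogonal projection onto $V_{\ur}$, and set $\Pi_{>t}\triangleq I-\sum_{\ur}\Pi_{\ur}$. Here $I$ is the identity on $\bigoplus_{m}\calH_{q,m}$, the space spanned by all Fock states of total excitation at most $N$, which contains all outputs of the channel.

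\textbf{Step 1: the $V_{\ur}$ are mutually orthogonal.} This makes $\ppp{\Pi_{\ur}}\cup\ppp{\Pi_{>t}}$ a valid projective measurement. The argument follows the combinatorial core of Proposition~\ref{obs:KLorth}. If $\norm{\ur}_1\ne\norm{\ur'}_1$, then $V_{\ur}$ and $V_{\ur'}$ lie in different excitation sectors and are orthogonal. If $\norm{\ur}_1=\norm{\ur'}_1=r$ and $\un-\ur=\un'-\ur'$ for some $\un,\un'\in C_N$, there are two cases. If $\un=\un'$, then $\ur=\ur'$. If $\un\ne\un'$, then
\[
d(\un,\un')=d(\ur,\ur')\le r\le t,
\]
which contradicts the assumption that $C_N$ has $\ell_1$ distance $t+1$. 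Since $V_{\ur}$ and $V_{\ur'}$ are spanned by distinct Fock basis vectors, they are orthogonal whenever $\ur\ne\ur'$. This step uses only that every Fock state appearing in a codeword comes from $C_N$; it never uses the partition. Hence it also covers the non-disjoint case stated before the corollary.

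\textbf{Step 2: correct identification within $t$ losses.} Take any $\ket{\psi_{\sr{in}}}\in Q_N$ and expand it in the Fock states $\ket{\un}$ with $\un\in C_N$. By the explicit action of $A_{\ur}$, the vector $A_{\ur}\ket{\psi_{\sr{in}}}$ is a combination of the states $\ket{\un-\ur}$ with $\un\ge\ur$, so it lies in $V_{\ur}$. Consequently $\Pi_{\ur}A_{\ur}\ket{\psi_{\sr{in}}}=A_{\ur}\ket{\psi_{\sr{in}}}$, and $\Pi_{\ur'}A_{\ur}\ket{\psi_{\sr{in}}}=0$ for $\ur'\ne\ur$. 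If the channel acts as $A_{\ur}$, the measurement therefore returns outcome $\ur$ with certainty and leaves the post-error state undisturbed.

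\textbf{Step 3: the outcome $>t$.} Here we must show that the outcome $>t$ occurs with probability exactly $1-p_{N,t}$; I expect this to be the main point needing care. Write $\rho=\ket{\psi_{\sr{in}}}\bra{\psi_{\sr{in}}}$. First, every output with more than $t$ losses lies in a sector $\calH_{q,N-r}$ with $r>t$, which is orthogonal to all $V_{\ur}$; hence it is annihilated by every $\Pi_{\ur}$ and fixed by $\Pi_{>t}$. Second, Step 2 gives $\Pi_{>t}A_{\ur}\ket{\psi_{\sr{in}}}=0$ for $\norm{\ur}_1\le t$. Together these give
\[
\tr\bigl(\Pi_{>t}\calN(\rho)\bigr)=\sum_{\norm{\ur}_1>t}\bra{\psi_{\sr{in}}}A_{\ur}^\dag A_{\ur}\ket{\psi_{\sr{in}}}.
\]
By Proposition~\ref{prop:ConstExFidel} and trace preservation of $\calN$, the right-hand side equals $1-p_{N,t}$. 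Likewise, the total probability of the outcomes $\ur$ equals $p_{N,t}$. This completes the identification claim.
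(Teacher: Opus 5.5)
Your proof is correct and takes essentially the same route as the paper: projections onto the images of the $\le t$-loss error operators, mutual orthogonality of these images from the $\ell_1$ distance of $C_N$, orthogonality of excitation sectors for more than $t$ losses, and Proposition~\ref{prop:ConstExFidel} for the probability $1-p_{N,t}$ (for $\gamma<1$, your $V_{\ur}$ is exactly the span of $\{A_{\ur}\ket{\un} : \un\in C_N\}$ used in the paper). Your Step~1 works directly on Fock basis vectors via $d(\un,\un')=d(\ur,\ur')\le r\le t$, which is a slightly cleaner version of the paper's orthogonality argument and makes explicit that $\Pi_{>t}$ is a genuine projection.
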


 \begin{remark} Quantum error identification, considered here, stops short of correcting the errors, but adds 
functionality compared to other noise-resistant primitives, such as, for instance, error detection, which
plays an essential role in fault tolerance. For instance, 
 in post-selected fault-tolerant architectures and in magic-state distillation protocols, error-detecting codes are used to reject faulty computational branches or ancilla states rather than to actively correct them, enabling arbitrarily low logical error rates through filtering and purification instead of recovery \cite{knill2005quantum,bravyi2005universal}. Error identification goes further by providing classical information about the specific error instance. This additional information enables adaptive feedback and real-time correction strategies, such as photon-loss tracking in bosonic codes \cite{ofek2016extending}, and underlies decoding procedures in topological codes where syndrome patterns are mapped to specific error configurations before correction \cite{fowler2012surface}. 
 \end{remark}

 \subsection{Asymptotically good Fock state codes}\label{sec:random}
 In this section, we construct two families of asymptotically good approximate Fock state codes using the random $\ell_1$
 codes from Section 3 within the framework of Construction~\ref{construction: AAB}. While orthogonality follows from the $\ell_1$ distance guarantees, randomness is used to ensure the non-deformation KL conditions, which are shown to hold with high probability under a suitable averaging over partitions. We show that the obtained Fock state codes inherit the local boundedness properties from the corresponding random classical codes, which guarantees the approximately bounded local excitation, cf. Definition~\ref{def:ApproxLocBound}. 

We will construct Fock state codes from classical $\ell_1$ codes as follows. Let $C \subset \calS_{\alpha N, N}$ be a code of size $L=TK,$ where $K$ is the target dimension of the quantum code\footnote{if $K\hspace*{-2mm}\not| L$, we can discard $L\, \text{mod}K$ points of the code to satisfy the divisibility condition.}. Let $I_0,\dots,I_{K-1}$ be an arbitrary partition of $C=\{\un^{(0)},\dots,\un^{(L-1)}\}$ into equal-size parts. Consider the Fock state code $Q_N$ defined in \eqref{eq:CodeBasis}, where the coefficients of the basis states are given by 
    \begin{align}
        \alpha_{\un}^{(j)}=\Ind_{I_j}(\un) \frac{1}{\sqrt{T}}, \quad j\in [K],\label{eq:randCodeCoeff}
    \end{align}
where $\Ind_{I_j}$ is the indicator of the set $I_j$. 
 
 \begin{prop}\label{Prop:FockByRand}
 Fix $\alpha >0$, $\gamma\in (0,1]$, and $0<\delta<1$. Let $C_N=\set{\su{n}^{(i)}}_{i=0}^{L}$  be an $\ell_1$ code, whose codewords are drawn independently from $\calS_{\alpha N,N}$ according to some distribution $\s{D}_N$ on the simplex. Suppose that 
 $\frac{\ln N}{\ln L}\to 0$ as $N\to\infty$. Suppose further that
 the sequence of Fock state codes
 $(Q_N)_N$ constructed as in \eqref{eq:randCodeCoeff} is used on the truncated AD channel $\calN_{\leq \floor{\delta N}}$ defined in \eqref{eq:Trunc<t} {by the Kraus operators $\Tilde{A}_{\ur}=A_{\ur}/p_{N,\floor{\delta N}}$, $\ur\in \calS_{\alpha N,r}$, $r\leq \floor{\delta N}$.} Let $M$ be the number of Kraus operators of this channel\,\footnote{We again suppress the dependence of the parameters $K, L,$ and $M$ on $N$.}.

 There is a choice of {non-negative constants} $\lambda_{\ur}$, $\ur\in \calS_{\alpha N,r}$, $r\leq \floor{\delta N}$ 
 that sum to 1 (they may depend on $N$) and an  $o(1)$ function, such that with probability approaching 1 as $N$ increases, the sequence $(C_N)_N$ satisfies 
    \begin{align}
        \max\ppp{\abs{\bra{c_i} {\Tilde{A}_{\ur}^\dag \Tilde{A}_{\ur}} \ket{c_i}-\lambda_{\ur}} ~:~ i\in [K], \ur\in \calS_{\alpha N,r}, r\leq \floor{\delta N}  }=o\Big(\frac{1}{K  M }\Big)\label{eq:non-deformation}
    \end{align}
 as long as 
    \begin{align}
         K^3 M^2 \leq L^{1-\varepsilon} ,\label{eq:KcondRand}
    \end{align}
    for some $\varepsilon>0$. 
 \end{prop}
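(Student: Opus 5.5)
Since $\alpha_{\un}^{(i)}=\Ind_{I_i}(\un)/\sqrt{T}$ and $A_{\ur}$ is diagonal-preserving on Fock states, the quantity to control is an empirical mean over the block $I_i$. By the formula recalled in the proof of Proposition~\ref{prop:ConstExFidel},
\[
\bra{c_i}\Tilde A_{\ur}^\dag \Tilde A_{\ur}\ket{c_i}=\frac{1}{T}\sum_{\un\in I_i} f_{\ur}(\un),\qquad f_{\ur}(\un)\triangleq\frac{(1-\gamma)^{N-r}\gamma^r\binom{N}{r}\binom{r}{\ur}}{p_{N,\floor{\delta N}}}\cdot\frac{\binom{N-r}{\un-\ur}}{\binom{N}{\un}}.
\]
Off-diagonal terms vanish by Proposition~\ref{obs:KLorth}; only the diagonal in $\un$ survives. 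The plan is to take $\lambda_{\ur}\triangleq\E_{\s{D}_N}[f_{\ur}(\s{n})]$, the mean of $f_{\ur}$ under the codeword distribution. These are nonnegative. They sum to $1$: for each fixed $\un$, Proposition~\ref{prop:ConstExFidel} (with the $p_{N,\floor{\delta N}}$ normalization) gives $\sum_{\ur}f_{\ur}(\un)=1$, so the same holds after taking expectations.

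The key observation is that $0\le f_{\ur}(\un)\le 1$ for every $\un$, again because the $f_{\ur}(\un)$ are nonnegative and sum to $1$ over $\ur$. The codewords are drawn i.i.d.\ from $\s{D}_N$ and each block $I_i$ is a fixed set of $T$ indices. So for fixed $(i,\ur)$, the average $\frac1T\sum_{\un\in I_i}f_{\ur}(\un)$ is a mean of $T$ i.i.d.\ $[0,1]$-valued variables with mean $\lambda_{\ur}$. Hoeffding's inequality then gives
\[
\P\Big[\Big|\tfrac1T\textstyle\sum_{\un\in I_i}f_{\ur}(\un)-\lambda_{\ur}\Big|>\eta\Big]\le 2e^{-2T\eta^2}.
\]
A union bound over the $KM$ pairs $(i,\ur)$ shows that all deviations are at most $\eta$ except with probability $2KMe^{-2T\eta^2}$.

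Next we choose $\eta=\frac{1}{KM}\cdot L^{-\varepsilon/4}$, which is $o(1/(KM))$. Since $T=L/K$,
\[
T\eta^2=\frac{L}{K^3M^2}L^{-\varepsilon/2}\ge L^{\varepsilon/2},
\]
using \eqref{eq:KcondRand}. The failure probability is then at most $2KMe^{-2L^{\varepsilon/2}}$. From \eqref{eq:KcondRand} we have $KM\le L$, so this bound is at most $2Le^{-2L^{\varepsilon/2}}\to 0$, because $L\to\infty$ (implied by $\ln N/\ln L\to0$). This yields \eqref{eq:non-deformation}.

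The main obstacle is making sure the $f_{\ur}$ are genuinely bounded uniformly; the uniform bound $f_{\ur}\le1$ is what makes a plain Hoeffding bound suffice. A secondary point is whether the distance requirement on $C_N$ (needed for Proposition~\ref{obs:KLorth}) is part of this statement or handled separately. I would treat it as given here, with the growth hypothesis $\ln N/\ln L\to0$ used elsewhere (for example, to ensure $M$, which is at most $\binom{\delta N+\alpha N}{\alpha N}$, is compatible with \eqref{eq:KcondRand}). If the dependence of $\lambda_{\ur}$ on the normalization $p_{N,\floor{\delta N}}$ becomes awkward, I would fold the normalization into $f_{\ur}$ as above, so that no extra error term arises.
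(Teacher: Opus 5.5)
Your proposal is correct and follows essentially the paper's proof: your $f_{\ur}$ is the paper's $Y_{\ur}$, the paper also takes $\lambda_{\ur}=\E_{\s{D}_N}[Y_{\ur}]$, bounds the summands by $1$ (via $\prod_i\binom{n_i}{r_i}\le\binom{N}{r}$, which is equivalent to your sum-to-one argument), and applies Hoeffding with the same threshold $\frac{1}{L^{\varepsilon/4}KM}$ followed by a union bound. The one minor difference is in the union bound: you get $KM\le K^3M^2\le L$ directly from \eqref{eq:KcondRand}, whereas the paper bounds the number of Kraus operators by an exponential in $N$ and then invokes $\ln N/\ln L\to0$, so your version is slightly cleaner and needs only $L\to\infty$.
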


 \begin{proof}
  Denote $t=\floor{\delta N}$. Our goal is to find {non-negative numbers} $\set{\lambda_{\ur}}_{\ur \in \calS_{N,r},r\leq t}$ {that sum to 1} and $\xi_N>0$ such that 
    \[\P\pp{\max_{\ur\in \cup_{r\leq t}\calS_{\alpha N,r}}\max_{
        j=0,\dots,K-1}\abs{\bra{c_j}{\Tilde{A}_{\ur}^\dag \Tilde{A}_{\ur}} \ket{c_j}-\lambda_{\ur}}>\xi_N}\xrightarrow[N\to\infty]{}0,\]
        where $\xi_N=o\p{\frac{1}{KM}}$.
     To this end, for every $\un\in \calS_{\alpha N,N}$ and $\ur\in \calS_{\alpha N,r}$ define:
    \[
    Y_{\ur}(\un):={\frac{1}{p_{N,t}}}(1-\gamma)^{N-r}\gamma^r \cdot \prod_{i=0}^{q-1} \binom{n_i}{r_i}\leq {\frac{1}{p_{N,t}}}(1-\gamma)^{N-r}\gamma^r \binom{N}{r}.
    \] 
    {Recall that $p_{N,t}$ is the probability that a random variable $\s{Z}\sim\s{Bin}(N,\gamma)$  is bounded by $t$. Since $r\leq t$, we find that 
    \begin{equation}
        Y_{\ur}(\un)\leq \P[Z=r |Z\leq t]< 1.\label{eq: Yr}
    \end{equation}}
for all $\un\in \calS_{\alpha N,N}$ and $\ur\in \calS_{\alpha N,r}$.

As in Prop. \ref{prop:ConstExFidel} above, we use \cite[Eq.(11)]{aydin2025quantum} to claim that for $\ur\in\calS_{N,r}$, we have 
    \begin{align*}
        \bra{c_j}{\Tilde{A}_{\ur}^\dag \Tilde{A}_{\ur}} \ket{c_j}&={\frac{1}{p_{N,t}}}(1-\gamma)^{N-r}\gamma^r\binom{N}{r}\binom{r}{\ur} \sum_{\un\in I_j}\abs{\alpha^{(j)}_{\un}}^2  \frac{\binom{N-r}{\un-\ur}}{\binom{N}{\un}}\\
        &={\frac{1}{p_{N,t}}}(1-\gamma)^{N-r}\gamma^r \frac{N!}{r!(N-r)!} \frac{r!}{\prod_{i=0}^{q-1} r_i} \sum_{\su{n}\in I_j} \frac{1}{T}\frac{(N-r)!}{N!}\prod_{i=0}^{q-1} \frac{\s{n}_i!}{(\s{n}_i-r_i)!}\\
        &={\frac{1}{p_{N,t}}}(1-\gamma)^{N-r}\gamma^r\sum_{\su{n}\in I_j} \frac{1}{T}\prod_{i=0}^{q-1} \binom{\s{n}_i}{r_i}\\
        &=\frac{1}{T}\sum_{\su{n}\in I_j}Y_{\ur}(\su{n}).
    \end{align*}
    Define 
    \[
    \lambda_{\ur} ={\frac{1}{p_{N,t}}}(1-\gamma)^{N-r}\gamma^r \;\E_{\su{X}\sim \s{D}_N}\Big[  \prod_{i=0}^{q-1} \binom{\s{X}_i}{r_i}\Big]{\geq 0}.
    \]
    By Hoeffding's inequality \eqref{eq: Hoeffding} and Eq.~\eqref{eq: Yr}, for any $\xi>0$ we have
    \begin{align}
        \P\Big[\Big|\frac{1}{T}\sum_{\su{n}\in I_j}Y_{\ur}(\su{n})-\lambda_{\ur}\Big|>\xi \Big\}\leq2\exp\p{{-2 T\xi^2}}.\label{eq:probBoundK^3}
    \end{align}
    Setting {$\xi_N=\frac{1}{L^{\varepsilon/4} K M}$ } and recalling that $L=TK$, we obtain
    \begin{align}
         \nonumber\P\pp{\abs{\bra{c_j}{\Tilde{A}_{\ur}^\dag \Tilde{A}_{\ur}} \ket{c_j}-\lambda_{\ur}}>\xi_N}&=
         \P\Big[\Big|\frac{1}{T}\sum_{\su{n}\in I_j}Y_{\ur}(\su{n})-\lambda_{\ur}\Big|>\frac{1}{{L^{\varepsilon/4}} K M}\Big]\\
         &\leq 2\exp\p{-\frac{2L  }{{L^{\varepsilon/2} }K^3M^2}}\notag\\
         &\leq \exp(-L^{{\varepsilon/2}-o(1)}),\label{eq:ProbInnerBound}
    \end{align}
    where the last equality follows from the assumption in \eqref{eq:KcondRand}.
With this, we finish the proof as follows:
    \begin{align*}
        \P\Big[\max_{\ur\in \cup_{r\leq t}\calS_{N,r}}&\max_{
        j=0,\dots,K-1}\abs{\bra{c_j}{\Tilde{A}_{\ur}^\dag \Tilde{A}_{\ur}} \ket{c_j}-\lambda_{\ur}}>\xi_N\Big]\\ 
        & =\P\Big[\bigcup_{\ur\in \cup_{r\leq t}\calS_{N,r}}\bigcup_{
        j=0,\dots,K-1}\ppp{\abs{\bra{c_j}{\Tilde{A}_{\ur}^\dag \Tilde{A}_{\ur}} \ket{c_j}-\lambda_{\ur}}>\xi_N}\Big]\\ 
        & \leq \sum_{\ur\in \cup_{r\leq t}\calS_{N,r}} \sum_{j=0}^{K-1} \cdot   \P\pp{\abs{\bra{c_j}{\Tilde{A}_{\ur}^\dag \Tilde{A}_{\ur}} \ket{c_j}-\lambda_{\ur}}>\xi_N}\\ 
        & \leq \Big|\bigcup_{r\leq t}\calS_{N,t}\Big| K\cdot   e^{-L^{{\varepsilon/2}+o(1)}}
       \\ 
        & \leq 2^{4N} \cdot L\cdot  e^{-L^{{\varepsilon/2}+o(1)}}\\&\leq  e^{-L^{{\varepsilon/2}-o(1)}},
    \end{align*}
which goes to 0 as $N$ increases. On the last line, we used the limiting condition $N/L^\epsilon\to 0$, implied by our
assumption of $\frac{\ln N}{\ln L}\to 0$.

{To complete the proof, it remains to show that indeed $\sum_{\ur}\lambda_{\ur}=1$. Indeed, 
\begin{align*}
    \sum_{r\leq t}\sum_{\ur\in \calS_{\alpha N, r} }\lambda_{\ur}&=\sum_{r\leq t}\sum_{\ur\in \calS_{\alpha N, r}}\frac{1}{p_{N,t}}(1-\gamma)^{N-r}\gamma^r \;\E_{\su{X}\sim \s{D}_N}\Big[  \prod_{i=0}^{q-1} \binom{\s{X}_i}{r_i}\Big]\\
    &=\sum_{r\leq t}\frac{1}{p_{N,t}}(1-\gamma)^{N-r}\gamma^r \;\E_{\su{X}\sim \s{D}_N}\Bigg[  \sum_{\ur\in \calS_{\alpha N, r}} \prod_{i=0}^{q-1} \binom{\s{X}_i}{r_i}\Bigg]
    \\&=\sum_{r\leq t}\frac{1}{p_{N,t}}(1-\gamma)^{N-r}\gamma^r \;\E_{\su{X}\sim \s{D}_N}\Big[ \binom{\sum_{i=0}^{q-1} \s{X_i}}{ r} \Big]\\
    &=\sum_{r\leq t}\frac{1}{p_{N,t}}(1-\gamma)^{N-r}\gamma^r \binom{N}{r }=1,
\end{align*}
where the equality at the last line follows since under the distribution $\s{D}_N$, with probability $1$, we have $\su{X}\in\calS_{\alpha N, N}$.
}
   \end{proof}

 We are now ready to state the main result of this section, which gives the parameters of asymptotically good AQECCs that are obtained from our construction.
\begin{tcolorbox}[width=1\linewidth, left=1mm, right=1mm, boxsep=0pt, boxrule=0.5pt, sharp corners=all, colback=white!95!black]
 \begin{theorem}
     Fix $\alpha >0$, $0<\gamma,\delta<1$ and $\epsilon>0$. Let $C_N=\set{\su{n}^{(i)}}_{i=0}^{L}$  be an $\ell_1$ code, whose codewords are drawn independently from $\calS_{\alpha N,N}$ according to some distribution $\s{D}_N$ on the simplex.
 Suppose that
 the sequence of ${K_N}$-dimensional Fock state codes
 $(Q_N)_N$ constructed as in \eqref{eq:randCodeCoeff} is used on the truncated AD channel $\calN_{\leq \floor{\alpha N}}$ defined in \eqref{eq:Trunc<t}. If 
\begin{align}
         0&<\limsup_{N\to\infty }\frac{\log_2 L}{N}= \bar R(\delta)-o(1) \notag\\
         \frac{\log_2 {K_N}}{N}& <\frac13 \bar R(\delta)-\frac{2}{3}(1+\delta)h_2\p{\frac{1}{1+\delta}}-\varepsilon,\label{eq:CondUnifRates}
         \end{align}
then with probability $1-o(1)$ the codes $(Q_N)_N$ are AQECCs for the channel sequence $(\calN_{\leq \floor{\alpha N}})_N$ and they also have approximate asymptotically $B$-local excitation (cf. Def.~\ref{def:ApproxLocBound}). Here
      $$
          (\bar R(\delta),B)=\begin{cases}( R_{\sr{U}}(\delta),(1+\varepsilon)\log_{1+\alpha}N) &\text{if $\s{D}_N={\sf{Unif}}(\calS_{\alpha N,N})$}\\[.1in]
          (R_{\sr{M}}(\delta),(1+\varepsilon) \frac{\ln N}{\ln\ln N})&\text{if $\s{D}_N=\s{Multinomial}(\floor{\alpha N},N)$}.
          \end{cases}
      $$
  \end{theorem}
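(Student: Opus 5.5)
The proof combines three ingredients: Proposition~\ref{obs:KLorth}, Proposition~\ref{Prop:FockByRand}, and Proposition~\ref{prop:KLapprox1}. It also needs two facts about random $\ell_1$ codes from Section~\ref{sec:ClasicalELl_1}. The first is that a code of $L\approx 2^{N\bar R(\delta)}$ codewords drawn from $\s{D}_N$ has, with probability $1-o(1)$, $\ell_1$ distance exceeding $\floor{\delta N}$, after possibly expurgating a vanishing fraction of codewords. The second is that, with high probability, all codewords have maximum coordinate at most $B$. I take $R_{\sr U}(\delta)$ and $R_{\sr M}(\delta)$ to be exactly the rates for which the first fact holds. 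I read the truncation parameter as $t=\floor{\delta N}$, matching Proposition~\ref{Prop:FockByRand}.

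\emph{Step 1 (orthogonality).} On the high-probability event that $d(C_N)\ge t+1$, Construction~\ref{construction: AAB} with the coefficients \eqref{eq:randCodeCoeff} applies. Proposition~\ref{obs:KLorth} then gives condition \eqref{eq: orth} for the Kraus operators $\Tilde A_{\ur}$, $\norm{\ur}_1\le t$. The normalization $1/p_{N,t}$ does not affect the vanishing of off-diagonal terms.

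\emph{Step 2 (non-deformation).} Next I check hypothesis \eqref{eq:KcondRand}, namely $K^3M^2\le L^{1-\varepsilon'}$.

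The number of Kraus operators is
\[
M=\sum_{r\le t}\abs{\calS_{\alpha N,r}}=\binom{t+\alpha N}{\alpha N}.
\]
I bound it by $2^{N(1+\delta)h_2(1/(1+\delta))+o(N)}$. This uses the entropy estimate for binomial coefficients, in the normalization the statement uses; I would recheck whether that normalization corresponds to $\alpha=1$ or absorbs $\alpha$. Taking logarithms, \eqref{eq:KcondRand} becomes
\[
3\log_2 K+2N(1+\delta)h_2\Big(\frac1{1+\delta}\Big)\le (1-\varepsilon')\log_2 L .
\]
This follows from \eqref{eq:CondUnifRates} once $\varepsilon'$ is chosen small relative to $\varepsilon$ and $\log_2 L = N\bar R(\delta)(1-o(1))$.

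The growth condition $\ln N/\ln L\to0$ holds because $\log L$ is linear in $N$. Proposition~\ref{Prop:FockByRand} therefore yields, with probability $1-o(1)$, nonnegative $\lambda_{\ur}$ summing to $1$ with maximal deviation $o(1/(KM))$. Writing this deviation as $\varepsilon_N^2/(KM)$ with $\varepsilon_N\to0$, Proposition~\ref{prop:KLapprox1} shows that $Q_N$ is an $\varepsilon_N$-AQECC for $\calN_{\le t}$. A union bound over the distance event and this event keeps the total failure probability at $o(1)$.

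\emph{Step 3 (local excitation).} Each basis state $\ket{c_i}$ is supported on Fock states $\ket{\un}$ with $\un\in C_N$. Hence, if every codeword satisfies $\norm{\un}_\infty\le B$, then $Q_N\subseteq\calH^{(B)}_{q,N}$ exactly, which is stronger than the approximate notion. So it suffices to bound
\[
\P\big[\exists\, i:\ \norm{\un^{(i)}}_\infty > B\big]\le L\, q\, \P[\s X_0>B].
\]

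For the uniform distribution on $\calS_{q,N}$ with $q=\alpha N$, the marginal satisfies
\[
\P[\s X_0\ge b]=\binom{N-b+q-1}{q-1}\Big/\binom{N+q-1}{q-1}\approx (1+\alpha)^{-b}.
\]
For the multinomial distribution, $\s X_0\sim\mathrm{Bin}(N,1/\alpha N)$ is approximately Poisson with mean $1/\alpha$, so $\P[\s X_0>b]\approx e^{-b\ln b(1-o(1))}$.

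This is the step I expect to be the main obstacle. The union bound over all $L=2^{\Theta(N)}$ codewords requires $B$ linear-over-log or linear, not logarithmic, so the stated $B$ cannot come from that route. The resolution should be to use the \emph{approximate} notion in Definition~\ref{def:ApproxLocBound}. For a codeword $\ket{\psi}=\sum_i\beta_i\ket{c_i}$, the weight outside $\calH^{(B)}$ is $\sum_i|\beta_i|^2 f_i$, where $f_i$ is the fraction of the $T$ points of $I_i$ exceeding $B$. It therefore suffices that $\max_i f_i=o(1)$.

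Each $f_i$ is an average of $T$ i.i.d.\ indicators with mean $p_B\le q\,\P[\s X_0>B]$. With the choices $B=(1+\varepsilon)\log_{1+\alpha}N$ in the uniform case and $B=(1+\varepsilon)\ln N/\ln\ln N$ in the multinomial case, we get $p_B\le \alpha N\cdot N^{-(1+\varepsilon)(1-o(1))}=o(1)$. Hoeffding's inequality together with a union bound over the $K$ parts, using $T\ge L^{\Omega(1)}$ from \eqref{eq:KcondRand}, then gives $\max_i f_i\le 2p_B=o(1)$ with probability $1-o(1)$. Finally, I would take a union bound over the three high-probability events to conclude.
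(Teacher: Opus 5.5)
Your proposal is correct and follows the paper's proof essentially step for step: the random-code distance bound gives orthogonality via Proposition~\ref{obs:KLorth}, verifying \eqref{eq:KcondRand} lets you invoke Proposition~\ref{Prop:FockByRand} and then Proposition~\ref{prop:KLapprox1}, and approximate $B$-local excitation comes from controlling the fraction of bad codewords in each block by Hoeffding plus a union bound over the $K$ blocks (using $K<L^{1/3}$), which is exactly the route you take after discarding the exact union bound over all codewords. Your doubt about the normalization of $M$ is justified: $M=\binom{\floor{\delta N}+\alpha N}{\alpha N}=2^{N(\alpha+\delta)h_2(\frac{\alpha}{\alpha+\delta})+o(N)}$, whereas the paper's count effectively uses $N$ modes instead of $\alpha N$, so the subtracted term $\frac23(1+\delta)h_2(\frac{1}{1+\delta})$ in the rate condition is valid only for $\alpha\le 1$ and should read $\frac23(\alpha+\delta)h_2(\frac{\alpha}{\alpha+\delta})$ in general.
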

  \end{tcolorbox}
For instance, for $\alpha=5$, the rate of the quantum codes constructed in this theorem is positive for $\delta\le .15$ for the uniform ensemble and $\delta\le 0.05$ for the multinomial ensemble.    
 
 \begin{proof} Throughout this proof, we write $K$ instead of $K_N$.
Both cases of the claim follow the same argument, so we only verify the case of the uniform distribution. The number of Kraus operators of the channel $\calN_{\leq \floor{\delta N}}$ is given by 
     \begin{align}
        M=\sum_{i=0}^t \abs{\calS_{N,r}}=\sum_{i=0}^t \binom{r+N-1}{N-1}\leq 2^{N(1+\delta)(h_2(\frac{1}{1+\delta})+o(1))},\label{eq:NumKraus}
    \end{align}
Coupled with \eqref{eq:CondUnifRates}, we conclude that
    \begin{align}
        K^3\leq \frac{L^{1-\varepsilon}}{M^2}, \quad \frac{\ln N}{\ln L}=o(1).
        \label{eq:CondSat}
    \end{align}
Thus,  by Propositions~\ref{prop:randEll_1}, \ref{Prop:FockByRand}, and \ref{obs:KLorth}, with probability $1-o(1)$, the Fock state code sequence $(Q_N)_N$ and the underlying random $\ell_1$ code sequence $(C_N)_N$ have the following properties:
     \begin{enumerate}
         \item The $\ell_1$ distance of $C_N$ is at least $\floor{\delta N}$, and therefore the corresponding Fock state code $Q_N$ satisfies \eqref{eq:condOrth}.
         \item There are constants $\set{\lambda_{\ur}}$ such that the approximate non-deformation condition~\eqref{eq:non-deformation} holds.
     \end{enumerate}
     Using Proposition~\ref{prop:KLapprox1}, we conclude with probability $1-o(1)$,  the sequence $(Q_N)_N$ is $\varepsilon_N$-AQECC for the channels $\calN_{\leq \floor{\delta N}}$ 
     for some $\epsilon_N\downarrow 0$. 
     
 It remains to prove the claim about the bounded local excitation. Let $\ket{\psi_{\sr{in}}}\in Q_N$ be a codeword and let $\Pi$ be the orthogonal projection on $\calH_{\floor{\alpha N},N }^{(B_N)}$, where $B_N=\p{1+\varepsilon}\log_{1+\alpha}(N)$. Let us write $\ket{\psi_{\sr{in}}}=\sum_{i=0}^{k-1} \beta_i \ket{c_i}$, where $(\ket{c_i})$ is the basis defined in \eqref{eq:CodeBasis}, \eqref{eq:randCodeCoeff}. A simple calculation shows that 
     \begin{align}
         \bra{\psi_{\sr{in}}} \Pi \ket{\psi_{\sr{in}}}&=\sum_{i,j=0}^{K-1} \beta_i^*\beta_j \bra{c_i}\Pi \ket{c_j}=\sum_{i,j=0}^{K-1}  \frac{K}{L}\cdot \beta_i^*\beta_j \sum_{\un\in I_i} \sum_{\un'\in I_j } \bra{\un}\Pi \ket{\un'}. \label{eq:projectCalc}
     \end{align}
To estimate this expression, observe that for any $\un, \un'\in \calS_{\floor{\alpha N},N}$, we have 
 $\bra{\un}\Pi \ket{\un'}= \delta_{\un,\un'}\Ind_{\ppp{\norm{\un}_\infty\leq B_N}}$, and therefore,
     \begin{align}
         \bra{\psi_{\sr{in}}} \Pi \ket{\psi_{\sr{in}}}\nonumber&=\sum_{i,j=0}^{K-1}  \frac{K}{L} \beta_i^*\beta_j \sum_{\un\in I_i} \sum_{\un'\in I_j } \delta_{\un,\un'} \Ind_{\ppp{\norm{\un}_\infty}\leq B_N}\\
         \nonumber&=\sum_{i=0}^{K-1}  \frac{K}{L} |\beta_i|^2 \abs{\ppp{\un \in I_i ~:~ \norm{\un}_{\infty}\leq B_N}}\\
         &=\sum_{i=0}^{K-1}  |\beta_i|^2 \Big(1- \frac{K}{L}\cdot \abs{\ppp{\un \in I_i ~:~ \norm{\un}_{\infty}> B_N}}\Big)\\
         &\geq 1-\max_{i\in [K]}\frac{K}{L}\cdot \abs{\ppp{\un \in I_i ~:~ \norm{\un}_{\infty}> B_N}},\label{eq:psi_inEval}
     \end{align}
     where we used the fact that $I_i\cap I_j=\emptyset$ and that $\sum_i |\beta_i|^2=1$.  Recall that $I_i$ is a partition block formed of $T=L/K$ i.i.d. random vectors distributed uniformly on the simplex. Let $\s{Y}_i$ be the random variable that counts the fraction of `bad' words in $I_i$:
     \[  \s{Y}_i=\frac{1}{T}\cdot \abs{\ppp{\un \in I_i ~:~ \norm{\un}_{\infty}> B_N}}. \]
     By \eqref{eq:psi_inEval}, to show that $\min_{\ket{\psi_{\sr{in}}}}\bra{\psi_{\sr{in}}} \Pi \ket{\psi_{\sr{in}}}=1-o(1)$ with high probability, it is sufficient to show that 
     \begin{align}
         \P\pp{\max_{i\in [K]} \s{Y}_i \leq g(N)} \xrightarrow[N\to \infty]{} 1,\label{eq:YcondG}
     \end{align}
     where $g(N)\to 0$. To that end, let $p_N$ be the probability that a uniform vector on the simplex $\su{X}\sim \sr{Unif}(\calS_{\alpha N , N })$  satisfies $\norm{\su{X}}_\infty >B_N=\p{1+\varepsilon} \log_{1+\alpha} (N)$. By \eqref{eq:InftyNormBoundProb} and the definition of the random $\ell_1$ codeword $\su{n}\in I_i$
     \begin{align*}
          p_N:=\E\big[\Ind_{\ppp{\norm{\su{n}}_{\infty}>B_N}}\big]=\P\pp{\norm{\su{X}}_{\infty}>B_N}=O(N^{\frac{\varepsilon^2-\varepsilon}{2}}).
     \end{align*}
     For  some $0<\xi<\frac{1}{2}$, let $g(N)=p_N+T^{-\xi}=o(1)$. Using Hoeffding's inequality \eqref{eq: Hoeffding}, we have that 
     \begin{align*}
        \P\pp{\s{Y}_i>g(N)}&=\P\Big[\frac{1}{T}\cdot \sum_{\su{n}\in I_i}\p{\Ind_{\ppp{\norm{\un}_{\infty}>B_N}}-\E\pp{\Ind_{\ppp{\norm{\un}_{\infty}<B_N}}}}>T^{-\xi}\Big]\\
         &\leq 2\exp\p{-\frac{2T^{2-2\xi}}{T}}=2e^{-2T^{1-2\xi}}.
     \end{align*}
     We are now ready to make the final step: 
     \begin{align*}
          \P[\max_{i\in [K]} \s{Y}_i > g(N)]&\leq  \sum_{i\in [K]}\P\pp{\s{Y}_i > g(N)}\leq 2K e^{-2T^{1-2\xi}} \\
          &=2\exp\Big(-\p{\frac{L}{K}}^{1-2\xi}+\log K\Big).
          \end{align*}
Rewriting \eqref{eq:CondSat}, we obtain $K<\frac{L^{1/3-\varepsilon}}{M^{2/3}}<L^{1/3}$. Using this on the previous line, 
          \begin{align*}
           \P[\max_{i\in [K]} \s{Y}_i > g(N)]&\leq 2\exp\Big(-L^{\p{\frac{2}{3}(1-2\xi)}}+\frac{1}{3}\log L\Big).
           \end{align*}
     Since the right-hand side vanishes as $N\to\infty$, we have proved \eqref{eq:YcondG} and therefore, the theorem.
 \end{proof}

\begin{remark}{[{\sf Discussion}]}\label{remark: Tverberg}
The codes defined in \eqref{eq:CodeBasis}, \eqref{eq:randCodeCoeff}, which we employ in this section,  
rely on an arbitrary partition of the classical code into equal-sized blocks,   
followed by averaging of the associated quantum states within each subset, yielding approximate non-deformation KL conditions. In \cite{aydin2025quantum} we used partitions arising from Tverberg's theorem to prove validity of the KL conditions (cf. Theorem~\ref{th:AABmain}). Here, condition \eqref{eq:KcondRand} replaces the use of Tverberg-type arguments, supporting exact non-deformation conditions. 
    
There is also another relaxation of the convex-geometric arguments, known as the no-dimen\-sional (approximate) Tverberg theorem \cite{adiprasito2020theorems}, which asserts that any set of $L$ points contained in the Euclidean ball of diameter $D$ can be partitioned into  $K$ subsets whose convex hulls all intersect a common ball of radius $O(\sqrt{K/L}) D$. Interestingly, this non-constructive geometric argument yields exactly the same scaling relation between $K,L,$ and $M$ as \eqref{eq:KcondRand}. It is also known that there exists a procedure that finds such partitions as well as the common ball in time that is polynomial with respect to the number of points and the ambient dimension. 
    
However, in our setting, this geometric approach has two important drawbacks. First, when instantiated for Fock state codes, the 
ambient Euclidean dimension corresponds to the number of modes, which grows exponentially with the number of physical qudits. As a 
result, even algorithms that are polynomial in the dimension become exponential when the complexity is measured by the relevant physical parameters. Second, 
the resulting convex combinations are not guaranteed to preserve the required local boundedness (per-mode excitation) constraints, 
which are essential for the implementation of the code. For these reasons, even though both approaches apply
under the same set of assumptions on the parameters, we adopt the uniform partition and averaging construction. 
\end{remark}
 
 \subsection{De-randomized asymptotically good Fock state codes}
In this section, we present a deterministic construction of asymptotically good Fock-state codes, obtained by derandomizing the uniform random $\ell_1$ simplex code via a Gilbert--Varshamov-type argument (see Section~\ref{sec:GV}) and then reverting to the formalism of \cite{aydin2025quantum}. Although this approach entails high complexity of construction similar to \cite{aydin2025quantum}, we include it to show that, at least theoretically, a more favorable error-correction/distance tradeoff is possible. As discussed above, the improvement of parameters over the randomized approach obtained here stems from the fact that Tverberg partitions result in less stringent constraints on the code dimension than those arising from the more general concentration arguments used in the random construction.
 \begin{theorem}{\rm (\cite[Theorem~7]{aydin2025quantum})} \label{th:AABmain} Let $C_N\subseteq \calS_{q,N}$ be an $\ell_1$ code  with distance at least $t+1$. If  
 \[\abs{C_N}\geq \p{K-1}\binom{t+q-1}{t-1}+1.\]
 then there exist a partition $C_N=\sqcup_{i=0}^{K-1} I_i$ and a choice of coefficients $\set{\alpha_{\un}^{(i)}}$ such that the corresponding Fock state  code from Construction \ref{construction: AAB} has dimension $K$ and distance $t+1$.
  \end{theorem}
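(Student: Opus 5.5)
The plan is to reduce the KL conditions to a convex-geometric statement and settle it with Tverberg's theorem.

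\emph{Orthogonality.} By Proposition~\ref{obs:KLorth}, any partition $C_N=\sqcup_i I_i$ and any normalized coefficients give basis states $\ket{c_i}$ with $\bra{c_i}A_{\ur}^\dag A_{\ur'}\ket{c_j}=0$ whenever $i\neq j$ or $\ur\neq\ur'$, for all $\norm{\ur}_1,\norm{\ur'}_1\le t$. So the only KL condition left is non-deformation: for every $\ur$ with $\norm{\ur}_1\le t$, the quantity $\bra{c_i}A_{\ur}^\dag A_{\ur}\ket{c_i}$ must not depend on $i$.

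\emph{Linearization.} Using \cite[Eq.~(11)]{aydin2025quantum}, as in the proof of Proposition~\ref{Prop:FockByRand}, I would write
\[
\bra{c_i}A_{\ur}^\dag A_{\ur}\ket{c_i}=(1-\gamma)^{N-r}\gamma^r\sum_{\un\in I_i}\abs{\alpha^{(i)}_{\un}}^2\prod_{j}\binom{n_j}{r_j}.
\]
Set $w_{\un}=\abs{\alpha^{(i)}_{\un}}^2$. These weights are nonnegative and sum to $1$ on each block, so non-deformation says exactly this: the convex combinations $\sum_{\un\in I_i}w_{\un}\,\phi(\un)$ coincide for all $i$. Here $\phi(\un)$ is the vector
\[
\phi(\un)=\Bigl(\prod_j\binom{n_j}{r_j}\Bigr)_{\norm{\ur}_1\le t}.
\]

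\emph{Reducing the dimension of $\phi$.} Two identities cut down the ambient dimension.
\begin{itemize}
\item For fixed $r$, the degree-$r$ coordinates satisfy $\sum_{\ur\in\calS_{q,r}}\prod_j\binom{n_j}{r_j}=\binom{N}{r}$ (Vandermonde), which is constant on $\calS_{q,N}$.
\item For $r<t$, each coordinate $\prod_j\binom{n_j}{r_j}$ is a linear combination of the degree-$t$ coordinates, because monomials of lower degree are recovered by multiplying by $(\sum_j n_j/N)^{t-r}$.
\end{itemize}
Hence it suffices that the convex hulls of $\{\phi_t(\un):\un\in I_i\}$ share a common point. Here $\phi_t(\un)=(\prod_j\binom{n_j}{r_j})_{\ur\in\calS_{q,t}}$, and these points lie in an affine space of dimension $d\le\binom{t+q-1}{q-1}-1$. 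Pinning down $d$ so that it matches the bound in the statement is bookkeeping.

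\emph{Tverberg.} Tverberg's theorem says that any $(K-1)(d+1)+1$ points in $\R^d$ can be partitioned into $K$ parts whose convex hulls have a common point. The hypothesis on $\abs{C_N}$ is exactly this count.
\begin{itemize}
\item Map each codeword $\un$ to $\phi_t(\un)$ and take a Tverberg partition $I_0,\dots,I_{K-1}$ of the images.
\item If two codewords have the same image, apply the theorem to the multiset of images, or perturb; in either case the partition of codewords is kept.
\item Let $\mathbf{p}$ be a common point and write $\mathbf{p}=\sum_{\un\in I_i}w^{(i)}_{\un}\phi_t(\un)$ for each $i$. Set $\alpha^{(i)}_{\un}=\sqrt{w^{(i)}_{\un}}$.
\end{itemize}
This gives normalized basis states satisfying all KL conditions for $\cE_{\le t}$. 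The code has dimension $K$, since the supports are disjoint and hence the states are orthonormal, and distance $t+1$.

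\emph{Expected obstacle.} The main delicate step is the dimension reduction in the second bullet of the reduction step. I must check carefully that equality of the degree-$t$ moment vectors forces equality of all lower-degree moments on the simplex. This is the identity $\prod_j\binom{n_j}{r_j}\cdot\binom{N-r}{t-r}=\sum_{\ur'\ge\ur,\,\norm{\ur'}_1=t}\prod_j\binom{n_j}{r'_j}\binom{r'_j}{r_j}$, a multivariate Vandermonde identity. I also need to confirm that the resulting affine dimension yields exactly the stated threshold $\binom{t+q-1}{t-1}$.
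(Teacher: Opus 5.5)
Your route is essentially the argument behind the cited result: orthogonality from the $\ell_1$ distance (Proposition~\ref{obs:KLorth}), non-deformation recast as a common point of the convex hulls of the moment vectors $\phi_t(\un)$, then Tverberg's theorem. The paper only quotes the theorem, and Remark~\ref{remark: Tverberg} confirms that the partition comes from Tverberg. Your reduction is sound, with two clarifications:
\begin{itemize}
\item What justifies expressing lower-degree coordinates through degree-$t$ ones is the double-counting identity $\prod_j\binom{n_j}{r_j}\binom{N-r}{t-r}=\sum_{\ur'\ge\ur,\,\norm{\ur'}_1=t}\prod_j\binom{n_j}{r'_j}\binom{r'_j}{r_j}$, which counts nested pairs $S\subseteq T$ of marked photons and needs $N\ge t$. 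Multiplying by $(\sum_j n_j/N)^{t-r}$ does not work, because the coordinates are binomial products, not homogeneous monomials.
\item The same identity gives back the degree-one coordinates $n_j$, so $\phi_t$ is injective and no multiset or perturbation step is needed.
\end{itemize}

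The step that fails is the last one: matching your dimension to the stated threshold is not bookkeeping. Your count gives $d+1=\abs{\calS_{q,t}}=\binom{t+q-1}{q-1}=\frac{q}{t}\binom{t+q-1}{t-1}$. The statement as written, with $\binom{t+q-1}{t-1}$, is false whenever $t<q$. For $t=1$ it only asks for $\abs{C_N}\ge K$, which forces singleton blocks $\ket{c_i}=\ket{\un^{(i)}}$. For $\ur=\ue_j$ one then gets $\bra{c_i}A_{\ur}^\dag A_{\ur}\ket{c_i}=\gamma(1-\gamma)^{N-1}n^{(i)}_j$, which cannot be independent of $i$ for distinct codewords. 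For example, $\ket{N,0},\ket{0,N}$ with $N\ge 2$ does not correct a single loss.

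The binomial in the statement is a typo for $\binom{t+q-1}{q-1}$. That is what the paper itself uses when it applies the theorem in the proof of Theorem~\ref{th:ExactAG}: there the condition is $K_N<\abs{C_N}\binom{\floor{\delta N}+\floor{\alpha N}-1}{\floor{\alpha N}-1}^{-1}$. Your argument proves exactly this corrected version, with threshold $(K-1)\binom{t+q-1}{q-1}+1$. It proves the literal statement only when $t\ge q$, where the stated threshold is at least as large. You should state the corrected threshold rather than try to force agreement.
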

 We use Theorem~\ref{th:AABmain} and the GV-type bounds presented in Section~\ref{sec:ClasicalELl_1} to construct exact asymptotically good Fock state codes with the additional property of logarithmic per-mode excitation, inherited from the structure of the classical codes that underlie the construction. 
\begin{tcolorbox}[width=1\linewidth, left=1mm, right=1mm, boxsep=0pt, boxrule=0.5pt, sharp corners=all, colback=white!95!black]
 \begin{theorem}[Exact asymptotically good Fock state codes]\label{th:ExactAG} For any $0<\delta<1$, $\varepsilon>0$ and $\alpha>0$ there exists a sequence of Fock state codes $(Q_N)_N$, $Q_N\subseteq\calH_{\alpha N, N}$ with distance at least $\floor{\delta N}$, local excitation $B=(1+\varepsilon)\log_{1+\alpha} N$ and dimension ${K_N}$  as long as 
     \begin{equation}\label{eq: rate Fock}
 \frac{1}{N}\log_2 {K_N}\leq  R_{\sr{GV}}(\delta,\alpha)-(\alpha+\delta) h_{2}\p{\frac{\alpha}{\alpha+\delta}}+o(1),
    \end{equation}
 for some fixed $o(1)$ function.
 \end{theorem}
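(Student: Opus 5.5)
We plan to derive Theorem~\ref{th:ExactAG} by feeding Theorem~\ref{th:AABmain} with a classical $\ell_1$ code obtained from the Gilbert--Varshamov-type bound of Section~\ref{sec:ClasicalELl_1} (Proposition~\ref{prop:GVbound}). The GV bound will be run not on all of $\calS_{\alpha N,N}$ but on the subset of $B$-locally bounded points. The decisive point is that the partition in Theorem~\ref{th:AABmain} only groups classical codewords, so every basis state $\ket{c_i}$ is supported on Fock states $\ket{\un}$ with $\un\in C_N$. Hence if $C_N\subseteq\{\un:\norm{\un}_\infty\le B\}$, then $Q_N\subseteq\calH^{(B)}_{\alpha N,N}$ automatically, whatever coefficients $\alpha^{(i)}_{\un}$ the Tverberg argument produces.

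\textbf{Step 1: the classical code.} Set $q=\floor{\alpha N}$, $t=\floor{\delta N}$ and $B=(1+\varepsilon)\log_{1+\alpha}N$. Let $\calS^{(B)}=\{\un\in\calS_{q,N}:\norm{\un}_\infty\le B\}$.
\begin{itemize}
\item By the estimate \eqref{eq:InftyNormBoundProb}, a uniform point of the simplex violates the bound with probability $o(1)$. Therefore $|\calS^{(B)}|=(1-o(1))|\calS_{q,N}|$.
\item Run the greedy GV argument inside $\calS^{(B)}$: pick points one at a time and delete $\ell_1$ balls of radius $t$. Ball sizes do not depend on where the center lies, so the resulting code has size at least $|\calS^{(B)}|/V_t$, where $V_t$ is the ball volume.
\end{itemize}
This gives $\frac1N\log_2|C_N|\ge R_{\sr{GV}}(\delta,\alpha)-o(1)$, which is exactly how I read the quantity $R_{\sr{GV}}$ as defined in Section~\ref{sec:GV}. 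If that section already states a GV bound with local excitation bounded by $B$, I would simply cite it.

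\textbf{Step 2: the counting condition.} Theorem~\ref{th:AABmain} needs $|C_N|\ge (K-1)\binom{t+q-1}{t-1}+1$. It therefore suffices to have
\[
\frac1N\log_2 K\le \frac1N\log_2|C_N|-\frac1N\log_2\binom{t+q-1}{t-1}.
\]
Stirling's formula gives
\[
\frac1N\log_2\binom{\delta N+\alpha N}{\delta N}=(\alpha+\delta)h_2\Big(\frac{\alpha}{\alpha+\delta}\Big)+o(1).
\]
The small index shift in $\binom{t+q-1}{t-1}$ and the floors only affect the $o(1)$ term. Combining this with Step 1 yields \eqref{eq: rate Fock}.

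\textbf{Step 3: conclude.} Theorem~\ref{th:AABmain} now supplies a partition and coefficients giving a Fock state code of dimension $K_N$ and distance $t+1\ge\floor{\delta N}$. It is contained in $\calH^{(B)}_{\alpha N,N}$ by the support observation above.

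\textbf{Main obstacle.} I expect Step 1 to be the main difficulty: making sure the GV rate is not degraded by the $\infty$-norm restriction. This depends on the tail bound \eqref{eq:InftyNormBoundProb} holding with the constant $(1+\varepsilon)\log_{1+\alpha}N$. The per-mode occupancy of a uniform simplex point is approximately geometric with ratio $\frac{1}{1+\alpha}$. A union bound over the $\alpha N$ modes then gives a tail of order $N(1+\alpha)^{-B}=N^{-\varepsilon}$, which is sufficient.
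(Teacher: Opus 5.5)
Your overall route is the paper's. You take the balanced GV code of Proposition~\ref{prop:GVbound}, feed it to Theorem~\ref{th:AABmain}, lose $\frac1N\log_2\binom{t+q-1}{t-1}=(\alpha+\delta)h_2\bigl(\frac{\alpha}{\alpha+\delta}\bigr)+o(1)$ in the exponent, and observe that every $\ket{c_i}$ is supported on codewords with $\norm{\un}_\infty\le B$. Proposition~\ref{prop:GVbound} (item~4) already supplies the $\infty$-norm guarantee, so your ``simply cite it'' option is exactly what the paper does, and with it the proof is complete.

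The self-contained derivation you sketch in Step~1 should not be used as written, because its key claim is false: in $\calS_{q,N}$ with the $\ell_1$ metric, ball volumes depend on the center. Already for $q=2$, the ball of radius $r$ about $(0,N)$ has $r+1$ points, while the ball about $(N/2,N/2)$ has $2r+1$. In general, Lemma~\ref{lem:BallSizeBound} bounds $|\sr{B}(\un,t)|$ only in terms of the support size $m$ of the center. The greedy argument therefore gives $|C_N|\ge|\calS^{(B)}|/\max_{\un\in\calS^{(B)}}|\sr{B}(\un,t)|$. An $\infty$-norm restriction alone does not let you bound this maximum by the exponent in $R_{\sr{GV}}$, since $\calS^{(B)}$ contains centers of support up to $\min(q,N)$. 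The paper additionally restricts centers to support $\frac{\alpha N}{1+\alpha}(1+o(1))$; this is still a $(1-o(1))$ fraction of the simplex, by Lemma~\ref{lem:inftyNormBound}. Plugging $m\approx\frac{\alpha N}{1+\alpha}$ into Lemma~\ref{lem:BallSizeBound} is precisely what produces the middle term $\bigl(\frac{\alpha}{1+\alpha}+\delta\bigr)h_2\bigl(\frac{\alpha}{\alpha+(1+\alpha)\delta}\bigr)$ of $R_{\sr{GV}}$. So the real subtlety in Step~1 is this support restriction, not the $\infty$-norm tail you flagged as the main obstacle, which is routine.
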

 \end{tcolorbox}
 \begin{proof}
    Consider the sequence of GV-type codes $C_N$ constructed in Proposition~\ref{prop:GVbound}. By Theorem~\ref{th:AABmain}, there exists a partition $I_0,\dots,I_{K-1}$ of $C_N$ and a choice of coefficients $\set{\alpha_{\un}^{(i)}}$ such that the corresponding Fock state  code obtained from Construction~\ref{construction: AAB} has dimension $K$ and distance $\floor{\delta N}+1$ given that 
    \[K_N<\abs{C_N} \binom{\floor{\delta N}+\floor{\alpha N}-1}{\floor{\alpha N}-1}^{-1}.\]
For $N\to\infty$,
    \begin{align*}
       \log_2K_N&<\log_2\abs{{C}_N}-\log_2\binom{\floor{\delta N}+\floor{\alpha N}-1}{\floor{\alpha N}-1}\\
        &=N(R_{\sr{GV}}(\delta,\alpha)-(\alpha+\delta)h_2\p{\frac{\alpha}{\alpha+\delta}}+o(1)),
    \end{align*}
where the last equality follows from Proposition~\ref{prop:GVbound} and Eq.\eqref{eq:asymbinom}. Finally, the fact that $Q_N$ has local excitation $B=(1+\varepsilon)\log_{1+\alpha}{N}$  follows immediately since for any classical codeword $\un$, it is guaranteed by Proposition~\ref{prop:GVbound} that $\norm{\un}_\infty\leq B$.
      \end{proof}
This theorem implies the existence of codes of positive rate for any $\delta>0$ such that 
the right-hand side of \eqref{eq: rate Fock} is positive. To argue that this region is non-vacuous, note that it is positive for 
any $\alpha>0$ and $\delta=0$, and so there exists a neighborhood of $\delta=0$ for which 
there exist infinite sequences of codes of distance $\delta N$ and strictly positive rate.
\section{Asymptotically good PI and spin codes}\label{sec: PI Spin}
\subsection{PI codes}
Our construction of asymptotically good Fock state codes implies existence of another asymptotically good family of quantum codes called \textit{permutation-invariant  codes}. PI codes, which lie in the symmetric subspaces of the Hilbert space, were introduced in the work of Ruskai \cite{ruskaiExchange,ruskai-polatsek} with the motivation of constructing quantum codes that protect against particle exchange errors. These errors occur due to the nature of identical particles, and better isolation from the noisy environment does not reduce the probability of decoherence. Due to the symmetric structure of permutation-invariant codes, they can tolerate exchange (permutation) errors.

Subsequently, it was shown that PI codewords form ground states of the ferromagnetic Heisenberg model and thus can be realized in a Heisenberg ferromagnet \cite{ouyangPI}. Another appealing feature of PI codes is their capability of correcting quantum deletion errors \cite{ouyangDeletion,hagiwaraDeletion}, which is not the case for conventional stabilizer codes. More recently, it was discovered that PI codes support
implementation of transversal gates from finite groups beyond the Clifford group \cite{kubischtaUnitaryt,exoticGates,ouyang2025measurementfreecodeswitchinglowoverhead}.

\begin{definition}\label{def: PI Codes} A permutation invariant code $Q^{\mathrm {PI}}$ is an $N$-qudit quantum code stabilized by the symmetric group: for all $\ket{\psi}\in Q^{\mathrm {PI}}$ and $g\in S_N$, 
\(
    g\ket{\psi}=\ket{\psi}.
\)
\end{definition}
PI codes are often described using {Dicke states}. To define them, we recall that the composition of a string $\ux$ of length $N$ over a $q$-ary alphabet is a $q$-tuple
\begin{equation*}
    \mathrm{C}(\ux) =(n_0(\ux),n_1(\ux),\ldots,n_{q-1}(\ux)),
\end{equation*}
where
\begin{equation*}
    n_i(\ux)= \sum_{j=1}^N\delta_{x_j,i}
\end{equation*}
is the number of occurrences of the character $i$ in the string $\ux$. Note that $\sum_x\mathrm{C}(\ux)=N$. A qudit {\em Dicke state} is the linear combination of all qudit states with the same composition, i.e., 
\begin{equation*}
    \ket{D_{\un}} = \frac{1}{\sqrt{\binom{N}{\un}}}\sum_{\mathrm{C}(\ux)=\un}\ket{\ux}.
\end{equation*}
For example, the Dicke state corresponding to the composition string $(1,1,1)$ is 
\begin{align*}
    \ket{D_{(1,1,1)}} = \frac{1}{\sqrt{6}}(\ket{0,1,2}+\ket{0,2,1}+\ket{1,0,2}+\ket{1,2,0}+\ket{2,0,1}+\ket{2,1,0}). 
\end{align*}
By construction, such states are permutation invariant: for any $\un \in \calS_{q,N}$, we have $g\ket{D_{\un}}=\ket{D_{\un}}$  for all $g\in S_N$. Therefore, the code space of a permutation-invariant code is formed of linear combinations of Dicke states with complex coefficients. A $K$-dimensional qudit PI code $\calQ^{\mathrm {PI}}$ of length $N$ is defined as the subspace spanned by  the following basis states:
\begin{equation*}
    \ket{c_i}=\sum_{\un \in \calS_{q,N}}\alpha_{\un}^{(i)}\cdot \ket{D_{\un}}, \quad i=0,\dots,K-1.
\end{equation*}
 The first family of qubit PI codes correcting an arbitrary number of errors was found in \cite{ouyangPI}. It can be defined using integer parameters $g,n,u$, and so is called the \textit{gnu codes}. The shortest gnu code correcting $t$ Pauli errors has length at least $(2t+1)^2$. PI codes encoding multiple qudits were presented in \cite{ouyangHigherDimensions}. These works introduced an explicit construction of $K$-dimensional qudit PI codes of length $(K-1)(2t+1)^2$  correcting $t$ Pauli errors for any alphabet size $q\geq2$. A more efficient construction of qubit PI codes generalizing gnu codes was found in \cite{aydin2023family}. These qubit codes encode two logical qubits, correct $t$ errors, and require $(2t+1)^2-2t$ physical qubits. Recently, a general framework for constructing 
 high-dimensional qudit PI codes using classical $\ell_1$ codes was suggested in \cite{aydin2025quantum}. Relying on it, the authors of \cite{aydin2025quantum} constructed a family of $K$-dimensional qudit PI codes of length $N$, distance $d=t+1$, and alphabet size $q=N$, where $N=(K-1)t(t+1)$. They also 
 constructed a family of PI codes of growing length $N$ with $q=N$, distance close to $N/\log N$, and dimension $2^N$. In the same paper, the authors established a connection between Fock state codes and PI codes. This result can be summarized as follows.
\begin{prop}\label{prop:FocktoPI} {\rm \cite{aydin2025quantum}}
Let $C_N\in\calS_{q,N}$ be a classical $\ell_1$ code with distance $t+1$. Let $Q$ be a $K$-dimensional Fock state code with total excitation $N$ constructed from $C_N$ as in \eqref{eq:CodeBasis}. Define the mapping
\begin{equation*}
    \ket{\un}_b\underset{h^{-1}}{\stackrel{h}{\Longleftrightarrow}}\ket{D_{\un}}.
\end{equation*}
Applying this mapping to a Fock state code $Q$ yields a PI code with alphabet size $q$, length $N$, and distance $d=t+1$ defined by the basis
\begin{align}\label{eq:basisPICode}
     \ket{c_i} = \sum_{\un\in I_i}\alpha_{\un}^{(i)
 }\cdot \ket{D_{\un}}, \quad i \in \{0,1,\ldots, K-1\}.
\end{align}
\end{prop}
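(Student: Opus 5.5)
The map $h$ sends the orthonormal family $\{\ket{\un}:\un\in\calS_{q,N}\}$ to the orthonormal family of Dicke states $\{\ket{D_{\un}}\}$. Distinct compositions give disjoint supports, and each Dicke state has unit norm. So $h$ is an isometry from $\calH_{q,N}$ onto the symmetric subspace of $(\C^q)^{\otimes N}$. It follows that the images of the basis $\ket{c_i}$ in \eqref{eq:CodeBasis} form an orthonormal basis of a $K$-dimensional subspace. This subspace is stabilized by $S_N$, because each Dicke state is. Hence \eqref{eq:basisPICode} defines a $K$-dimensional PI code of length $N$ over alphabet size $q$. What remains is the distance claim: the code must correct every Pauli/qudit error of weight at most $t$. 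By the KL conditions \eqref{eq:KLcond} and linearity, it suffices to verify them for a spanning set of operators $E^\dagger F$ where $E,F$ act on at most $t$ qudits, that is, for operators of weight at most $2t$ written in the matrix-unit basis $\ket{a}\bra{b}$ on each qudit.

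The key step is to compute $\bra{D_{\un}}E^\dagger F\ket{D_{\un'}}$ for such operators. Because the Dicke states are symmetric, we may replace $O=E^\dagger F$ by its symmetrization $\frac{1}{N!}\sum_{g}gOg^{-1}$. Take $O$ to be a product of matrix units $\ket{a_j}\bra{b_j}$ on a set $S$ of $w\le 2t$ qudits (identity elsewhere). Then $O$ maps a string with composition $\un'$ to one with composition $\un'-\mathrm{C}(\underline b_S)+\mathrm{C}(\underline a_S)$. So $\bra{D_{\un}}O\ket{D_{\un'}}\ne0$ forces $\un-\un'=\mathrm{C}(\underline a)-\mathrm{C}(\underline b)$, which gives $d(\un,\un')\le w/2\le t$ in the paper's normalized $\ell_1$ metric. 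This mirrors the Fock-side argument: a weight-$w$ operator moves at most $w$ letters, just as $A_{\ur}^\dagger A_{\ur'}$ shifts excitation by at most $2r$ in $\|\cdot\|_1$. Since $C_N$ has distance $t+1$, only the terms with $\un=\un'$ survive. Disjointness of the $I_i$ then gives $\bra{c_i}O\ket{c_j}=0$ for $i\ne j$, which is the orthogonality part of KL.

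For the non-deformation part, on the diagonal we obtain $\bra{c_i}O\ket{c_i}=\sum_{\un\in I_i}|\alpha^{(i)}_{\un}|^2\bra{D_{\un}}O\ket{D_{\un}}$. A diagonal Dicke matrix element of a symmetrized weight-$w$ operator is a polynomial in $\un$. More precisely, it is a linear combination of falling-factorial ratios $\prod_a\binom{n_a}{m_a}/\binom{N}{w}$ with $\|\underline m\|_1\le 2t$, which come from counting the positions of $S$ that carry each letter. The same kind of quantities $\prod_i\binom{n_i}{r_i}$ appear in the Fock-side quantity $\bra{c_i}A_{\ur}^\dagger A_{\ur}\ket{c_i}$ via \cite[Eq.~(11)]{aydin2025quantum}. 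The plan is to show that the span of the functionals $\un\mapsto\bra{D_{\un}}O\ket{D_{\un}}$, over $O$ of weight at most $2t$, equals the span of $\un\mapsto\prod_i\binom{n_i}{r_i}$, $\|\ur\|_1\le t$. For the latter, the Fock code guarantees that $\sum_{\un\in I_i}|\alpha^{(i)}_{\un}|^2\prod_j\binom{n_j}{r_j}$ is independent of $i$.

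I expect this last identification to be the main obstacle. The Fock condition involves only $\|\ur\|_1\le t$, whereas the Pauli condition naively involves degree up to $2t$. To resolve this, I would use that on the simplex, where $\sum n_i=N$ is fixed, products $\binom{n_a}{m_a}$ of higher degree are not needed. On the diagonal, $O=E^\dagger F$ with $E,F$ of weight at most $t$ acts as a composition-preserving operator. Its matrix element on $\ket{D_{\un}}$ can be written as $\sum_{\ur}\lambda_{\ur}\binom{N}{\un}^{-1}\binom{N-r}{\un-\ur}$-type terms after first applying $F$ and then projecting onto Dicke states of the intermediate composition $\un-\ur$ with $\|\ur\|_1\le t$. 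This is exactly the Fock-state analysis transported by $h$, as in \cite{aydin2025quantum}. If that reduction goes through, the common constant $\lambda$ transfers directly, and the PI code has distance $t+1$.
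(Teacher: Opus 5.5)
The first part of your argument is fine: $h$ is an isometry onto the symmetric subspace, so the image is a $K$-dimensional PI code. The distance claim has a genuine gap, and it starts with what ``distance $t+1$'' means here. For PI (and spin) codes the paper uses the standard convention: distance $t+1$ means the code \emph{detects} every error of weight at most $t$, i.e.\ $\bra{c_i}O\ket{c_j}=\lambda_O\delta_{ij}$ for every single operator $O$ of weight $w\le t$.

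You set out to show that the code \emph{corrects} all errors of weight at most $t$. That needs the KL conditions for $E^\dagger F$ of weight up to $2t$, and this stronger claim is false. Take $t=1$ and the two-mode code $\ket{c_0}=(\ket{4,0}+\ket{0,4})/\sqrt2$, $\ket{c_1}=\ket{2,2}$. Here $C_N=\{(4,0),(0,4),(2,2)\}$ has $\ell_1$ distance $2$, and both basis states give $\bra{c_i}A_{\ur}^\dagger A_{\ur}\ket{c_i}=2\gamma(1-\gamma)^3$ for $\ur\in\{(1,0),(0,1)\}$. Its image under $h$ is the four-qubit code spanned by $(\ket{0000}+\ket{1111})/\sqrt2$ and the weight-two Dicke state. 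For it, $\bra{c_1}X_1X_2\ket{c_0}=1/\sqrt3\neq0$ and $\bra{c_0}Z_1Z_2\ket{c_0}=1\neq-1/3=\bra{c_1}Z_1Z_2\ket{c_1}$. So $\{X_1,X_2\}$ is not correctable (consistent with the nonexistence of a $[[4,1,3]]$ code), even though every weight-one error is detected.

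Your off-diagonal step hides this through an arithmetic slip. If $\un-\un'=\mathrm{C}(\underline a)-\mathrm{C}(\underline b)$ with $\underline a,\underline b$ of length $w$, then in the paper's normalized metric $d(\un,\un')\le w$, not $w/2$ (take $\underline a=00$, $\underline b=11$). So the correct dictionary is weight $w$ $\leftrightarrow$ $r=w$ lost photons, not $w\leftrightarrow 2r$. For $w\le 2t$, $\ell_1$ distance $t+1$ does not kill the cross terms.

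The diagonal ``main obstacle'' you flag comes from the same misreading, and your proposed fix does not work. Fixing $\sum_i n_i=N$ removes only one linear relation. It cannot bring degree-$2t$ functionals into the span of $\prod_i\binom{n_i}{r_i}$, $\ur\in\calS_{q,r}$, $r\le t$. For example, $\un\mapsto\bra{D_{\un}}Z_1\cdots Z_{2t}\ket{D_{\un}}$ for qubits is a polynomial of degree $2t$ in $n_1$. Also, inserting a projection onto Dicke states between $E^\dagger$ and $F$ is not an identity, because $F\ket{D_{\un}}$ is not permutation-invariant.

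Once the target is corrected, your direct approach goes through with no degree mismatch, and it is a self-contained alternative to the paper's appeal to conditions (C1)--(C4) and Theorem~1 of \cite{aydin2025quantum}. Take $O=\bigotimes_{j\in S}\ket{a_j}\bra{b_j}$ with $|S|=w\le t$.
\begin{itemize}
\item The bound $d(\un,\un')\le w\le t$ kills every term with $\un\neq\un'$. This covers all pairs $i\neq j$ and all cross terms inside a single $I_i$.
\item The term $\un=\un'$ survives only if $\mathrm{C}(\underline a)=\mathrm{C}(\underline b)=:\underline m$. Then $\bra{D_{\un}}O\ket{D_{\un}}=\binom{N-w}{\un-\underline m}/\binom{N}{\un}$.
\item Hence $\bra{c_i}O\ket{c_i}=\sum_{\un\in I_i}|\alpha^{(i)}_{\un}|^2\binom{N-w}{\un-\underline m}/\binom{N}{\un}$. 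By \cite[Eq.~(11)]{aydin2025quantum} this is an $i$-independent positive multiple of $\bra{c_i}A_{\underline m}^\dagger A_{\underline m}\ket{c_i}$ with $\underline m\in\calS_{q,w}$, $w\le t$, which is exactly the Fock non-deformation quantity.
\end{itemize}
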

\begin{proof}[Proof sketch]
We start with the code in \eqref{eq:CodeBasis}, which is a $q$-mode Fock state code with total excitation $N$. The mapping $h$ sends a $q$-mode Fock state with total excitation $N$ to a Dicke state with alphabet size $q$ and total length $N$.  Since the Fock state code $Q$ in \eqref{eq:CodeBasis} is constructed using the framework in \cite{aydin2025quantum}, the coefficients $\{\alpha_{\un}^{(i)}\}$ associated to the partition $I_0,I_1,\ldots,I_{K-1}$ of $C_N$ must be a solution for the set of equations (C1)-(C4) of that paper. As shown in Theorem 1 on p.7 of \cite{aydin2025quantum}, these equations guarantee that the PI code in \eqref{eq:basisPICode} has distance $d=t+1$.
\end{proof}

Proposition \ref{prop:FocktoPI} combined with Theorem \ref{th:ExactAG} gives a construction of asymptotically good PI codes.
\begin{tcolorbox}[width=1\linewidth, left=1mm, right=1mm, boxsep=0pt, boxrule=0.5pt, sharp corners=all, colback=white!95!black]
 \begin{theorem}[Asymptotically good PI codes]\label{th:ExactPI} For any $0<\delta<1$, $\varepsilon>0$, and $\alpha>0$ there exists a sequence of PI codes  $(Q^{\mathrm {PI}}_N)_N$, $Q^{\mathrm {PI}}_N\subseteq\calH_{\alpha N, N}$ of increasing length $N$, distance at least $\floor{\delta N}$, and dimension ${K_N}$  as long as 
 \[\frac{1}{N}\log_2 {K_N}\leq  R_{\sr{GV}}(\delta,\alpha)-(\alpha+\delta) h_2\Big(\frac{\alpha}{\alpha+\delta}\Big)+o(1),\]
 for some fixed $o(1)$ function.
\end{theorem}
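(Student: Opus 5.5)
The plan is to transfer Theorem~\ref{th:ExactAG} to the PI setting through the map $h$ of Proposition~\ref{prop:FocktoPI}. Fix $0<\delta<1$, $\varepsilon>0$ and $\alpha>0$, and let $K_N$ satisfy the rate condition in the statement.

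First, I take the GV-type classical $\ell_1$ codes $C_N\subseteq\calS_{\floor{\alpha N},N}$ of Proposition~\ref{prop:GVbound}. These have $\ell_1$ distance at least $\floor{\delta N}+1$ and size $2^{N(R_{\sr{GV}}(\delta,\alpha)+o(1))}$.

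Next, exactly as in the proof of Theorem~\ref{th:ExactAG}, I apply Theorem~\ref{th:AABmain} with $q=\floor{\alpha N}$ and $t=\floor{\delta N}$. Estimating $\log_2\binom{\floor{\delta N}+\floor{\alpha N}-1}{\floor{\alpha N}-1}=N\big((\alpha+\delta)h_2(\frac{\alpha}{\alpha+\delta})+o(1)\big)$ by \eqref{eq:asymbinom} shows that the hypothesis $|C_N|\ge (K_N-1)\binom{t+q-1}{t-1}+1$ holds under the assumed bound on $\frac1N\log_2K_N$. Here I use $\binom{t+q-1}{t-1}\le\binom{t+q-1}{q-1}$, which is valid once $q\ge t$; otherwise I compare the two binomials directly, and the same exponent arises up to $o(1)$ after the symmetric rewriting. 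The conclusion is a partition $C_N=\sqcup_i I_i$ and coefficients $\alpha^{(i)}_{\un}$ solving the conditions (C1)–(C4) of \cite{aydin2025quantum}. This gives a $K_N$-dimensional Fock state code $Q_N$ of the form \eqref{eq:CodeBasis} with distance $\floor{\delta N}+1$.

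Finally, Proposition~\ref{prop:FocktoPI} applies directly. $Q_N$ is built from an $\ell_1$ code of distance $t+1$ through Construction~\ref{construction: AAB}, with coefficients satisfying (C1)–(C4). Its image $Q^{\mathrm{PI}}_N=h(Q_N)$, spanned by the states \eqref{eq:basisPICode}, is therefore a PI code on $N$ qudits of alphabet size $\floor{\alpha N}$ with distance $\floor{\delta N}+1\ge\floor{\delta N}$. Its dimension is still $K_N$, because $h$ maps the orthonormal Fock basis to orthonormal Dicke states and so preserves orthonormality of the $\ket{c_i}$.

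The main obstacle is bookkeeping rather than new mathematics. I must make sure that the coefficients produced by Theorem~\ref{th:AABmain} are precisely the solutions of (C1)–(C4) that Proposition~\ref{prop:FocktoPI} requires. By contrast, the random codes of \eqref{eq:randCodeCoeff} satisfy only approximate conditions, which is why I rely on the exact GV/Tverberg route. I also need to check that the binomial estimate matches the stated rate expression up to the $o(1)$ term.
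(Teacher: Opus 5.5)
Your proposal is correct and follows the paper's route. The paper proves this theorem in one line by applying Proposition~\ref{prop:FocktoPI} to the exact Fock-state codes of Theorem~\ref{th:ExactAG}; you re-derive those codes inline from Proposition~\ref{prop:GVbound} and Theorem~\ref{th:AABmain}, and you also reconcile $\binom{t+q-1}{t-1}$ with $\binom{t+q-1}{q-1}$, a step the paper skips and which costs only the polynomial factor $t/q$ that is absorbed into the $o(1)$.
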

\end{tcolorbox}
We note that by the previous results, the codewords of the codes $(Q^{\mathrm {PI}}_N)_N$ are balanced in the sense that every symbol of the alphabet is used at most $O(\log N)$ times.

\subsection{Spin codes}
Both PI codes and multimode Fock state codes encode quantum information into a composite many-body quantum system. Another physical platform to realize quantum codes is nuclear state spaces of molecules, atoms, or ions. Spin codes, introduced in \cite{gross}, aim to protect quantum information by encoding it into a single large spin.  Spin-$J$ systems are generally described using $(2J+1)$-dimensional irreducible representations of $SU(2)$. Therefore, spin codes were initially defined and studied using $SU(2)$ geometry. The physical Hilbert space of these platforms has the basis $\{\ket{J,m}\, \quad m=-J,-J+1,\ldots, J-1, J \}$, where each basis state $\ket{J,m}$ is an eigenstate of the angular momentum operator $J_z$ with eigenvalue $m$. It is also possible to interpret the same physical Hilbert space using irreducible representations of $SU(q)$ of the same dimension. The basis states of this Hilbert space can also be defined using vectors in
the discrete simplex $\calS_{q,N}$. In other words, the $D$-dimensional irreducible representation of $SU(q)$ is defined using the basis $\{\ket{\un}_s\!,\, \un\in \calS_{q,N}\}$, where $D=\binom{q+N-1}{q-1}$. Hence, a spin-$J$ system that houses an irreducible representation of $SU(2)$ also houses an irreducible representation of $SU(q)$ for any $q,N>0$, where $2J+1=\binom{q+N-1}{q-1}$. 

A $K$-dimensional $SU(q)$ spin code with total spin $N$ has the following basis states:
\begin{align}\label{eq:basisSpinCode}
    \ket{c_i} = \sum_{\un\in\calS_{q,N}}\alpha_{\un}^{(i)}\ket{\un}_s, \quad i\in[K]
\end{align}
The noise process for spin codes is described by small random rotations. Let  $\calJ_q = \{J_i : i=1,2,\ldots q^2-1\}$ be the generators of the Lie algebra $\mathfrak{s u}(q)$ in the fundamental irreducible representation. When $q=2$, the generators $\calJ_2=\{J_x,J_y,J_z\}$ are the angular momentum operators. Then we say that a spin code has a distance $d=t+1$, if it detects errors from the set
\begin{align*}
    \calE_{t}^{(s)} = \{J_{i_1}J_{i_2}\ldots J_{i_t} : J_{i_k}\in \calJ_q\}
\end{align*}
Relying on the results of \cite{aydin2025quantum}, it is possible to show that Proposition \ref{prop:FocktoPI} can be modified to yield 
a connection between spin codes and Fock state codes. In conclusion, the construction of asymptotically good Fock state codes yields a family of spin codes with increasing $J$ that detect errors from the set $\calE_t^{s}$ with $t$ growing linearly with $N$.

\section{Classical \texorpdfstring{$\ell_1$}{} codes}\label{sec:ClasicalELl_1}
In this section, we state and prove results about the asymptotic behavior of classical $\ell_1$ codes used in our main constructions.

\subsection{Deterministic $\ell_1$ codes}\label{sec:GV} GV bounds for codes in $\calS_{q,N}$ were previously considered in the literature. 
A standard GV argument implies that, in a finite metric space $X$, there is a code of size $\ge |X|/{\bar B(d-1)}$, where the denominator is the {\em average volume} of the ball of radius $d-1$ in $X$ \cite{gu1993generalized,tolhuizen1997generalized}. 
Coupled with a recent work \cite{goyal2024gilbert}, which provides the asymptotic growth rate of $\bar B(\cdot)\subset \calS_{q,N}$, this gives an existence bound for sequences of $\ell_1$ codes. However, this result lacks certain features that we need for the construction. Specifically, we are interested in codes
whose codewords have bounded support or the $\ell_\infty$ norm, neither of which is implied by the above argument.
At the same time, choosing a point in $\calS_{q,N}$ at random will yield such points with high probability, so we can limit our sample space to a subset
of ``typical'' vectors and apply greedy search or randomized constructions only on this subset. 
When used in the construction of quantum codes that we employ below, this results in quantum codewords with bounded excitation per mode, yielding a well-balanced Fock state code. 

Limiting ourselves to the subset of typical vectors, we consider approaches to constructing $\ell_1$ codes. The first one relies on a Gilbert greedy
procedure on the subset of typical vectors, resulting in a nonconstructive existence proof of asymptotically good codes. We also propose a simple randomized construction that yields similar code families relying on a low-complexity procedure.

Given a classical code $C\subset \calS_{q,N},$ we define its rate as 
   \begin{equation}\label{eq: rate}
   R(C)=\frac{\log_2 |C|}{\log_2|S_{q,N}|}.
   \end{equation}
For a sequence of codes $(C_N)_N$, $N\to\infty$, the asymptotic rate and relative distance are defined as $\liminf_{N\to\infty} R(C_N)$ and $\liminf_{N\to\infty}d(C_N)/N$, respectively. Note that the diameter of the metric space $S_{q,N}$ is $N$, so the relative distance is at most 1.

For $\ux\in \calS_{q,N}$ let \(
    \sr{B}(\ux,r):= \ppp{\uy\in \calS_{q,N} ~:~ d(\ux,\uy)\leq r}
\)  
   denote the metric ball centered at $\ux$. 
Below we use typical properties of vectors in $\calS_{q,N}$ with respect to the support, $\supp(\ux):=\ppp{i\in [q] : x_i\neq 0}$ and $l_\infty$ norm, $\norm{\ux}_\infty:=\max_{i\in [q]} |x_i|$, collected in Lemmas~\ref{lem:Tailmax} to \ref{lem:BallSizeBound} in the Appendix.
We show that with high probability, the support of a random vector is about $q/(1+\alpha)$, and, perhaps more unexpectedly, its coordinates are bounded above
by $O(\log N)$.

\begin{tcolorbox}[width=1\linewidth, left=1mm, right=1mm, boxsep=0pt, boxrule=0.5pt, sharp corners=all, colback=white!95!black]
\begin{prop}[GV-construction of balanced $\ell_1$ codes]\label{prop:GVbound}
     Given $\alpha>0,\varepsilon>0$ and $0<\delta<1$, consider the function 
     \[ 
     R_{\mathrm{GV}}(\delta)=(1+\alpha)h_2\Big(\frac{1}{1+\alpha}\Big)
     -{\Big(\frac{\alpha}{1+\alpha}+\delta\Big)h_2\Big(\frac{\alpha}{\alpha+(1+\alpha)\delta}\Big)}-(\alpha+\delta)h_2\Big(\frac{\alpha}{\alpha+\delta}\Big).
     \]
     There exists a sequence of $\ell_1$ codes $(C_N)_N$, where $C_N\subseteq \calS_{\alpha N,N}$, such that 
     \begin{enumerate}
         \item $\log_2|C_N|\geq N(R_{\mathrm{GV}}(\delta) +o(1))$;
         \item $C_N$ has distance at least $\floor{\delta N}$;
         \item $\max_{\un\in C_N}|\frac{1}{N}\supp(\un)-\frac{\alpha}{1+\alpha}|= o(1)$; 
         \item $\max_{\un\in C_N}\norm{\un}_\infty<(1+\varepsilon)\log_{1+\alpha} N$. 
     \end{enumerate}
\end{prop}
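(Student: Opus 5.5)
We will use a Gilbert greedy argument restricted to a set of typical vectors. Write $q=\floor{\alpha N}$. First we define the typical set
\[
\calT_N=\Big\{\un\in \calS_{q,N} ~:~ \Big|\tfrac{1}{N}|\supp(\un)|-\tfrac{\alpha}{1+\alpha}\Big|\le \eta_N,\ \norm{\un}_\infty<(1+\varepsilon)\log_{1+\alpha}N\Big\},
\]
with $\eta_N\downarrow 0$ chosen slowly. By the tail lemmas in the Appendix (Lemma~\ref{lem:Tailmax} and the support concentration lemma), a uniform point of $\calS_{q,N}$ lies in $\calT_N$ with probability $1-o(1)$. Hence $|\calT_N|=(1-o(1))\binom{N+q-1}{q-1}$, and
\[
\log_2|\calT_N|=N(1+\alpha)h_2\big(\tfrac{1}{1+\alpha}\big)+o(N)
\]
by the standard entropy estimate of binomials (Eq.~\eqref{eq:asymbinom}). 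This accounts for the first term of $R_{\rm GV}$.

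Next we run the greedy procedure inside $\calT_N$. We pick any point, delete all points of $\calT_N$ at $\ell_1$ distance $<\floor{\delta N}$ from it, and repeat. This produces a code $C_N\subseteq\calT_N$ with distance at least $\floor{\delta N}$, and properties 3 and 4 hold automatically. Its size satisfies $|C_N|\ge |\calT_N|/\max_{\ux\in\calT_N}|\sr{B}(\ux,\delta N)\cap \calT_N|$. It therefore remains to bound the ball volume around a typical center. This is the main step, and we would rely on Lemma~\ref{lem:BallSizeBound}.

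To bound the ball, we parametrize a point $\uy$ with $d(\ux,\uy)\le r$, $r=\delta N$, as $\uy=\ux-\ur+\ur'$, where $\ur\le \ux$ coordinatewise and $\ur,\ur'\in\calS_{q,r'}$ for some $r'\le r$. The number of choices of $\ur'$ is at most $\binom{r+q-1}{q-1}$, which is $2^{N(\alpha+\delta)h_2(\alpha/(\alpha+\delta))+o(N)}$; this gives the last term. For $\ur$, which is a removal pattern of $r'$ units from $\ux$, the count is the number of $\ur\in\calS_{q,r'}$ supported on $\supp(\ux)$, and it is at most $\binom{r+|\supp(\ux)|-1}{|\supp(\ux)|-1}$. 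With $|\supp(\ux)|\approx \frac{\alpha}{1+\alpha}N$, this equals $2^{N(\frac{\alpha}{1+\alpha}+\delta)h_2(\frac{\alpha}{\alpha+(1+\alpha)\delta})+o(N)}$, which matches the middle term exactly. This is where the support condition in $\calT_N$ is used. Summing over $r'\le r$ costs only a polynomial factor.

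Combining the three estimates gives $\log_2|C_N|\ge N(R_{\rm GV}(\delta)+o(1))$, which is property 1. The main obstacle is to make the support estimate uniform over all centers in $\calT_N$ and to keep the $o(1)$ terms uniform: the rate $\eta_N$ must be small enough that the entropy exponents are continuous in it. This is routine given the continuity of $h_2$.
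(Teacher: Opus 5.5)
Your proposal is correct and follows essentially the same route as the paper. The paper restricts to a typical set, with support concentration from Lemma~\ref{lem:inftyNormBound} and the $\ell_\infty$ tail from Lemma~\ref{lem:Tailmax}. It then runs the Gilbert greedy procedure there and bounds the ball around a typical center by the two-binomial count of Lemma~\ref{lem:BallSizeBound}, which yields the three entropy terms of $R_{\mathrm{GV}}(\delta)$. Your only differences are cosmetic: a generic slowly decaying $\eta_N$ (which must satisfy $\eta_N\gg N^{-1/2}$ for the Chebyshev step) in place of the paper's $N^{-1/2+\xi}$, and intersecting the balls with $\mathcal{T}_N$; neither changes the argument.
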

\end{tcolorbox}

\begin{proof}
By abuse of notation, below we assume that $\alpha N$ is an integer.  Let us fix some ${0<\xi<1}$ 
and let  $\calS_{\alpha N,N}'$ be the set of all $\un\in\calS_{\alpha N,N}$ that satisfy 
    \begin{equation}\label{eq: typical}
    \Big|\frac{1}{N}|\supp(\un)|-\frac{\alpha}{1+\alpha}\Big|\leq N^{-\frac{1}{2}+\xi}, \quad \norm{\un}_\infty \leq (1+\varepsilon)\log_{1+\alpha} N. 
    \end{equation}
{\bf1}. First, we observe that by limiting ourselves to $\calS_{\alpha N,N}'$ we remove only a negligible part of the space, in other words that $\frac{|\calS_{\alpha N,N}'|}{|\calS_{\alpha N,N}|}=1-o(1)$ as $N$ increases. 
   Indeed, let $\su{X}$ be uniform on $\calS_{\alpha N,N}$ and consider the event $\mathcal{E}$ defined in \eqref{eq: typical} with $\su X$ replacing $\un$.
   We will show that $\P(\cE)=1-o(1)$. 

Let us start with $\supp(\su{X})$.
By Lemma~\ref{lem:inftyNormBound}, for sufficiently large $N$, for some decaying sequence $\gamma_N\to 0$, we have
    \begin{align*}
       \P\Big[ \Big|\frac{1}{N}|\supp(\su{X})| & - \frac{\alpha}{1+\alpha}\Big| > N^{-\frac12+\xi/2}\Big]\\
       &\leq \P\pp{\abs{\frac{1}{\alpha N}|\supp(\su{X})|- \E\pp{\frac{1}{\alpha N}|\supp(\su{X})|}}>N^{-\frac{1}{2}+\xi/2}\cdot(1-\gamma_N)}
        \\&\leq \frac{N(N-1)(\alpha N-1) N^{1-\xi}}{ \alpha N(N(1+\alpha)-1)^2(N(1+\alpha)-2)(1-\gamma_N)^2}=O(N^{-\xi}). 
    \end{align*}

 Now let us address the second part of the event $\mathcal E$.       By Lemma~\ref{lem:Tailmax}, for any integer $B>0$ 
    \begin{align*}
        \P\pp{\norm{\su{X}}_\infty \geq B}&\leq \alpha N \frac{\binom{N(1+\alpha)-B-1}{\alpha N-1}}{\binom{N(1+\alpha)-1}{\alpha N-1}}=\alpha N\cdot\prod_{i=0}^{B-1}\frac{N-i}{N(1+\alpha)-i-1} \\
        &\leq \alpha N \p{\frac{N}{N(1+\alpha)-B}}^B.
    \end{align*}
Let us take $B=(1+\varepsilon)\log_{1+\alpha}N$ and let $\gamma_N=B/N.$ For a sufficiently large $N$, we have
    \begin{align*}
        \frac{N}{N(1+\alpha)-B}&= \frac{1}{1+\alpha+\gamma_N} < (1+\alpha)^{-(1-\frac\varepsilon 2)},
    \end{align*}
        and therefore
    \begin{align}
        \P\pp{\norm{\su{X}}_\infty \geq B}&\leq \alpha N(1+\alpha)^{-(1-\frac\varepsilon 2)B}=\alpha N^{(\epsilon^2-\epsilon)/2} \stackrel{N\to\infty}
        \longrightarrow 0.
  \label{eq:InftyNormBoundProb}
    \end{align}
We have shown that both parts of $\mathcal E$ occur with probability $1-o(1),$ but then so does their intersection, the event $\mathcal E$ itself.

{\bf2}. Let us turn to code construction. We start with an arbitrary $\un_1\in \calS_{\alpha N,N}'$ and then choose an arbitrary vector $\un_2$ from $\calS_{\alpha N,N}'\setminus \sr{B}(\un_1,\floor{\delta N})$. Arguing inductively, in the $i$th step we pick an arbitrary codeword $\un_i\in \calS_{\alpha N,N}'\setminus \bigcup_{j=1}^{i-1}\sr{B}(\un_i,\floor{\delta N})$ as long as this set is not empty. Thus, to complete the proof, it is sufficient to prove that,
as $N$ increases,
    $$
    \frac{|\calS_{\alpha N,N}'|}{\max_{\un\in \calS_{\alpha N,N}'}|\sr{B}(\un,\floor{\delta N})|}\geq 2^{N\p{R_{\mathrm{GV}}(\delta)+o(1)}}.
    $$
By Part {\bf 1} above and \eqref{eq:asymbinom},
    \begin{align}\label{eq: S'}
    |\calS_{\alpha N,N}'| &=|\calS_{\alpha N,N}|(1+o(1))=\binom{N(1+\alpha)-1}{\alpha N-1}(1+o(1)) \notag\\
    &=2^{N(1+\alpha)\p{h_2\p{\frac{1}{1+\alpha}}+o(1)}}.
    \end{align}
Next, let us fix $\un\in\calS'_{\alpha N,N}$ and denote $s_\un:=|\supp(\un)|\le \frac {\alpha N}{1+\alpha}(1+\gamma_N)$, where $\gamma_N\downarrow 0$ (not necessarily the same as above), where the support size is given in \eqref{eq: typical}. Then, using Lemma \ref{lem:BallSizeBound},
    \begin{align*}
        \abs{\sr{B}(\un,\floor{\delta N})}&\leq \sum_{j=0}^{\floor{\delta N}}\binom{j+s_{\un}-1}{s_{\un}-1}  \binom{j+\alpha N-1}{\alpha N-1}\\
        &\leq \delta N \max_{0\le j\le \floor{\delta N}}\binom{j+s_{\un}-1}{s_{\un}-1}  \binom{j+\alpha N-1}{\alpha N-1}\\
        &=\delta N  \binom{\floor{\delta N}+s_{\un}-1}{s_{\un}-1}  \binom{\floor{\delta N}+\alpha N-1}{\alpha N-1}.
    \end{align*}
Letting $N\to \infty$ and again using \eqref{eq:asymbinom}, we obtain
   \begin{align*}
       \frac 1N\log_2\abs{\sr{B}(\un,\floor{\delta N})} 
       \le
        \Big(\frac{\alpha}{1+\alpha}+\delta\Big)h_2\Big(\frac{\alpha}{\alpha+(1+\alpha)\delta}\Big)+
        (\alpha+\delta)h_2\Big(\frac{\alpha}{\alpha+\delta}\Big) +o(1)
    \end{align*}
Since this is true for any $\un$, the proof follows by combining this inequality and \eqref{eq: S'}.
\end{proof}

\subsection{Random \texorpdfstring{$\ell_1$}{} codes}\label{sec:randEll1}
In this section, we consider randomized procedures of constructing $\ell_1$ codes.  Although still asymptotically good, their rate–distance trade-offs are inferior to those obtained in the previous section. At the same time, these random ensembles offer a distinct structural advantage. As we will show in the next section, their inherent randomness enables a simple and low-complexity procedure for constructing asymptotically good bosonic Fock state codes, bypassing
a computationally involved step required in an earlier construction, \cite{aydin2025quantum}. See Remark~\ref{remark: Tverberg} for additional details.

We introduce two randomized $\ell_1$ code constructions. The first one samples codewords independently and uniformly from the simplex $\calS_{q,N}$. Using arguments similar in spirit to the GV-type analysis, we show that with high probability the resulting code achieves linear distance and positive rate while satisfying an $O(\log N)$ bound on the coordinate values. We then show that replacing the uniform simplex distribution with a multinomial distribution allows us to further reduce the maximal coordinate value to $O(\log N/\log\log N)$, at the cost of a weaker achievable rate. We emphasize that this flexibility in choosing the per-codeword sampling distribution is quite general, and other distributions can be used within the same framework. In particular, identifying a distribution that simultaneously achieves linear distance and rate while ensuring $O(1)$ boundedness remains an intriguing and important open question.

\begin{prop}[Uniform $\ell_1$ codes] \label{prop:randEll_1} Let
$R_{\mathrm{U}}:[0,1] \to \R$ be a function given by
     $$
       R_{\mathrm{U}}(\delta)=\frac{(1+\alpha)}{2}h_2\p{\frac{1}{1+\alpha}}
     -(\alpha+\delta)h_2\p{\frac{\alpha}{\alpha+\delta}}. 
     $$
     Let $C_N^{\sr{U}}=\set{\su{n}^{(i)}}_{i=1}^{L_N}$ be a code formed of $L_N$ uniform random vectors chosen independently from the simplex $\calS_{\alpha N,N}$. For $N\to\infty,$ if
    \begin{align}\label{eq:RandCond}
        \log_2{L_N}\leq N\p{R_{\mathrm{U}}(\delta) + o(1)},
    \end{align} 
    then $\lim_{N\to \infty }\P\pp{d(C_N^{\sr{U}})\geq \floor{\delta N}}=1$. 
\end{prop}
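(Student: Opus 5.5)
The plan is a first-moment (union bound) argument over pairs of codewords. Since the $L_N$ codewords are i.i.d. uniform on $\calS_{\alpha N,N}$, the event $d(C_N^{\sr{U}})<\floor{\delta N}$ is the union over the $\binom{L_N}{2}$ pairs $i<j$ of the events $\{d(\su{n}^{(i)},\su{n}^{(j)})<\floor{\delta N}\}$. Each such pair event has the same probability. Conditioning on $\su{n}^{(i)}=\un$, this probability is $|\sr{B}(\un,\floor{\delta N}-1)|/|\calS_{\alpha N,N}|$. Hence
\[
\P\pp{d(C_N^{\sr{U}})<\floor{\delta N}}\le L_N^2\cdot \frac{\max_{\un\in\calS_{\alpha N,N}}|\sr{B}(\un,\floor{\delta N})|}{|\calS_{\alpha N,N}|}.
\]
It therefore suffices to show that the ratio on the right is at most $2^{-N(2R_{\mathrm U}(\delta)+o(1))}$, with a slack that absorbs the $o(1)$ in \eqref{eq:RandCond}.

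For the denominator, I would use \eqref{eq:asymbinom}, exactly as in \eqref{eq: S'}:
\[
|\calS_{\alpha N,N}|=2^{N(1+\alpha)(h_2(\frac{1}{1+\alpha})+o(1))}.
\]
For the numerator, I would use the worst-case ball bound, not the typical-support bound used in Proposition~\ref{prop:GVbound}. The reason is that the union bound here ranges over all centers. Lemma~\ref{lem:BallSizeBound} bounds the ball by
\[
\sum_{j\le\floor{\delta N}}\binom{j+s_\un-1}{s_\un-1}\binom{j+\alpha N-1}{\alpha N-1}.
\]
Using the trivial estimate $s_\un\le \alpha N$, I would bound the ball by
\[
\delta N\binom{\floor{\delta N}+\alpha N-1}{\alpha N-1}^2=2^{2N(\alpha+\delta)(h_2(\frac{\alpha}{\alpha+\delta})+o(1))}.
\]
Combining the two estimates gives the exponent $-(1+\alpha)h_2(\tfrac1{1+\alpha})+2(\alpha+\delta)h_2(\tfrac{\alpha}{\alpha+\delta})$, which equals $-2R_{\mathrm U}(\delta)$. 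This explains the factor $\tfrac12$ in $R_{\mathrm U}$: it is the usual halving of the exponent from the pairwise union bound.

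To finish, I would read the hypothesis \eqref{eq:RandCond} as $\log_2 L_N\le N(R_{\mathrm U}(\delta)-\eta_N)$ for a slowly vanishing $\eta_N$ that dominates the $o(1)$ terms above, so the failure probability is $2^{-2N\eta_N+o(N)}\to0$. If the statement is meant with a genuinely signed $o(1)$, I would fix the $o(1)$ function to be negative enough, for example $-1/\log N$, which is the reading under which the later theorem uses it.

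The main obstacle is the sharp exponent bookkeeping for the ball volume. One issue is checking that the maximum over $j$ sits at $j=\floor{\delta N}$, which holds because both binomials increase in $j$. The other is checking that the product of the two binomials, with $s_\un$ replaced by $\alpha N$, indeed yields exactly $(\alpha+\delta)h_2(\frac{\alpha}{\alpha+\delta})$ per factor. If Lemma~\ref{lem:BallSizeBound} turns out to be stated only for typical centers, I would instead condition on the first codeword of each pair being typical, with probability $1-o(1)$ by Part 1 of the proof of Proposition~\ref{prop:GVbound}. That route needs care: the atypical probability is only polynomially small and cannot be union-bounded over $L_N^2$ pairs, so the worst-case bound above is the route I would commit to.
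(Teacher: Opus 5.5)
Your proposal is correct and follows essentially the same route as the paper. The paper also uses a union bound over the $\binom{L}{2}$ pairs, bounds each pair's collision probability through Lemma~\ref{lem:BallSizeBound} with the worst-case support $s_{\un}\le\alpha N$, and obtains $2^{-N(2R_{\mathrm U}(\delta)+o(1))}$ from the squared binomial divided by $|\calS_{\alpha N,N}|$; your explicit reading of the $o(1)$ in \eqref{eq:RandCond} as a sufficiently negative vanishing function makes precise what the paper's final line leaves implicit.
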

\begin{proof}
Let $t=\floor{\delta N}$ and for $i,j\in [L], i\neq j$, consider a random variable $\s{Y}_{ij}=\Ind(d\p{\su{n}^{(i)},\su{n}^{(j)}} \leq t).$
We have
    \begin{align}\label{eq:Yij}
        \P\Big[\sum_{i\neq j}\s{Y}_{i,j}=0\Big]=1-\P\Big[\bigcup_{i\neq j} \set{\s{Y}_{i,j}=1} \Big]\geq 1-\sum_{i\neq j}\P\pp{\s{Y}_{ij}=1}.
    \end{align}
 Using independence, we compute
    \begin{align*}
        \P\pp{\s{Y}_{ij}=1}&=\sum_{\un \in \calS_{\alpha N,N}}\P\pp{\s{Y}_{ij}=1 ~|~\su{n}^{(i)}=\un}\P\pp{\su{n}^{(i)}=\un}    \\
        & =\sum_{\un}\P\pp{\un^{(j)}\in \sr{B}(\un,t) ~|~\su{n}^{(i)}=\un}\P\pp{\su{n}^{(i)}=\un}\\
        & =\sum_{\un}\P\pp{\un^{(j)}\in \sr{B}(\un,t)}\P\pp{\su{n}^{(i)}=\un}\\
        &= \sum_{\un} \frac{\abs{\sr{B}(\un,t)}}{\abs{\calS_{\alpha N,N}}}  \P\pp{\su{n}^{(i)}=\un}\\
        &{\leq} \sum_{\un \in \calS_{\alpha N,N}}  t  \frac{\binom{\floor{\delta N}+\alpha N-1}{\alpha N-1}^2}{\binom{N(1+\alpha)-1}{\alpha N-1}}\cdot \P\pp{\su{n}^{(i)}=\un}\\
     &=2^{-N(2R_{\sr{U}}(\delta)+o(1))}
        \end{align*}
where the inequality follows from Lemma~\ref{lem:BallSizeBound}, and the last line uses \eqref{eq:asymbinom}.
 Combining this estimate with \eqref{eq:Yij} and the assumption in \eqref{eq:RandCond}, we find that
        \begin{align*}
            \P\Big[\sum_{i\neq j}\s{Y}_{i,j}=0\Big]&\geq 1-\sum_{i\neq j}\P\pp{\s{Y}_{i,j}=1}\geq 1-\binom{L}{2}\cdot 2^{-N(2R_{\sr{U}}(\delta)+o(1))}\\
            &\geq 1-  2^{2\p{\log_2L-N\p{R_{\su{U}}(\delta)+o(1)}}}=1-o(1).
        \end{align*}
We have shown that the distance of the code with probability approaching one is at least $\floor{\delta N}$.
\end{proof}

\begin{remark} Note that uniform sampling from $S_{q,N}$ is easily done by the balls-and-bins argument. To start, sample numbers $b_1,b_2,\dots,b_{q-1}$ uniformly without replacement from the set $\{1,\dots,N+q-1\}$. W.l.o.g. assume that $b_1<b_2<\dots<b_{q-1}$. Then the vector $(x_1,\dots,x_q)$, where $x_1=b_1-1$, 
$x_i=b_i-b_{i-1}-1, i=2, \dots,q-1$, and $x_q=(N+q-1)-b_{q-1}$, is a uniform random vector on $\calS_{q,N}$. 
\end{remark}

To further improve the guaranteed per-mode occupancy, we replace the uniform simplex distribution with a multinomial distribution. Specifically, each codeword is generated by independently placing $N$ balls into $q$ bins, where each ball is assigned to a bin uniformly at random, and the resulting occupancy vector is taken as the codeword.

\begin{construction}\label{construction: uniform}
Let $\s{Y}_1,\dots,\s{Y}_N$ be i.i.d. random variables, distributed uniformly on the set $[q]$. Define a random vector $\su{X}\in\calS_{q,N}$ by 
 \begin{align}
     \s{X}_i=\sum_{j=1}^N \Ind_{\ppp{\s{Y}_j=i}},  \quad i=0,\dots,q-1   \label{eq:MultinomialDist}
 \end{align}
We refer to its distribution as $\s{Multinomial}(q,N)$. A code in $\calS_{q,N}$ of size $L$ is constructed by sampling $L$ independent vectors $\su{X}_i$ according to this procedure.
\end{construction}

This construction relies on independent sampling and is therefore simpler to implement than uniform sampling from the simplex, while still producing codewords with well-behaved typical properties. We start by showing that the resulting codes with high
probability are asymptotically good. At the end of this section (Lemma~\ref{lemma: sublog}) we also prove that the coordinates of a random vector
$\su{X}\sim \s{Multinomial}(q,N)$ are uniformly bounded above by a sublogarithmic function of $N$.

\begin{prop}[Multinomial $\ell_1$ codes] \label{prop:randEll_1Multi}
Fix $\delta\in(0,1)$ and $\alpha>0$, and define the function $ R_{\mathrm{M}}:[0,1]\to \R$ by
     \[ R_{\mathrm{M}}(\delta)=\frac{(\Delta_\alpha-\delta)^2}{8\ln 2}, \] 
     where 
     $$
     \Delta_\alpha:=\frac{1}{\pi}\intop_0^{\pi}  e^{-\frac{2(1-\cos(\theta))}{\alpha}}(1-\cos(\theta))d\theta.
     $$
     Let $C_N^{\sr{M}}=\set{\su{n}^{(i)}}_{i=1}^{L_N}$ be a code formed of $L_N$ i.i.d. random vectors sampled out of $\calS_{\alpha N,N}$ according to $\s{Multinomial}(\floor{\alpha N},N)$.  There exists an $o(1)$ function such that for $N\to\infty,$ if,
    \begin{align}\label{eq:RandCond2}
        \log_2L_N\leq N\p{R_{\mathrm{M}}(\delta) + o(1)}
    \end{align} then $\lim_{N\to \infty }\P\pp{d(C_N^{\sr{M}})\geq \floor{\delta N}}=1$. 
\end{prop}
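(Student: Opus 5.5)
We need to prove Proposition~\ref{prop:randEll_1Multi}. We follow the union bound template of Proposition~\ref{prop:randEll_1}: with $t=\floor{\delta N}$ and $\s{Y}_{ij}=\Ind(d(\su{n}^{(i)},\su{n}^{(j)})\le t)$, it suffices to show
\[
\P\pp{d(\su{X},\su{X}')\le t}\le 2^{-N(2R_{\sr{M}}(\delta)+o(1))}
\]
for two independent $\su{X},\su{X}'\sim\s{Multinomial}(q,N)$, $q=\floor{\alpha N}$. Then $\binom{L}{2}$ times this bound is $o(1)$ under \eqref{eq:RandCond2}. The factor $8\ln 2$ and the square $(\Delta_\alpha-\delta)^2$ point to a McDiarmid (bounded differences) inequality. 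We therefore treat $f=d(\su{X},\su{X}')=\frac12\sum_i|\s{X}_i-\s{X}'_i|$ as a function of the $2N$ independent ball labels $\s{Y}_1,\dots,\s{Y}_N,\s{Y}'_1,\dots,\s{Y}'_N$.

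\textbf{Bounded differences.} Moving one ball changes two coordinates of $\su{X}$ by $\pm1$. This changes $\sum_i|\s{X}_i-\s{X}_i'|$ by at most $2$, so it changes $f$ by at most $c=1$. McDiarmid then gives
\[
\P\pp{f\le \E f-s}\le \exp\p{-2s^2/(2N)}=\exp(-s^2/N).
\]
With $s=N(\Delta_\alpha-\delta)+o(N)$ this is $2^{-N(\Delta_\alpha-\delta)^2/\ln 2}$, which is stronger than the required $2^{-2N(\Delta_\alpha-\delta)^2/(8\ln 2)}$. So we have slack for the constant, and even a cruder difference bound $c=2$ still suffices.

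\textbf{Computing the mean.} The main step is to show $\E f = N\Delta_\alpha(1+o(1))$. The difference $\s{D}_i=\s{X}_i-\s{X}'_i$ of two $\s{Binomial}(N,1/q)$ variables converges, as $N\to\infty$ with $N/q\to1/\alpha$, to a difference of two independent $\mathrm{Poisson}(\mu)$ variables with $\mu=1/\alpha$, i.e. a Skellam variable. Then
\[
\E f=\tfrac{q}{2}\E|\s{D}_0| \approx \tfrac{\alpha N}{2}\E|\mathrm{Sk}(\mu,\mu)|.
\]
Write $|k|$ via its Fourier representation, $|k|=\frac1\pi\int_0^\pi \frac{1-\cos(k\theta)}{1-\cos\theta}\,d\theta$, a Fejér-type identity. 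Combine it with the characteristic function $\E\cos(\theta \mathrm{Sk})=e^{-2\mu(1-\cos\theta)}$. This yields
\[
\E|\mathrm{Sk}|=\frac1\pi\int_0^\pi\frac{1-e^{-2\mu(1-\cos\theta)}}{1-\cos\theta}\,d\theta.
\]
The result should then be matched to the stated $\Delta_\alpha$. If the normalization turns out differently, for example a form involving $e^{-2(1-\cos\theta)/\alpha}$ appearing after an integration by parts in $\mu$, we will instead prove $\E f\ge N\Delta_\alpha(1-o(1))$ using this exact integral. A lower bound on the mean is all the argument needs.

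We expect the \textbf{main obstacle} to be making the Poisson approximation rigorous with an $o(N)$ error. We would avoid limits by using the exact characteristic function of $\s{D}_i$, namely $|1-\frac1q+\frac{e^{i\theta}}{q}|^{2N}$, inside the same integral formula. This function converges uniformly in $\theta$ to $e^{-2\mu(1-\cos\theta)}$. Dominated convergence, with integrand bounded by $\frac{1-\phi}{1-\cos\theta}\le 2N/q$, then gives $\E|\s{D}_0|\to\E|\mathrm{Sk}|$. Finally, we take $s=\E f-t$ in McDiarmid and apply the union bound to conclude.
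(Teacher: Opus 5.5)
Your proposal is correct and has the same skeleton as the paper's proof: a union bound over pairs, McDiarmid's inequality for $d$ viewed as a function of the $2N$ ball labels, and the asymptotic mean of $d$ from the Skellam$(1/\alpha,1/\alpha)$ limit. It differs in two places.

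\emph{The bounded-differences constant.} Your constant $1$ for $d$ is the sharp one. The paper uses $2/N$ for $d/N$, which is where the factor $8\ln 2$ comes from, so your exponent has a factor-$4$ slack.

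\emph{The mean.} The paper passes to the Poisson limit in distribution, invokes uniform integrability, and sums the Skellam pmf via the Bessel recurrence. Note that the criterion $\sup_n\E|S_n|<\infty$ in Lemma~\ref{lem:ConvExp} does not by itself give uniform integrability; a bounded second moment would, and one holds here. Your Fej\'er-kernel representation of $|k|$, combined with the exact characteristic function $\bigl(1-2p(1-p)(1-\cos\theta)\bigr)^N$, $p=1/q$, and dominated convergence with the explicit bound $2N/q$, avoids this issue entirely. It also produces a closed integral for the limit directly.

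The step you left open cannot be closed by matching. With $x=2/\alpha$ and $u=1-\cos\theta$, your limit $\frac1\pi\int_0^\pi\frac{1-e^{-xu}}{xu}\,d\theta$ equals $e^{-x}\bigl(I_0(x)+I_1(x)\bigr)$. This is also the value the paper's Lemma~\ref{lem:Bessel} reaches in its last line. By contrast, the integral defining $\Delta_\alpha$ equals $e^{-x}\bigl(I_0(x)-I_1(x)\bigr)$; for $\alpha=1$ these are about $0.52$ and $0.09$. So the stated $\Delta_\alpha$ is strictly below the true mean, and your fallback lower bound is exactly what is needed.

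Do not try to compare the integrands $\frac{1-e^{-xu}}{xu}$ and $ue^{-xu}$ pointwise; this fails near $\theta=\pi$ when $\alpha$ is large. Instead, use $\frac{1-e^{-y}}{y}\ge e^{-y}$ to bound the limit below by $\frac1\pi\int_0^\pi e^{-xu}\,d\theta$, and then
\[
\frac1\pi\int_0^\pi e^{-xu}(1-u)\,d\theta=\frac{e^{-x}}{\pi}\int_0^{\pi/2}\cos\theta\,\bigl(e^{x\cos\theta}-e^{-x\cos\theta}\bigr)\,d\theta\ge 0 .
\]

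Finally, state the restriction $\delta<\Delta_\alpha$ explicitly. Your choice $s=\E f-t$ must be positive and of order $N(\Delta_\alpha-\delta)$. Without some such restriction the proposition is false: for $\delta$ above the true mean distance it fails already for $L_N=2$. So this restriction is implicit in the paper's argument as well.
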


For the proof of Proposition~\ref{prop:randEll_1Multi}, we will need the expected distance between two random vectors as above.
It is given in the following lemma.
\begin{lemma}\label{lem:Bessel}
    Let $\su{X}$ and $\su{Z}$ be independent $\s{Multinomial}(\floor{\alpha N},N)$ random vectors. Then, 
    \begin{align}\label{eq:BesselExp}
        \nonumber\lim_{N\to \infty}\frac{1}{N}\E\pp{d(\su{X},\su{Z})}&=\Delta_\alpha.
    \end{align}
\end{lemma}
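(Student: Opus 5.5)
The plan is to reduce the expected $\ell_1$ distance to a single-coordinate computation via linearity, pass to a Poisson limit, and evaluate the limiting expectation through the characteristic function. Write $q=\floor{\alpha N}$. Since $d(\su{X},\su{Z})=\frac12\sum_{i=0}^{q-1}|\s{X}_i-\s{Z}_i|$ and the coordinates are exchangeable,
\[
\frac1N\E\pp{d(\su{X},\su{Z})}=\frac{q}{2N}\,\E\abs{\s{X}_0-\s{Z}_0},
\]
where $\s{X}_0,\s{Z}_0$ are independent $\s{Binomial}(N,1/q)$ variables. As $N\to\infty$, $q/N\to\alpha$, and each of $\s{X}_0,\s{Z}_0$ converges in distribution to $\s{Poisson}(\lambda)$ with $\lambda=1/\alpha$.

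To pass the limit inside the expectation, I will use that $|\s{X}_0-\s{Z}_0|\le \s{X}_0+\s{Z}_0$ and that $\E[\s{X}_0^2]=N/q\cdot(1-1/q)+N^2/q^2$ is bounded uniformly in $N$. This gives uniform integrability, so
\[
\E\abs{\s{X}_0-\s{Z}_0}\to\E|\s{W}|, \qquad \s{W}=\s{P}_1-\s{P}_2,
\]
with $\s{P}_1,\s{P}_2$ i.i.d.\ $\s{Poisson}(\lambda)$; that is, $\s{W}$ has the Skellam law. The quantity to evaluate is therefore $\frac{\alpha}{2}\E|\s{W}|$.

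To compute $\E|\s{W}|$ I will use the Fejér-kernel identity, valid for every integer $w$:
\[
|w|=\frac1{2\pi}\int_{-\pi}^{\pi}\frac{1-\cos(w\theta)}{1-\cos\theta}\,d\theta .
\]
The integrand is $\sum_{|k|<|w|}(|w|-|k|)e^{ik\theta}$, whose integral is $2\pi|w|$. Taking expectations and using Tonelli (the integrand is nonnegative) together with $\E\cos(\s{W}\theta)=e^{-2\lambda(1-\cos\theta)}$ gives
\[
\E|\s{W}|=\frac1\pi\int_0^\pi\frac{1-e^{-2\lambda(1-\cos\theta)}}{1-\cos\theta}\,d\theta .
\]
An alternative check is to expand in modified Bessel functions, which gives $\E|\s{W}|=2\lambda e^{-2\lambda}\p{I_0(2\lambda)+I_1(2\lambda)}$.

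The final step, which I expect to be the real obstacle, is to show that $\frac{\alpha}{2}\E|\s{W}|$ coincides with the integral defining $\Delta_\alpha$. I would try differentiating in $\lambda$, or integrating by parts in $\theta$, to turn the kernel $\frac{1-e^{-2\lambda(1-\cos\theta)}}{1-\cos\theta}$ into the form $e^{-2\lambda(1-\cos\theta)}(1-\cos\theta)$.

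I expect a constant or sign to need care here. A sanity check as $\lambda\to0$ should fix normalizations: the expected distance should be $1-\lambda+O(\lambda^2)$. The Bessel expression $e^{-2\lambda}(I_0+I_1)$ passes this check, whereas $\frac1\pi\int_0^\pi e^{-2\lambda(1-\cos\theta)}(1-\cos\theta)\,d\theta=e^{-2\lambda}(I_0-I_1)$ gives $1-3\lambda$. If the two expressions genuinely disagree, the proof will establish the limit in the Bessel form $e^{-2/\alpha}\p{I_0(2/\alpha)+I_1(2/\alpha)}$ and use that value as $\Delta_\alpha$ in Proposition~\ref{prop:randEll_1Multi}.
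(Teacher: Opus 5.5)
Your computation of the limit is correct, and your suspicion about the last step is justified. The mismatch is real and lies in the paper's definition of $\Delta_\alpha$, not in your argument, so do not spend effort trying to force the identification.

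The paper's proof ends at exactly the value you find, $e^{-2/\alpha}\bigl(I_0(2/\alpha)+I_1(2/\alpha)\bigr)$, and then simply calls it $\Delta_\alpha$. But since $I_1(x)=\frac1\pi\int_0^\pi e^{x\cos\theta}\cos\theta\,d\theta$, the printed integral equals $e^{-2/\alpha}\bigl(I_0(2/\alpha)-I_1(2/\alpha)\bigr)$. The two differ by $2e^{-2/\alpha}I_1(2/\alpha)>0$, so the lemma as literally stated fails for every $\alpha>0$. Your check, $1-\lambda$ versus $1-3\lambda$, already shows this. The factor $(1-\cos\theta)$ in the integrand should be $(1+\cos\theta)$:
\[
\lim_{N\to\infty}\frac1N\E\bigl[d(\su{X},\su{Z})\bigr]=e^{-2/\alpha}\Bigl(I_0\bigl(\tfrac2\alpha\bigr)+I_1\bigl(\tfrac2\alpha\bigr)\Bigr)=\frac1\pi\int_0^\pi e^{-\frac{2(1-\cos\theta)}{\alpha}}(1+\cos\theta)\,d\theta .
\]
You can reach this form directly from your Fej\'er integral. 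Put $u=1-\cos\theta$. Both $\frac1\pi\int_0^\pi\frac{1-e^{-2\lambda u}}{u}\,d\theta$ and $2\lambda\cdot\frac1\pi\int_0^\pi e^{-2\lambda u}(1+\cos\theta)\,d\theta$ vanish at $\lambda=0$. Both also have $\lambda$-derivative $\frac2\pi\int_0^\pi e^{-2\lambda u}\,d\theta$; for the second this follows from the integration by parts $\int_0^\pi e^{x\cos\theta}\cos\theta\,d\theta=x\int_0^\pi e^{x\cos\theta}\sin^2\theta\,d\theta$.

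Using the corrected constant in Proposition~\ref{prop:randEll_1Multi}, as you propose, is the right repair. It is the value the McDiarmid step there actually needs, and since it is larger than the printed one, nothing in that argument gets worse.

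Otherwise your argument has the same skeleton as the paper's: exchangeability, the binomial-to-Poisson limit, uniform integrability, and the Skellam law. It differs only in how $\E|\s{W}|$ is evaluated. The paper sums $|k|e^{-2\lambda}I_k(2\lambda)$ and telescopes via $\frac{2k}{x}I_k=I_{k-1}-I_{k+1}$. Your Fej\'er-kernel identity, combined with the characteristic function $e^{-2\lambda(1-\cos\theta)}$, gives an integral representation in one line, justified by Tonelli alone, and avoids the Bessel bookkeeping. (The paper's displayed chain drops and later restores a factor of $2$, though its end value is right.)

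Your uniform-integrability step is also the sounder one. A uniform bound on second moments does give uniform integrability. The paper only checks $\sup_N\E|\s{X}_1-\s{Z}_1|<\infty$, which by itself does not justify exchanging limit and expectation.
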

\begin{proof} Although $\su{X}$ and $\su{Z}$ depends on $N$, below we will suppress $N$ from the notation. As before, by abuse of notation, we assume that $\alpha N$ is an integer. Observe that 
    \begin{equation}\label{eq:expCalc1}
        \frac{1}{N}\E\pp{d(\su{X},\su{Z})}=
        \frac{1}{2N}\sum_{i=0}^{{\alpha N}-1}\E{\abs{\s{X}_i-\s{Z}_i}}=\frac{{\floor{\alpha N}}}{2N}\cdot \E\abs{\s{X}_1-\s{Z}_1},
    \end{equation}
   where the last equality follows since the pairs $(\s{X}_i,\s{Z}_i)$ are identically distributed and therefore all have the same expectation. From the definition of the multinomial distribution, $\s{X_i}$ and $\s{Z}_i$ are independent $\sr{Binomial}(N,p_N)$, with $p_N=\frac{1}{{\alpha N}}$. As is well known (\cite[Sec.~VI.5]{feller1968introduction}), such binomial random variables converge in distribution to a $\sr{Poi}(\frac{1}{\alpha})$ random variable, and thus their distribution functions  $F_{\s{X}_1},F_{\s{Z}_1}$ converge to $F_{\alpha}$, the cumulative distribution function of $\sr{Poi}(\frac{1}{\alpha})$. By the independence of $\s{X}_1$ and $\s{Z}_1$, the joint distribution satisfies 
    \[F_{\s{X}_1,\s{Z}_1}(x,z)=F_{\s{X}_1}(x)F_{\s{Z}_1}(z)\to F_{\alpha}(x)F_{\alpha}(z), 
    \]
    and therefore as $N\to\infty$, the pairs $(\s{X}_1,\s{Z}_1)$ converge in distribution to $(\s{X},\s{Z})$, distributed as a pair of independent Poisson random variables $\sr{Poi}(\frac{1}{\alpha})$. By the continuous mapping theorem (see Lemma~\ref{lem:ctsMap}), we now can claim that as $N$ increases, the random variables $|\s{X}_1-\s{Z}_1|$ converge in distribution to $|\s{X}-\s{Z}|$. Next, note that this sequence of random variables is uniformly integrable. Indeed, since $\s{X}_1$ and $\s{Z}_1$ are non-negative, we have:
    \begin{align*}
        \E|\s{X}_1-\s{Z}_1|\leq \E\pp{\s{X}_1+\s{Z}_1}=2 \frac{N}{\floor{\alpha N}}<\frac{2}{\alpha}+\varepsilon,
    \end{align*}
    for sufficiently large $N$, as desired. By Lemma~\ref{lem:ConvExp}, this suffices to claim that $\E|\s{X}_1-\s{Z}_1|\to \E|\s{X}-\s{Z}|$ as $N\to\infty$.

    It now remains to compute the expectation of $|\s{X}-\s{Z}|$, which is the absolute value of $\s{S}\triangleq
    \s{X}-\s{Z}$-the difference between two independent $\sr{Poi}(\frac{1}{\alpha})$. The distribution of $\s{S}$ is known as $\sr{Skellam}(\frac{1}{\alpha},\frac{1}{\alpha})$ distribution \cite{skellam1946frequency}, and its PDF is given by 
    \[P_{\s{S}}(k)=e^{-\frac{2}{\alpha}}I_k\p{\frac{2}{\alpha}}, \quad k\in \Z,\]
  where for $k\in \Z$, $I_k$ is the $k^{\rm th}$ order modified Bessel function of the first kind, 
    \[I_k(x):=\frac{1}{\pi}\intop_{0}^{\pi} e^{x \cos\theta} \cos(k\theta)d\theta.\]  
    Note that the above distribution is symmetric around zero, and therefore
    \begin{align*}
        \E[\abs{\s{S}}]&=\sum_{k\in \Z}|k|\cdot P_{\s{S}}(k)=2\cdot \sum_{k=1}^{\infty}k\cdot P_{\s{S}}(k)=e^{-\frac{2}{\alpha}}\sum_{k=1}^{\infty}k I_k\p{\frac{2}{\alpha}}\\
         &\overset{(a)}{=}\frac{e^{-\frac{2}{\alpha}}}{\alpha}\lim_{m\to \infty }\sum_{k=1}^{m} \p{I_{k-1}\p{\frac{2}{\alpha}}-I_{k+1}\p{\frac{2}{\alpha}}}\\
        &=\frac{e^{-\frac{2}{\alpha}}}{\alpha}\lim_{m\to \infty }\sum_{k=1}^{m}\p{I_{0}\p{\frac{2}{\alpha}}+I_{1}\p{\frac{2}{\alpha}}-I_{m}\p{\frac{2}{\alpha}}-I_{m+1}\p{\frac{2}{\alpha}}}\\
         &\overset{(b)}{=} \frac{2e^{-\frac{2}{\alpha}}}{\alpha} \Big(I_{0}\Big(\frac{2}{\alpha}\Big)
         +I_{1}\Big(\frac{2}{\alpha}\Big)\Big),
    \end{align*}
    where $(a)$ uses the following recurrence relation for the Bessel functions \cite[Sec.~10.29]{olver2010nist}: 
    \[\frac{2k}{x}I_{k}(x)=I_{k-1}(x)-I_{k+1}(x), \quad k\in \Z\]
    and $(b)$ follows since $I_{m}(1/\alpha)\xrightarrow[m\to\infty]{}0$ as $P_{\s{S}}(K)$ sums to $1$. Combining the above with \eqref{eq:expCalc1} we conclude:
    \begin{align*}
        \lim_{N\to\infty }\frac{1}{N}\E\pp{d(\su{X},\su{Z})}&=\lim_{N\to\infty }\frac{\floor{\alpha N}}{2N}\cdot \E\abs{\s{X}_1-\s{Z}_1}=\frac{\alpha}{2} \cdot \lim_{N\to\infty } \E\abs{\s{X}_1-\s{Z}_1}\\
        &=\frac{\alpha}{2} \cdot \E\pp{|\s{X}-\s{Z}|}= e^{-\frac{2}{\alpha}}\p{I_{0}
        \Big(\frac{2}{\alpha}}+I_{1}\p{\frac{2}{\alpha}}\Big). \qedhere
    \end{align*}
\end{proof}

We can now prove Proposition~\ref{prop:randEll_1Multi}:
\begin{proof}[Proof of Proposition~\ref{prop:randEll_1Multi}] The proof uses the notation and closely follows the logic of the proof of Proposition~\ref{prop:randEll_1} with the only change in the evaluation of $\P[\s{Y}_{i,j}=1]=\P[d(\su{n}^{(i)},\su{n}^{(j)})<t]$, which we now bound using McDiarmid's inequality. Indeed, let $\ppp{\s{Z}_{i,l} ~:~ i=1,\dots,N, l=1,\dots, L}$ be the $\s{Unif}([q])$ i.i.d. random variables such that $\su{n}^{(i)}$ is defined by $(\s{Z}_{1,i},\dots, \s{Z}_{N,i})\triangleq \su{Z}_i$ via \eqref{eq:MultinomialDist}. Consider the function $f:[q]^{2N}\to \R_+$ defined by 
\[f\p{\s{Z}_{1,i},\dots, \s{Z}_{N,i},\s{Z}_{1,j},\dots, \s{Z}_{N,j}}=\frac{1}{N}d\p{\su{n}^{(i)},\su{n}^{(j)}}.\]
We note that $f$ satisfies the bounded differences property \eqref{eq: bounded differences} with $c_1=c_2=\cdots = c_{2N}=\frac{2}{N}$. Indeed, changing the bin of one ball will increase one coordinate of the corresponding simplex vector by $1$ and decrease another coordinate by $1$. Setting $t=\floor{\delta N}$ and $q=\floor{\alpha N}$, we have that 
\begin{align*}
    \P[ d(\su{n}^{(i)},\su{n}^{(j)})]<\floor{\delta N}]&\leq \P\Big[|d(\su{n}^{(i)},\su{n}^{(j)})-\E \,d(\su{n}^{(i)},\su{n}^{(j)})|>\E\,d(\su{n}^{(i)},\su{n}^{(j)})-\floor{\delta N}\Big]\\[.1in]
    &\overset{(a)}{=}\P\Big[\Big| \frac{1}{N}d(\su{n}^{(i)},\su{n}^{(j)})-\E\Big[\frac{1}{N}d(\su{n}^{(i)},\su{n}^{(j)})\Big]\Big|>R_{\alpha}-\delta +o(1)\Big]\\[.1in]
    &=\P\pp{\abs{f(\su{Z}_i,\su{Z}_j)-\E\pp{f(\su{Z}_i,\su{Z}_j)}}>R_{\alpha}-\delta +o(1)}\\[.1in]
    &\overset{(b)}{\leq} 2\exp\Big(-\frac{2(\Delta_\alpha -\delta +o(1))^2}{\sum_{i=1}^{2N}\frac{4}{N^2}}\Big)\\
    &=\exp\p{-\frac{N}{4}\p{(\Delta_\alpha-\delta)^2+o(1)}},
    \end{align*}
    where $(a)$ follows since $\frac{1}{N}\E\,d(\su{n}^{(i)},\su{n}^{(j)})=\Delta_\alpha+o(1) $ by Lemma~\ref{lem:Bessel}, and $(b)$ follows from  MacDiarmind's inequality (see Lemma~\ref{lem:McDiarmind}). We now conclude following the steps of Proposition~\ref{prop:randEll_1}:
\begin{align*}
    \P\pp{d(\s{C}_N^{\sr{M}})\geq \floor{\delta N}}&\geq 1-\sum_{i\neq j}\P\pp{\s{Y}_{i,j}=1}\leq 1-\binom{L}{2}\cdot \exp\p{-\frac{N}{4}\p{(R_\alpha-\delta)^2+o(1)}}\\
    &>1-2^{2 \log_2 L-N \frac{(R_{\alpha}-\delta)^2+o(1)}{4\ln 2}}=1-o(1),
\end{align*}
where the last equality from the assumption \eqref{eq:RandCond2}. This completes the proof.
\end{proof}

 In Figure~\ref{fig: random bounds}, we plot the curves $R_{\mathrm  GV}(\delta)$ and $R_{\mathrm  U}(\delta)$ for $\alpha=1$. The bound $R_{\mathrm  M}(\delta)$ is positive for $\delta\le 0.292$; however, the values of the rate are much smaller than the other two bounds, so we are not showing it in the same plot (the loss occurs because we rely on a concentration inequality). We also remark that the {\em code rate} as defined in \eqref{eq: rate} above is obtained by dividing the values $R_{\mathrm GV}$ etc. by $\frac1N\log_2 |S_{\alpha N, N}|\sim (1+\alpha) h_2(1/(1+\alpha))$, which does not change the conclusion regarding asymptotic goodness of the code sequences. We also observe that \cite[Thm.~10]{goyal2024gilbert} gives a better asymptotic bound for codes in the entire simplex, not accounting for the typical subset restriction.

\begin{figure}[h]
\centering\includegraphics[width=0.4\linewidth]{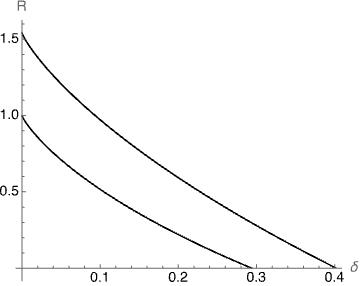}
\caption{Bounds $R_{\mathrm  GV}(\delta)$ (the upper curve) and $R_{\mathrm U}(\delta)$ for $\alpha=1$. }\label{fig: random bounds}
\end{figure}

We conclude this section by showing that vectors drawn from the multinomial distribution typically have sub-logarithmic $\infty$-norm, mimicking the properties of the uniform distribution on the simplex. In fact, it turns out that the $\infty$-norm of a multinational random vector decays as $(\ln N)/\ln\ln N$, which is even faster 
than the decay rate of $\ln N$ for a uniform vector.
\begin{lemma}\label{lemma: sublog}
Let $\su{X}\sim \s{Multinomial}(\floor{\alpha N},N)$ be a random vector on the simplex. Then, for any fixed $\varepsilon>0$
\[\P\pp{\norm{\su{X}}_\infty\geq \frac{(1+\varepsilon)\ln N}{\ln\ln N}}=o(1). \]
\end{lemma}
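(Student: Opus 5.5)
The plan is a union bound over coordinates combined with a Chernoff/Poisson-type tail estimate for a single binomial coordinate. Write $q=\floor{\alpha N}$ and $B=B_N=\lceil (1+\varepsilon)\ln N/\ln\ln N\rceil$. By the definition of $\s{Multinomial}(q,N)$ in Construction~\ref{construction: uniform}, each coordinate $\s{X}_i$ is marginally $\sr{Binomial}(N,1/q)$, although the coordinates are dependent. Since we only need a union bound, the dependence is irrelevant:
\[
\P\pp{\norm{\su{X}}_\infty\geq B}\leq \sum_{i=0}^{q-1}\P\pp{\s{X}_i\geq B}= q\,\P\pp{\s{X}_0\geq B}.
\]
It therefore suffices to show $\P[\s{X}_0\geq B]=o(1/N)$.

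For the single-coordinate tail, I would use the elementary bound
\[
\P[\s{X}_0\geq B]\leq \binom{N}{B}q^{-B}\leq \frac{1}{B!}\Big(\frac{N}{q}\Big)^{B}.
\]
This holds because the event that some fixed set of $B$ balls all land in bin $0$ has probability $q^{-B}$, and we take a union over the $\binom NB$ such sets. Since $N/q\to 1/\alpha$, we have $(N/q)^B\le c^B$ for a constant $c$ depending only on $\alpha$. Multiplying by $q\le \alpha N$ then gives
\[
\P\pp{\norm{\su{X}}_\infty\geq B}\leq \alpha N\,\frac{c^B}{B!}.
\]

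The key step, and the only place requiring care, is the asymptotic estimate showing $N c^B/B!\to 0$. By Stirling, $\ln B!=B\ln B-B+O(\ln B)$. With $B\sim(1+\varepsilon)\ln N/\ln\ln N$ we get
\[
\ln B=\ln\ln N-\ln\ln\ln N+O(1),
\]
so
\[
B\ln B=(1+\varepsilon)\ln N\Big(1-\frac{\ln\ln\ln N}{\ln\ln N}+o(1/\ln\ln N)\Big)=(1+\varepsilon-o(1))\ln N.
\]
Meanwhile $B(1+\ln c)=O(\ln N/\ln\ln N)=o(\ln N)$. Hence
\[
\ln\Big(\alpha N\frac{c^B}{B!}\Big)\le \ln N-(1+\varepsilon-o(1))\ln N+o(\ln N)=-(\varepsilon-o(1))\ln N\to-\infty,
\]
which gives the claim, with polynomial decay of order $N^{-\varepsilon+o(1)}$.

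The main obstacle is bookkeeping the $\ln\ln\ln N$ correction in $\ln B$. This correction is exactly why the factor $(1+\varepsilon)$ is needed: with $\varepsilon=0$ the $-\ln\ln\ln N/\ln\ln N$ term would leave the exponent at $+o(\ln N)$ and the argument would fail. I would check that this correction is indeed $o(1)$ relative to $\varepsilon$. I would also handle the rounding of $B$ to an integer, noting that $\P[\s{X}_0\ge x]$ depends only on $\lceil x\rceil$ and the ceiling changes nothing asymptotically.
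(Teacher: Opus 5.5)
Your proof is correct. Its skeleton matches the paper's: a union bound over the $\floor{\alpha N}$ coordinates, then a single-coordinate tail of the form $c^B/B!$, then a Stirling computation giving $N^{-\varepsilon+o(1)}$. You differ in how you obtain the single-coordinate tail.

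The paper passes to the Poisson limit. It asserts $\P[\s{X}_i\ge k]\le \P[\sr{Poi}(1/\alpha)\ge k](1+o(1))$ ``by convergence in distribution'' and then applies Stirling to the Poisson tail. Convergence in distribution only gives additive control of $\P[\s{X}_i\ge k]$. After the union bound, however, one needs this probability to be $o(1/N)$ at a threshold $k=k_N\to\infty$. That is a relative comparison of two vanishing tails, so the paper's step implicitly needs an extra local binomial--Poisson estimate.

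Your bound $\P[\s{X}_0\ge B]\le\binom{N}{B}q^{-B}\le (N/q)^B/B!$ is exact and non-asymptotic, and it produces the same $c^B/B!$ shape directly. The only asymptotic input left is $B!\ge (B/e)^B$ together with your $\ln B=\ln\ln N-\ln\ln\ln N+O(1)$ bookkeeping. So your argument is both more elementary and fully rigorous. The Poisson route buys only the intuition that $\sr{Poi}(1/\alpha)$ is the limiting per-bin load, which is not needed for an upper bound.
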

\begin{proof}
    As shown in the proof of Lemma~\ref{lem:Bessel}, each coordinate $\s{X}_i$ is distributed as $\sr{Binomial}(N,p_N)$, with $p_N=\frac{1}{\floor{\alpha N}}$, which converge in distribution to a $\sr{Poi}(\frac{1}{\alpha})$ distribution. Thus, by the union bound
    \begin{align*}
        \P\pp{\norm{\su{X}}_\infty\geq \frac{(1+\varepsilon)\ln N}{\ln\ln N}}&=\P\pp{\bigcup_{i=0}^{\floor{\alpha N}-1}\ppp{\s{X}_i\geq \frac{(1+\varepsilon)\ln N}{\ln\ln N}}}\\
        &\leq \sum_{i=0}^{\floor{\alpha N}-1}\P\pp{\s{X}_i\geq \frac{(1+\varepsilon)\ln N}{\ln\ln N}}\\
        &\leq \alpha N\cdot \p{\P\pp{\sr{Poi}\p{\frac{1}{\alpha}}\geq \frac{(1+\varepsilon)\ln N}{\ln\ln N}}\cdot (1+o(1))},
    \end{align*}
    where the last inequality follows from convergence in distribution. It is therefore remains to show that 
    \begin{align}
        \P\pp{\sr{Poi}\p{\frac{1}{\alpha}}\geq \frac{(1+\varepsilon)\ln N}{\ln\ln N}}=o\p{\frac{1}{N}}\label{eq:poiProbTail}
    \end{align}
    Denote $\lambda=1/\alpha,$ and observe that for any $k\in \N$ by Stirling approximation
    \begin{align*}
     \P\pp{\sr{Poi}\p{\frac{1}{\alpha}}\geq k}   &=\sum_{n=k}^\infty e^{-\lambda }\frac{\lambda^n}{n!}= \lambda\frac{\lambda^k}{k!}\p{1+o\p{\frac{1}{k}}}\\
     &e^{-\lambda }\exp\p{-\lambda+k \ln \lambda-k\ln k+o(k)} \p{1+o\p{\frac{1}{k}}}\\
     &=\exp\p{-k\ln k(1+o(1))}.
    \end{align*}
    Setting $k=(1+\varepsilon)\ln N/\ln\ln N$ we obtain 
    \begin{align*}
        N\cdot \P\pp{\sr{Poi}\p{\frac{1}{\alpha}}\geq \frac{(1+\varepsilon)\ln N}{\ln\ln N}}
        &\leq \exp\p{-(1+\varepsilon)\cdot(\ln N)\p{1-o(1)}+\ln N}\\
        &=N^{-\varepsilon+o(1)}=o(1). \qedhere
    \end{align*}
\end{proof}

\section{Conclusion and outlook}
In this work, we studied quantum error correction for amplitude damping noise from two complementary perspectives. First, we refined the understanding of quantum  
codes' performance when used over the physically relevant, non-truncated AD channel. Second, we constructed new families of codes 
whose performance on this channel can be rigorously analyzed and shown to improve upon the known results. These two aspects, analysis of the physical noise model and explicit code constructions, are closely intertwined and together provide a more complete picture of error correction in this setting.

One conceptual message that emerges from our study concerns the relationship between the abstract framework of quantum coding theory and the performance of codes on concrete physical channels. The Knill–Laflamme conditions provide a powerful and general formalism for reasoning about correctable error sets, and they remain a cornerstone of quantum error-correction theory. At the same time, our results highlight the importance of carefully analyzing how a code behaves under the actual physical noise model of interest. When this connection is not examined directly, one may obtain codes that satisfy formal QECC conditions yet provide limited protection against the underlying physical channel. In this sense, our analysis can be viewed as an amplitude-damping/Fock-state analogue of the well-known approximation principle whereby general independent noise processes are related to restricted error models (see, e.g., \cite[Theorem 1.3.4]{gottesman2024surviving}).

Another observation arising from this work concerns the role of randomness in approximate quantum error correction. Much of the existing literature introduces randomness directly at the level of the \textit{physical Hilbert space}. A canonical example is the framework of Haar-random codes, where the encoding subspace itself is chosen as a random subspace of the physical system \cite{klesseApproximateQuantumError2007,kong2022near,ma2025haar}. A related perspective appears in the construction of codes from many-body eigenstates of translation-invariant spin chains \cite{brandao2019quantum}, where the code space is formed by randomly choosing physical eigenstates of a quantum Hamiltonian within a suitable energy window.

In contrast, the randomness in our construction appears at a different structural level. Rather than randomizing the physical code subspace, we introduce randomness in the \textit{logical indexing geometry} from which the codewords are generated. Concretely, the codewords are indexed by points in an underlying classical metric space (the discrete simplex equipped with the $\ell_1$ metric), and randomness is applied to this geometric structure. The performance guarantees in our analysis arise from concentration phenomena on the metric space. While our construction focuses on the $\ell_1$ geometry of the simplex, the mechanism driving the result is not tied to this specific setting. Similar concentration effects are known to arise in a wide variety of high-dimensional metric spaces, suggesting that introducing randomness at the level of the geometric or combinatorial structure underlying a quantum code may provide a flexible and broadly applicable design principle. Investigating how this perspective can be leveraged in other coding geometries and for different physical noise models remains an intriguing direction for future work.

We conclude with a remark concerning spin codes. The errors in spin codes are typically modeled as small random rotations that transform spin states according to the Lindbladian master equation. We note that in the asymptotic regime, one can address the question similar to the one
treated in this work for the AD channel, namely, relating the continuous quantum channel generated by the Lindbladian to a truncated error set defined by a subset of generators of the Lie algebra $\mathfrak{s u}(q)$. We hope to address this problem in future research.

\appendix
\section{Technicalities and auxiliary results}
In this section, we give formal definitions and technical results that we shall use throughout the paper. 
\subsection{Channels and state norms}
\begin{definition}[Trace norm, Diamond norm, Completely Bounded norm]\label{def:norms}
    Let $\calH$ and $\calH'$ be finite-dimensional Hilbert spaces. Let $X:\calH\to \calH'$ and $\calN:L(\calH)\to L(\calH')$ linear maps. 
    \begin{enumerate}
        \item The trace norm on $L(\calH)$ is defined by 
        \[ \norm{X}_1\triangleq \tr(|X|), \quad |X|\triangleq \sqrt{X^\dag X},\]
        and its just the sum of absolute values of singular values of $X$. 
        \item The completely bounded norm of $\calN$ is defined as
        \[\norm{\calN}_{\s{cb}}=\sup_{\substack{X\in L(\calH^{\otimes 2})\\ \norm{X}_1\leq 1}}\norm{\p{I_{L(\calH)}\otimes \calN} (X)}_1. \]
        \item The diamond norm of $\calN$ is defined as 
        \[\norm{\calN}_{\diamond}=\sup_{\rho\in D(\calH^{\otimes 2})}\norm{\p{I_{L(\calH)}\otimes \calN} (\rho)}_1. \]
    \end{enumerate}
\end{definition}
Note that it is immediate from the definition of the diamond and completely bounded norms that $\norm{\calN}_\diamond\leq \norm{\calN}_{\s{cb}}$.
An extension of Fuchs–van de Graaf inequalities (see \cite[Section 6.2]{khatri2020principles}) implies that for two quantum channels $\calN,\calM$ we have
\[ 1-\sqrt{\calF_e(\calN,\calM)}\leq \frac{1}{2}\norm{\calN-\calM}_{\diamond} \leq \sqrt{1-\calF_e(\calN,\calM)}.\]
Translating to Bures distance, the above implies the following: 
\begin{lemma}\label{lem:DiamonaBures}
    For any two quantum channels $\calN,\calM$ we have 
    \begin{align}
     \frac{1}{2}\norm{\calN-\calM}_\diamond \leq d(\calN,\calM)\leq \sqrt{\norm{\calN-\calM}_\diamond}.
     \label{eq:diamond-Bures}
\end{align}
\end{lemma}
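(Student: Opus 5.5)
The plan is to derive both inequalities in \eqref{eq:diamond-Bures} directly from the Fuchs--van de Graaf-type inequality quoted just above,
\[
1-\sqrt{\calF_e(\calN,\calM)}\leq \tfrac{1}{2}\norm{\calN-\calM}_{\diamond}\leq \sqrt{1-\calF_e(\calN,\calM)},
\]
together with the definition $d(\calN,\calM)=\sqrt{1-\calF_e(\calN,\calM)}$. No further structure of the channels is used; the argument is elementary manipulation of numbers in $[0,1]$. Write $F:=\calF_e(\calN,\calM)\in[0,1]$ and $D:=\norm{\calN-\calM}_\diamond$.

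\textbf{Left inequality.} The right half of the quoted inequality reads $\tfrac12 D\le\sqrt{1-F}=d(\calN,\calM)$. This is exactly the left inequality of \eqref{eq:diamond-Bures}, so nothing further is needed.

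\textbf{Right inequality.} Start from the left half of the quoted inequality, $1-\sqrt F\le \tfrac12 D$. Since $0\le\sqrt F\le 1$, we have
\[
1-F=(1-\sqrt F)(1+\sqrt F)\le 2(1-\sqrt F)\le D.
\]
Taking square roots of both (nonnegative) sides gives $d(\calN,\calM)=\sqrt{1-F}\le\sqrt{D}$, which is the right inequality.

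The only point requiring care is bookkeeping rather than any real obstacle. One must use the square-root fidelity convention $\calF(\rho,\tau)=\norm{\sqrt\rho\sqrt\tau}_1$ fixed in \eqref{eq: EF}, since this is the convention under which the quoted inequality holds as written. One must also check that the infimum over purified inputs defining $\calF_e$ matches the supremum over inputs defining $\norm{\cdot}_\diamond$. That matching is exactly the content of the cited extension to channels, so I would simply invoke it and not re-derive it.
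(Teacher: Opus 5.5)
Your derivation is the paper's own one-line ``translating to Bures distance'' step, and the algebra is right, including $1-F=(1-\sqrt F)(1+\sqrt F)\le 2(1-\sqrt F)$. The problem is the one point you single out as needing care: you have it backwards, and this is a genuine error, not just bookkeeping. The quoted bound $1-\sqrt{\calF_e}\le\frac12\norm{\calN-\calM}_\diamond\le\sqrt{1-\calF_e}$ is the Fuchs--van de Graaf inequality for the \emph{squared} fidelity $\norm{\sqrt\rho\sqrt\tau}_1^2$, which is the convention of \cite{khatri2020principles}. Its upper half does not hold for the root fidelity $\norm{\sqrt\rho\sqrt\tau}_1$ written in \eqref{eq: EF}. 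In that convention the left inequality of \eqref{eq:diamond-Bures} is false. Take the replacement channels $\calN(\rho)=\tr(\rho)\ket{\psi}\bra{\psi}$ and $\calM(\rho)=\tr(\rho)\ket{\phi}\bra{\phi}$ with $f=\abs{\braket{\psi|\phi}}\in(0,1)$. Every input yields $\rho_R\otimes\ket{\psi}\bra{\psi}$ versus $\rho_R\otimes\ket{\phi}\bra{\phi}$. The root entanglement fidelity is therefore $f$, while $\frac12\norm{\calN-\calM}_\diamond=\sqrt{1-f^2}>\sqrt{1-f}=d(\calN,\calM)$.

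So your proof, like the lemma itself, is valid only once you declare $\calF(\rho,\tau)=\norm{\sqrt\rho\sqrt\tau}_1^2$. You should say that explicitly rather than assert the opposite; the paper's text carries the same mismatch between \eqref{eq: EF} and the quoted bound. Under the root convention, the correct substitute for the left inequality is $\frac12\norm{\calN-\calM}_\diamond\le\sqrt{1-\calF_e^2}\le\sqrt2\,d(\calN,\calM)$. The right inequality is the only half used later, in Proposition~\ref{prop:KLapprox1}, and it holds in either convention. For root fidelity, the standard bound $1-\calF_e\le\frac12\norm{\calN-\calM}_\diamond$ gives $d(\calN,\calM)^2\le\frac12\norm{\calN-\calM}_\diamond$ directly. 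For squared fidelity, your factorization argument applies.
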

We shall also require the Duality principle and a well-known inequality between norms (see \cite[Section 1.1, pp 32]{watrous2018theory}):
\begin{lemma}[The duality principle]\label{lem:Duality}
    For any $X,Y\in L(H_A)$ we have
    \[\tr(Y^\dag X)\leq \norm{X}_1\cdot \norm{ Y}_\infty\]
\end{lemma}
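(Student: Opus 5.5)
The inequality $\tr(Y^\dag X)\leq \norm{X}_1\norm{Y}_\infty$ (which we read as a bound on $\abs{\tr(Y^\dag X)}$, the form actually used in \eqref{eq: sup B}) is a Hölder-type estimate. The plan is to reduce it to the singular value decomposition of $X$.

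First, write $X=\sum_k s_k \ket{u_k}\bra{v_k}$. Here $s_k\geq 0$ are the singular values, and $\{\ket{u_k}\}$, $\{\ket{v_k}\}$ are orthonormal families; this is possible since $H_A$ is finite-dimensional. With this decomposition, $\norm{X}_1=\sum_k s_k$.

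Second, expand the trace by linearity:
\[
\tr(Y^\dag X)=\sum_k s_k \tr\p{Y^\dag\ket{u_k}\bra{v_k}}=\sum_k s_k \bra{v_k}Y^\dag\ket{u_k}.
\]
Each summand obeys
\[
\abs{\bra{v_k}Y^\dag\ket{u_k}}\leq \norm{v_k}\,\norm{Y^\dag u_k}\leq \norm{Y^\dag}_\infty=\norm{Y}_\infty,
\]
by Cauchy–Schwarz and the definition of the spectral norm as the operator norm. We use here that $Y$ and $Y^\dag$ have the same singular values.

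Finally, the triangle inequality gives
\[
\abs{\tr(Y^\dag X)}\leq \sum_k s_k\norm{Y}_\infty=\norm{X}_1\norm{Y}_\infty.
\]
The real part of $\tr(Y^\dag X)$ is bounded by its absolute value, so this also covers the case where the left-hand side is read as a real number.

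There is no real obstacle. The only point needing care is to state the result with an absolute value, since $\tr(Y^\dag X)$ is in general complex. The steps needed are the existence of the SVD and the identification $\norm{Y^\dag}_\infty=\norm{Y}_\infty$, both standard.
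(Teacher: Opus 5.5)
Your proof is correct. Expanding $X=\sum_k s_k\ket{u_k}\bra{v_k}$ and bounding each $\abs{\bra{v_k}Y^\dag\ket{u_k}}$ by $\norm{Y}_\infty$ via Cauchy--Schwarz gives the inequality completely, and it needs nothing beyond the singular value decomposition. Your insistence on an absolute value is also right. The paper applies the lemma in \eqref{eq: sup B} in the form $\sup_{\norm{X}_1\le 1}\abs{\tr(B_{l,l}X)}\le\norm{B_{l,l}}_\infty$. Since $B_{l,l}$ is Hermitian, $\tr(B_{l,l}X)=\tr(B_{l,l}^\dag X)$, so your version covers that use.

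The paper does not prove the lemma itself; it cites the H\"older/duality inequality for Schatten norms from Watrous's book. That reference covers every conjugate pair $(p,q)$ and also shows that $\norm{\cdot}_1$ and $\norm{\cdot}_\infty$ are dual norms, so the bound is attained. Your argument proves only the $(1,\infty)$ case, but that is the only case the paper uses, and it makes the appendix self-contained.
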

\begin{lemma}\label{lem:HermitianPreserving}
    Let $\calB$ be the channel from Theorem~\ref{th:BenyOreshkov}. There exists $\ket{\psi}\in Q^{\otimes 2}$ such that \[\norm{\calB}_{\diamond}=\norm{\calB}_\s{cb}=\norm{I_{L\p{\calH}}\otimes \calB (\ket{\psi}\bra{\psi})}_1.\] 
\end{lemma}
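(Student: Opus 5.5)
The plan is to use the standard fact that for a Hermitian-preserving map the diamond norm and the cb norm coincide and are attained on a rank-one input. Then we show that the maximizing vector can be taken in $Q^{\otimes 2}$.

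\emph{Hermiticity.} First I observe that $\calB$ is Hermitian preserving. The matrices $B_{k,l}=PA_k^\dag A_lP-\lambda_{k,l}P$ satisfy $B_{k,l}^\dag=B_{l,k}$, because $\lambda$ is a density matrix and hence $\overline{\lambda_{k,l}}=\lambda_{l,k}$. For Hermitian $X$ this gives
\[
\calB(X)^\dag=\sum_{k,l}\overline{\tr(B_{k,l}X)}\ket{l}\bra{k}=\sum_{k,l}\tr(B_{l,k}X)\ket{l}\bra{k}=\calB(X).
\]
By Watrous (Thm.~3.51 / Prop.~3.44 in \cite{watrous2018theory}), for Hermitian-preserving maps the supremum defining $\norm{\calB}_{\s{cb}}$ may be restricted to Hermitian $X$ with $\norm{X}_1\le 1$.

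\emph{Reduction to a unit vector.} Such an $X$ can be written as a real combination $\sum_j \mu_j\ket{\phi_j}\bra{\phi_j}$ with $\sum_j|\mu_j|\le 1$. Convexity of the trace norm then shows that the supremum is attained at a rank-one projector $\ket{\phi}\bra{\phi}$ with $\ket{\phi}\in\calH^{\otimes 2}$. Since rank-one projectors are density operators, this also gives $\norm{\calB}_\diamond\ge\norm{\calB}_{\s{cb}}$; together with the trivial reverse inequality noted after Definition~\ref{def:norms}, the two norms are equal. Attainment of the supremum follows by compactness of the unit sphere in finite dimensions. If $\calH$ is infinite-dimensional, as for the AD channel, I restrict to the finite-dimensional $\calH_{q,N}$, or argue directly on $Q$ as in the next step.

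\emph{Localization to $Q$.} This step is the main obstacle: the maximizer must lie in $Q^{\otimes 2}$, i.e.\ in the support of the second factor $Q$ (the ancilla first factor can be taken to be a copy of $Q$). The key point is that $B_{k,l}=PB_{k,l}P$, so $\calB(X)=\calB(PXP)$, and hence
\[
(I\otimes\calB)(\ket{\phi}\bra{\phi})=(I\otimes\calB)(\ket{\phi'}\bra{\phi'}),\qquad \ket{\phi'}=(I\otimes P)\ket{\phi},\quad \norm{\phi'}\le 1 .
\]
Renormalizing $\ket{\phi'}$ can only increase the trace norm. So we may assume the second tensor factor of $\ket{\phi}$ lies in $Q$.

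Next, take the Schmidt decomposition $\ket{\phi}=\sum_i p_i\ket{u_i}\otimes\ket{i'}$ with $\ket{i'}\in Q$ and at most $K=\dim Q$ terms. Then apply the isometry mapping $\mathrm{span}\{\ket{u_i}\}$ onto $Q$ on the first factor. Isometries preserve the trace norm, so this produces $\ket{\psi}\in Q\otimes Q$ with the same value of the trace norm. That gives the claimed identity $\norm{\calB}_{\diamond}=\norm{\calB}_{\s{cb}}=\norm{I_{L(\calH)}\otimes \calB (\ket{\psi}\bra{\psi})}_1$.
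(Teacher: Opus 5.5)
Your proposal is correct and follows essentially the same route as the paper. Both arguments use Hermitian preservation together with Watrous's Theorem~3.51 to get $\norm{\calB}_\diamond=\norm{\calB}_{\s{cb}}$ with a rank-one maximizer, then compress the second factor by $P$ (using $B_{k,l}=PB_{k,l}P$) and renormalize to land in $\calH\otimes Q$. The differences are minor: you check Hermitian preservation directly from $B_{k,l}^{\dagger}=B_{l,k}$ rather than viewing $\calB$ as a difference of two channels, and you shrink the reference system to $Q$ by a Schmidt decomposition plus an isometry instead of citing the fact that the diamond norm is attained with a reference of the input dimension.
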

\begin{proof}
    We begin by noticing a simple fact: if $\calN:L(\calH)\to L(\calH') $ is a Hermitian-preserving map (which means if $\rho$ is Hermitian, then so is $\calN(\rho)$). Then, for some state $\ket{\psi}\in \calH^{\otimes 2}$,
    \[\norm{\calN}_{\diamond}=\norm{\calN}_\s{cb}=\norm{I_{L\p{\calH}}\otimes \calN (\ket{\psi}\bra{\psi})}_1.\]  
    To see that, we recall that Hermitian-preserving operators, the completely bounded norm is maximized by a one-dimensional projection (see \cite[Theorem 3.51]{watrous2018theory}); in other words,
    \begin{align}
        \norm{\calN}_{\s{cb}}=\sup_{\substack{\ket{\psi}\in \calH^{\otimes 2}\\ \braket{\psi|\psi}=1}}\norm{I_{L\p{\calH}}\otimes \calN (\ket{\psi}\bra{\psi})}_1\leq \sup_{\rho\in D(\calH)} \norm{I_{L\p{\calH}}\otimes \calN (\rho)}_1=\norm{\calN}_{\diamond},\label{eq:CBdiamond}
    \end{align}
    and since the opposite inequality is trivially satisfied, equality holds. Note that, as it was shown in \cite{beny2010general}, $\calB$ is the difference between two quantum channels and therefore is Hermitian preserving. Thus, it only remains to show that we can find such a $\ket{\psi}$, satisfying \eqref{eq:CBdiamond}, which also belongs in $C^{\otimes 2}$. Indeed, note that for any $\rho=\sum_{i}\eta_i\otimes \tau_i\in L(\calH^{\otimes 2})$ we can write 
    \begin{align*}
        I_{L(\calH)}\otimes \calB (\rho)&=\sum_{k,l}\sum_i \tr(\tau_i (P A_k^\dag A_l P- \lambda_{k,l}P ))  \cdot \eta_i\otimes \ket{k}\bra{l}\\
        &=\sum_{k,l}\sum_i \tr((P\tau_iP) (P A_k^\dag A_l P- \lambda_{k,l}P )) \cdot \eta_i\otimes \ket{k}\bra{l}\\
        &= (I_{L(\calH)}\otimes \calB)\circ (I_{L(\calH)}\otimes \calP) (\rho),
    \end{align*}
    where $\calP(\tau)= P\tau P $. Note that $(I_{L(\calH)}\otimes \calP)$ is a completely positive trace non-increasing map, whose image is $L(\calH \otimes Q)$. Thus, for any $\rho\in D(\calH^{\otimes 2})$ outside of the kernel, we have that the operator 
    \[ \rho'=\frac{1}{\tr(I_{L(\calH)}\otimes \calP (\rho))}\cdot I_{L(\calH)}\otimes \calP (\rho)\]
    is a density operator in $D(\calH \otimes Q) $, and
    \[\norm{I_{L(\calH)}\otimes  \calB (\rho)}_1= \tr(I_{L(\calH)}\otimes \calP (\rho))\cdot \norm{I_{L(\calH)}\otimes \calB (\rho')}_1\leq \norm{I_{L(\calH)}\otimes \calB (\rho')}_1. \]
    Thus, 
    \begin{align}
        \norm{ \calB}_{\diamond}&=\sup_{\rho\in D(\calH^{\otimes 2})}\norm{I_{L(\calH)}\otimes \calB(\rho)}_1 \notag\\
        &\leq\sup_{\rho'\in D(\calH\otimes Q)}\norm{I_{L(\calH)}\otimes \calB(\rho')}_1 \notag\\
        &=\sup_{\tau \in D( Q^{\otimes 2})}\norm{I_{L(Q)}\otimes \calB(\tau)}_1 \label{eq:diamondC}\\
       &{=} \sup_{\ket{\psi_Q} \in Q}\norm{I_{L(Q)}\otimes \calB(\ket{\psi_Q}\bra{\psi_Q})}_1, \notag
    \end{align}
    where \eqref{eq:diamondC} follows since this expression is exactly the diamond norm of $\calB$ when considered as a channel on $L(Q)$, and the diamond norm is equivalently defined by an additional optimization on all auxiliary reference systems of any finite dimension (see \cite[Theorem 9.1.1]{wilde2013quantum}):
    \[\norm{N}_{\diamond}=\sup_{n\in \N}\sup_{\rho_n\in D(\C^n\otimes \calH)}\norm{I_{L(\C^n)}\otimes \calN(\rho_n)}_1.\]
    Finally, $(b)$ follows from \eqref{eq:CBdiamond}. This completes the proof.

\end{proof}

\subsection{Proof of Theorem~\ref{th:TruncToAD}}
    The idea of the proof is quite simple. Let $\calD_{\leq t}$ be a recovery operation for $Q$ under the truncated noise channel $\calN_{\leq t}$ such that $\calF_e(\calD_{\leq t} \circ\calN_{\leq t}|_Q,I_{L(Q)})\geq 1-\varepsilon^2$, which exists by the assumption. Consider the decoding operation $\calD$ given by the following procedure: 
    \begin{itemize}
        \item Given a state $\calN(\rho)$, perform a measurement to determine whether $\rho$ has lost no more than $t$ photons. 
        \item If that is the case,  use the decoder $\calD_{\leq t}$ on the resulting state. 
        \item Otherwise, decode $\calN(\rho)$ to some arbitrary state $\tau$. 
    \end{itemize}

Let us proceed to implement this idea.
    
{\em Defining the decoder}. Consider a quantum channel 
    \begin{align*}
    \calM: \calH_{q,N}&\to \Span\big(\bigcup_{r\leq N }\calH_{q,r}\big)\otimes \C^2\\
    \rho&\mapsto P_{\leq t} \,\rho\, P_{\leq t}\otimes \ket{\mathrm{0}}\bra{0} +\tr\p{\p{I-P_{\leq t}}\rho}\cdot\tau\otimes \ket{1}\bra{1},  
      \end{align*}
    where $\ket{0}$ and $\ket{1}$ are ancillas representing {\em success} and {\em fail}\ \  flag states, $P_{\leq t}$ is the projection on $\mathrm{Span}\p{\bigcup_{i=N-t}^N \calH_{q,i}} $, and  $\tau=\ket{\psi_\tau}\bra{\psi_\tau}$ is an arbitrary pure state in $Q$. It is easy to check that $\calM$ is completely positive,
    with Kraus operators 
    $$
    E_0=P_{\leq t}\otimes \ket{0}, \quad E_{\un}=\ket{\psi_\tau}\bra{\un}(I-P_{\leq t})\otimes \ket{1}, \quad \un\in \bigcup_{r\leq N}\calS_{q,r}.
    $$
$\calM$ is also trace preserving:
    \begin{align*}
        \tr(\calM(\rho))&= \tr(P_{\leq t} \,\rho\, P_{\leq t}\otimes \ket{\mathrm{0}}\bra{0}) +\tr\p{\p{I-P_{\leq t}}\rho}\cdot\tr(\tau\otimes \ket{1}\bra{1})\\
        &=\tr(P_{\leq t}\rho)+\tr((I-P_{\leq t})\rho)=\tr(\rho).
    \end{align*}
    Let $P_0$ be the projection on the success space $\mathrm{Span}\p{\bigcup_{r\leq N
    }\calH_{q,r}}\otimes \mathrm{Span}(\ket{0})$ and $P_1=I- P_0$. Define $\Tilde{\calD}:\mathrm{Span}\p{\bigcup_{r\leq N
    }\calH_{q,r}}\otimes \C^2 \to \calH_{q,N}$ given by 
    \[\Tilde{\calD}(\rho)= \calD_{\leq t}(\tr_{\C^2}(P_0 \rho P_0))+ \tr_{\C^2}(P_1 \rho P_1),\]
    where $\tr_{\C^2}(\cdot )$ denotes the operation of tracing out the success/failure qubit, given by partial trace (which is CPTP).
    Note that the maps  $\rho \to P_i \rho P_i, i=0,1$ are trivially CP, and thus each of the maps $\rho\mapsto \tr_{\C^2}(P_i \rho P_i)$ and 
    $\rho\mapsto \calD_{\leq t}(\tr_{\C^2}(P_0 \rho P_0))$  is also CP as a composition of CP maps (since $\calD_{\leq t}$ is a quantum channel, it is CPTP). 
    This shows that $\Tilde{\calD}$ is CP. Furthermore, since $\calD$ is CPTP we have
       $$
    \tr(\Tilde D(\rho)) = \tr\p{\calD_{\leq t}(\tr_{\C^2}(P_0\rho P_0))}+\tr\p{\tr_{\C^2}(P_1\rho P_1)}=\tr\p{P_0\rho P_0}+\tr\p{P_1\rho P_1}=\tr(\rho),
      $$
    To conclude, we have shown that $\Tilde{\calD}$ is CPTP. 
        We now define the decoding operation to be $\calD=\Tilde{\calD}\circ \calM$ and note that it is CPTP as a composition of two CPTP operators.

{\em Action on noisy states}.        
    For any state $\rho\in D(\calH_{q,N})$ and error operator $A_{\ur}\in \calS_{q,r}$, 
    \[A_{\ur}  \rho A_{\ur}^\dag \in \begin{cases}
        \calH_{q,N-r} & r\leq N \\
        \set{0} & r>N, 
    \end{cases}\]
    and in particular, 
    \[P_{\leq t} (A_{\ur}  \rho A_{\ur}^\dag) P_{\leq t} =\begin{cases}
        A_{\ur}  \rho A_{\ur}^\dag & r\leq t\\
        0 & r>t.
    \end{cases}\]
    Thus, 
    \begin{align*}
        \calM \circ \calN (\rho)&=\sum_{\substack{\ur\in \calS_{q,r}\\r\leq t}}  \calM(A_{\ur}  \rho A_{\ur}^\dag) + \sum_{\substack{\ur\in \calS_{q,r}\\r> t}}  \calM(A_{\ur}  \rho A_{\ur}^\dag)\\
        &=\sum_{\substack{\ur\in \calS_{q,r}\\r\leq t}}  A_{\ur}  \rho A_{\ur}^\dag\otimes \ket{0}\bra{0} + \tr\biggl(\sum_{\substack{\ur\in \calS_{q,r}\\r> t}}  A_{\ur}  \rho A_{\ur}^\dag\biggr)\cdot \tau \otimes\ket{1}\bra{1}
        \\&= p_{N,t}\cdot \calN_{\leq t}(\rho)\otimes \ket{0}\bra{0} + (1-p_{N,t})\cdot \tau\otimes \ket{1}\bra{1},
    \end{align*}
    where the last equality follows \eqref{eq:Trunc<t} and the completeness condition $\sum_{\ur\in\Z_0^q} A_{\ur}^\dag A_{\ur}=I$.
We can now derive the action of the decoder on noisy states on the output of $\calN$: 
    \begin{align}
       \notag \calD\circ \calN (\rho)&=\Tilde{D}\p{\calM \circ \calN (\rho)}\\
        &\notag=\Tilde{D}\p{ p_{N,t}\cdot \calN_{\leq t}(\rho)\otimes \ket{0}\bra{0} + (1-p_{N,t})\cdot \tau\otimes \ket{1}\bra{1}}\\
        &= p_{N,t}\cdot  \calD_{\leq t}\circ \calN_{\leq t}(\rho)+(1-p_{N,t})\cdot \tau.\label{eq:actionDN}
    \end{align}
    It is easy to verify that the unique linear map that acts on density operators as \eqref{eq:actionDN} acts on general linear maps $X\in L(Q)$ as 
       $$
    \calD\circ \calN (X)= p_{N,t}\cdot  \calD_{\leq t}\circ \calN_{\leq t}(X)+(1-p_{N,t})\tr(X)\cdot \tau.
      $$
 where $\calN(X)$ denotes the action of the AD channel on a general linear map.   
{\em Computing the entanglement fidelity.} For a density operator $\rho\in D(Q)$, let $\psi$ be its purification. Let $\ket{\psi}=\sum_{i=0}^{K-1} p_i \ket{i}\otimes \ket{i'}$ be a Schmidt decomposition of $\ket{\psi}$, where $\{\ket i\}$ is an orthonormal basis of $Q$,  $p_i\geq 0$ for all $i$,
and $\sum_i p_i^2=1$. We set out to compute the entanglement fidelity $\calF_e$, \eqref{eq: EF}. First, observe that
    \begin{align*}
        \calF&\p{I_{L(Q)}\otimes \calD\circ \calN(\ket{\psi}\bra{\psi}),\ket{\psi}\bra{\psi}}\\
        &\qquad =\bra{\psi} I_{L(Q)}\otimes\calD\circ \calN(\ket{\psi}\bra{\psi}) \ket{\psi}\\
        &\qquad =\bra{\psi} I_{L(Q)}\otimes\calD\circ \calN\Big(\sum_{i,j}p_ip_j \ket{i}\bra{j}\otimes \ket{i'}\bra{j'}\Big) \ket{\psi}\\
        &\qquad =\bra{\psi}\Big(\sum_{i,j}p_ip_j \ket{i}\bra{j}\otimes  \Big(p_{N,t\cdot }\calD_{\leq t}\circ \calN_{\leq t}\p{\ket{i'}\bra{j'}}+(1-p_{N,t})\cdot\tau\Big)\Big)\ket{\psi}\\
        &\qquad =\bra{\psi}\Big(\sum_{i,j}p_ip_j \ket{i}\bra{j}\otimes  \Big(p_{N,t\cdot }\calD_{\leq t}\circ \calN_{\leq t}\p{\ket{i'}\bra{j'}}+(1-p_{N,t})\tr(\ket{i'}\bra{j'})\cdot\tau\Big)\Big) \ket{\psi}\\
        &\qquad =  p_{N,t}\cdot \bra{\psi} I_{L(Q)}\otimes \calD_{\leq t}\circ \calN_{\leq t}(\ket{\psi}\bra{\psi}) \ket{\psi}+ (1-p_{N,t})\cdot\bra{\psi} \Tilde{\tau} \ket{\psi},\\
\intertext{where}
     &\hspace*{1.7in}\Tilde{\tau}=\rho_{\psi}\otimes \tau, \quad  \rho_{\psi}=\sum_{i=0}^{K-1}p_i^2\cdot \ket{i}\bra{i}. 
       \end{align*}
Since $\Tilde{\tau}$ is positive semidefinite, we have $\bra{\psi}\Tilde{\tau}\ket{\psi}\ge 0$, so the second term can be discarded, and we obtain
    \begin{align*}
        \calF\p{I_{L(Q)}\otimes \calD\circ \calN(\ket{\psi}\bra{\psi}),\ket{\psi}\bra{\psi}}&\geq p_{N,t}\cdot \bra{\psi} I_{L(Q)}\otimes \calD_{\leq t}\circ \calN_{\leq t}(\ket{\psi}\bra{\psi}) \ket{\psi}\\
        &\geq p_{N,t}\cdot (1-\varepsilon^2),
    \end{align*}
where the last inequality follows from the assumption of the theorem (that the entanglement fidelity for $\calD_{\leq t}$ is close to 1 under the action
of the truncated noise $\calN_{\leq t}$. Since this is true for an arbitrary pure state in $C^{\otimes 2}$, it is also true for the infimum in \eqref{eq: EF}, completing the forward part.

    Let us now prove the converse. Assume that $Q$ is $\varepsilon$-AQECC for $\calN$, with a decoding operation $\calD$ such that $\calF_{e}(\calD\circ \calN|_Q,I_{L(Q)} )\geq 1-\varepsilon^2$. We will show that 
    \[\calF_{e}(\calD\circ \calN_{\leq t}|_Q,I_{L(Q)} )\geq 1-\frac{\varepsilon^2}{p_{N,t}}.\]
We will also use the ``complementary channel'', with more than $t$ errors, defined as
    \begin{align}
        \calN_{>t}(\rho)=\frac{1}{1-p_{N,t}}\cdot \sum_{\substack{\ur \in \calS_{q,r}\\ r>t }}A_{\ur}\rho A_{\ur}^\dag.\label{eq:Trunc>t}
    \end{align}
By the same argument as in Proposition~\ref{obs:quantumChannelTrunc}, we claim that $\calN_{>t}$ is a quantum channel on $\calH_{q,N}$ (as for any $\rho\in L(Q)$ we have $\tr(\rho)=1-p_{N,t}$). We also observe that  
       $$
       \calN(\rho)=p_{N,t}\cdot \calN_{\leq t}(\rho)+(1-p_{N,t})\cdot \calN_{>t}(\rho).
       $$
Fix a pure state $\ket{\psi}\in Q^{\otimes 2}$. We will show that
    \begin{align}
        \calF\p{I_{L(Q)}\otimes \calD\circ \calN_{\leq t}(\ket{\psi}\bra{\psi}),\ket{\psi}\bra{\psi}}\geq 1-\frac{\varepsilon^2}{p_{N,t}}.\label{eq:fiDelConverse}
    \end{align}
Note that we may write $I_{L(Q)}\otimes \calD\circ \calN=(I_{L(Q)}\otimes \calD)\circ (I_{L(Q)}\otimes \calN) $. Thus, since fidelity is linear whenever one of the involved states is pure, we have: 
    \begin{align*}
        1-\varepsilon^2&\leq \calF\p{I_{L(Q)}\otimes \calD\circ \calN(\ket{\psi}\bra{\psi}),\ket{\psi}\bra{\psi}}\\&
        =\calF\p{(I_{L(Q)}\otimes \calD)\circ (I_{L(Q)}\otimes \calN)(\ket{\psi}\bra{\psi}),\ket{\psi}\bra{\psi}}\\
        &=\calF\p{(I_{L(Q)}\otimes \calD)\circ (I_{L(Q)}\otimes (p_{N,t}\cdot \calN_{\leq t}+(1-p_{N,t})\calN_{>t}))(\ket{\psi}\bra{\psi}),\ket{\psi}\bra{\psi}}\\
        &=p_{N,t}\cdot \calF\p{(I_{L(Q)}\otimes \calD)\circ (I_{L(Q)}\otimes  \calN_{\leq t})(\ket{\psi}\bra{\psi}),\ket{\psi}\bra{\psi}}\\
        &\quad +(1-p_{N,t})\cdot \calF\p{(I_{L(Q)}\otimes \calD)\circ (I_{L(Q)}\otimes  \calN_{> t})(\ket{\psi}\bra{\psi}),\ket{\psi}\bra{\psi}}\\
        &\leq p_{N,t}\cdot \calF\p{I_{L(Q)}\otimes \calD\circ\calN_{\leq t} (\ket{\psi}\bra{\psi}),\ket{\psi}\bra{\psi}}+(1-p_{N,t}),
    \end{align*}
    where the last inequality follows since $I_{L(Q)}\otimes \calD \circ \calN_{>t}$ is a quantum channel, and in particular  $I_{L(Q)}\otimes \calD\circ \calN_{\leq t} (\ket{\psi}\bra{\psi})$ is a quantum state (and therefore the fidelity is bounded by $1$). Rearranging the above inequality, we obtain \eqref{eq:fiDelConverse}, which completes the proof.

\subsection{Typical vectors in the simplex}

For $\ux\in \calS_{q,N}$ let 
     $$
     \supp(\ux)=\ppp{i\in [q] ~:~ x_i\neq 0}, \quad \norm{\ux}_\infty=\max_{i\in [q]} |x_i|. 
     $$

\begin{lemma}[Tail bound for $\infty$-norm on the simplex]\label{lem:Tailmax}    Let $\su{X}=(\s{X}_0,\dots,\s{X}_{q-1})$ be a uniform random vector in $\calS_{q,N}$, $q\in \N$, and let $B\in \{0,1,\dots,N\}$ be an integer. Then 
    \[\P\pp{\norm{\su{X}}_{\infty} \geq B}\leq q \cdot \frac{\binom{N-{B}+q-1}{q-1}}{\binom{N+q-1}{q-1}}. \]
\end{lemma}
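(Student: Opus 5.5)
The plan is a union bound over the $q$ coordinates combined with an exact count of simplex points having one large coordinate. Since $\norm{\su{X}}_\infty\ge B$ holds exactly when $\s{X}_i\ge B$ for some $i\in[q]$, we have
\[
\P\pp{\norm{\su{X}}_\infty\ge B}\le \sum_{i=0}^{q-1}\P\pp{\s{X}_i\ge B}.
\]
Because the uniform distribution on $\calS_{q,N}$ is invariant under permutations of the coordinates, every term equals $\P[\s{X}_0\ge B]$. The bound therefore reduces to $q\,\P[\s{X}_0\ge B]$.

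To evaluate $\P[\s{X}_0\ge B]$, I would count the vectors $\ux\in\calS_{q,N}$ with $x_0\ge B$ by the substitution $x_0'=x_0-B$. This is a bijection between $\{\ux\in\calS_{q,N}: x_0\ge B\}$ and $\calS_{q,N-B}$. The inverse map adds $B$ to the first coordinate, and $B\le N$ guarantees that $\calS_{q,N-B}$ is well defined. By the balls-and-bins count, this set has $\binom{N-B+q-1}{q-1}$ elements.

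Dividing by $|\calS_{q,N}|=\binom{N+q-1}{q-1}$ gives the exact value of $\P[\s{X}_0\ge B]$. Multiplying by $q$ yields the stated inequality.

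No step is a real obstacle; the argument only needs to check that the shift is a bijection and to handle the edge cases $B=0$ and $B=N$, both of which are immediate.
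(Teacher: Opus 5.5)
Your proposal is correct and takes essentially the same approach as the paper. The paper also computes $\P[\s{X}_i\ge B]=\binom{N-B+q-1}{q-1}/\binom{N+q-1}{q-1}$ exactly, by pre-placing $B$ balls in bin $i$ (which is your shift bijection), and then applies the union bound over the $q$ coordinates.
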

\begin{proof}
    Let us fix some $i\in [q]$ and bound the probability $\P\pp{\s{X}_i\geq B}$.   We start by observing that 
    \begin{align*}
        \P\pp{\s{X}_i\geq B}&= \frac{\abs{\set{\un\in \calS_{q,N} ~:~ n_i\geq B}}}{\abs{\calS_{q,N}}}=\frac{\binom{N-{B}+q-1}{q-1}}{\binom{N+q-1}{q-1}},
    \end{align*}
    where the last equality  
    follows since once $B$ balls are placed, we are left to distribute
    $N-B$ balls among $q$ bins.
    Next,
    $$
    \P[\max_i \s{X}_i \geq B]\leq \sum_{i=1}^q  \P\pp{\s{X}_i\geq B} = q \cdot \frac{\binom{N-{B}+q-1}{q-1}}{\binom{N+q-1}{q-1}}. \qedhere
    $$
\end{proof}

\begin{lemma}[Tail bound in the support size  on the simplex]\label{lem:inftyNormBound}
       Let $\su{X}=(\s{X}_0,\dots,\s{X}_{q-1})$ be a uniform random vector in $\calS_{q,N}$, $q\in \N$, and let $\varepsilon>0$. Let $\s{S}=\frac{1}{q}|\supp(\su{X})|$ Then 
    \begin{equation}\label{eq: support}
    \P\pp{\abs{\s{S}- \frac{N}{N+q-1}}>\varepsilon}=\P\pp{\abs{\s{S}-\E[S]}>\varepsilon}\leq \frac{N(N-1)(q-1)}{\varepsilon^2 q(N+q-1)^2(N+q-2)}. 
    \end{equation}
\end{lemma}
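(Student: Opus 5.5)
We write $\s{S}=\frac1q\sum_{i=0}^{q-1}\s{I}_i$, where $\s{I}_i=\Ind_{\{\s{X}_i\neq 0\}}$, so both claims reduce to the first two moments of these indicators followed by Chebyshev's inequality. All the required probabilities come from the balls-and-bins count already used in Lemma~\ref{lem:Tailmax}.

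\emph{Mean.} The number of $\un\in\calS_{q,N}$ with $n_i=0$ is $\binom{N+q-2}{q-2}$, the number of ways to distribute $N$ balls into the remaining $q-1$ bins. Hence
\[
\P[\s{I}_i=0]=\binom{N+q-2}{q-2}\Big/\binom{N+q-1}{q-1}=\frac{q-1}{N+q-1},
\]
so $\E[\s{I}_i]=\frac{N}{N+q-1}$. By linearity $\E[\s{S}]=\frac{N}{N+q-1}$, which gives the equality in \eqref{eq: support}.

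\emph{Second moment.} For $i\neq j$, counting vectors with $n_i=n_j=0$ gives
\[
\P[\s{I}_i=\s{I}_j=0]=\binom{N+q-3}{q-3}\Big/\binom{N+q-1}{q-1}=\frac{(q-1)(q-2)}{(N+q-1)(N+q-2)}.
\]
Write $p=\E[\s{I}_i]$. Then $\Cov(\s{I}_i,\s{I}_j)=\Cov(1-\s{I}_i,1-\s{I}_j)=\P[\s{I}_i=\s{I}_j=0]-(1-p)^2$. Combining the two fractions over the common denominator should give
\[
\Cov(\s{I}_i,\s{I}_j)=-\frac{N(q-1)}{(N+q-1)^2(N+q-2)},
\]
which is negative. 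The diagonal terms are $\var(\s{I}_i)=p(1-p)=\frac{N(q-1)}{(N+q-1)^2}$.

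\emph{Variance of $\s{S}$.} Summing,
\[
\var(\s{S})=\frac{1}{q^2}\Big(q\,p(1-p)+q(q-1)\Cov(\s{I}_0,\s{I}_1)\Big)
=\frac{N(q-1)}{q(N+q-1)^2}\Big(1-\frac{q-1}{N+q-2}\Big)
=\frac{N(N-1)(q-1)}{q(N+q-1)^2(N+q-2)}.
\]
Chebyshev's inequality $\P[|\s{S}-\E\s{S}|>\varepsilon]\le\var(\s{S})/\varepsilon^2$ then yields exactly the stated bound.

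The only real work is the algebra in the covariance simplification, in particular obtaining the factor $N-1$ from $(N+q-2)-(q-1)$. I would verify it on small cases, for instance $N=1$, where the variance must vanish because $\s{S}=1/q$ deterministically.
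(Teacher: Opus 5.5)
Your proof is correct and follows essentially the same route as the paper's: the paper works with the zero-indicators $\s{Y}_i=\Ind_{\{\s{X}_i=0\}}$ and $\s{Y}=1-\s{S}$ rather than your nonzero-indicators $\s{I}_i$. Since variances and covariances are unchanged under $x\mapsto 1-x$, the balls-and-bins probabilities, the covariance $-\frac{N(q-1)}{(N+q-1)^2(N+q-2)}$ and the final Chebyshev step coincide, and your algebra (including the factor $N-1=(N+q-2)-(q-1)$) checks out.
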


\begin{proof}
 Let $\s{Y}=1-\s{S}$, be the normalized number of zero coordinates of $\su{X}$: 
    \[\s{Y}=\frac{1}{q}\sum_{i=0}^{q-1} \s{Y}_i, \quad \s{Y}_i:= \Ind_{\set{\s{X}_i=0}}.\]
    Note that 
    \[\E\pp{\s{Y}_i}=\P\pp{\s{X}_i=0}=\frac{\binom{N+q-2}{q-2}}{\binom{N+q-1}{q-1}}=\frac{q-1}{N+q-1}.\]
    and therefore 
    \[E\pp{\s{S}}=1-\E[\s{Y}]=1-\frac{1}{q}\sum_{i=0}^{q-1}E\pp{\s{Y}_i}=\frac{N}{N+q-1}.\]
    Similarly, we compute the variance:
    \[\var(\s{S})=\var(1-\s{Y})=\var(\s{Y})=\frac{1}{q^2}\Big(\sum_{i=1}^q \var(\s{Y}_i)+\sum_{i\neq j}\cov(\s{Y}_i,\s{Y}_j)\Big).\]
    Since $\s{Y}_i$ is a $\s{Ber}(p)$ random variable with $p=\frac{q-1}{N+q-1}$ we have $\var(\s{Y}_i)=p(1-p)$. For the cross terms, we have:
    \begin{align*}
        \cov(\s{Y}_i,\s{Y}_j)&=\E\pp{\s{Y}_i\s{Y}_j}-E\pp{\s{Y}_i}E\pp{\s{Y}_i}=\P\pp{\s{X}_i=\s{X}_j=0}-\p{\frac{q-1}{N+q-1}}^2\\
        &=\frac{\binom{N+q-3}{q-3}}{\binom{N+q-1}{q-1}}-\p{\frac{q-1}{N+q-1}}^2=-\frac{(q-1)N}{(N+q-1)^2(N+q-2)}
    \end{align*}
after simplifications. This yields
    \begin{align*}
        \var(\s{S})&=\frac{1}{q^2}\p{q\cdot \frac{q-1}{N+q-1}\p{1-\frac{q-1}{N+q-1}}-q(q-1)\cdot\frac{(q-1)N}{(N+q-1)^2(N+q-2)}}\\
        &=\frac{N(q-1)}{q(N+q-1)}\p{\frac{N-1}{(N+q-1)(N+q-2)}}.
    \end{align*}
By Chebyshev's inequality, $\P [\abs{\s{S}- \E[\s{S}]}>\varepsilon]\leq \frac{\var(\s{S})}{\varepsilon^2},$ and substitutions yield \eqref{eq: support}.
\end{proof}

In the next lemma, we compute a bound on the volume of the ball with center of bounded support. 
\begin{lemma}\label{lem:BallSizeBound}
    Let $\un\in \calS_{q,N}$ be a vector with $\abs{\supp(\un)}=m$. Then 
    \[\abs{\sr{B}(\un,r)}\leq \sum_{j=0}^{r}\binom{j+m-1}{m-1} \binom{j+q-1}{q-1}. \]
\end{lemma}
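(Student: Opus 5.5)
We will bound the ball by decomposing each $\uy\in\sr{B}(\un,r)$ according to how it differs from $\un$. Write $\uy-\un=\ue^+-\ue^-$, where $\ue^+,\ue^-\in\Z_0^q$ are the positive and negative parts, so their supports are disjoint. Since both $\un$ and $\uy$ lie in $\calS_{q,N}$, the coordinate sums of $\uy-\un$ vanish, and hence $\norm{\ue^+}_1=\norm{\ue^-}_1$ in the usual (unnormalized) sense. Call this common value $j$. With the paper's convention $d(\un,\uy)=\frac12\sum_i|n_i-y_i|$, we get $d(\un,\uy)=j$, so $\uy\in\sr{B}(\un,r)$ exactly when $j\le r$.

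The map $\uy\mapsto(\ue^-,\ue^+)$ is injective, since $\uy=\un-\ue^-+\ue^+$. It therefore suffices to count the admissible pairs for each fixed $j$ and then sum over $j=0,\dots,r$.

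\textbf{Constraint on $\ue^-$.} We need $y_i=n_i-e^-_i+e^+_i\ge 0$. On any coordinate with $e^-_i>0$ we have $e^+_i=0$ by disjointness, so $e^-_i\le n_i$. In particular $\supp(\ue^-)\subseteq\supp(\un)$, which has size $m$. Hence $\ue^-$ is a nonnegative integer vector supported on a fixed $m$-set with total $j$. By balls-and-bins, the number of such vectors is at most $\binom{j+m-1}{m-1}$. We drop the extra constraint $e^-_i\le n_i$, since this only enlarges the count.

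\textbf{Constraint on $\ue^+$.} We only need $\ue^+\in\Z_0^q$ with $\norm{\ue^+}_1=j$, which gives at most $\binom{j+q-1}{q-1}$ choices. Again we drop the disjointness requirement, so this is an upper bound.

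Multiplying the two counts for each $j$ and summing over $0\le j\le r$ yields the claimed inequality.

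The only step requiring care is the reconciliation of the factor $\frac12$ in the paper's definition of the $\ell_1$ distance. It is this factor that makes $d(\un,\uy)$ equal to the mass $j$ of either part, rather than $2j$, and so it is what makes the sum run up to $r$. I would state this identity explicitly, using that both vectors have total $N$. For $m=0$ the vector $\un$ is zero, and since $\un\in\calS_{q,N}$ this forces $N=0$. This degenerate case, where only $j=0$ contributes, can be noted separately or excluded by assuming $N\ge1$.
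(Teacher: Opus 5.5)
Your proof is correct and is essentially the paper's argument. The paper likewise splits $\uy-\un$ into the coordinates where $\un$ decreases, a set $A_1\subseteq\supp(\un)$ carrying $j$ units and giving $\binom{j+m-1}{m-1}$, and the coordinates where it increases, carrying $j$ units and bounded by $\binom{j+q-1}{q-1}$, then sums over $j\le r$; your explicit injectivity of $\uy\mapsto(\ue^-,\ue^+)$ and your check that the factor $\frac12$ gives $d(\un,\uy)=j$ spell out details the paper leaves implicit.
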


\begin{proof}
    Let $|\supp(\un)|=m$ and let $\un'$ be such that $d(\un,\un')=j$. Let $A_1=\{i:\un_i>\un'_i\}$ and $A_2=\{i:\un_i<\un'_i\}$. Since $\sum_{i}\un_i=\sum_i\un'_i$, we have that $\sum_{i\in A_1}(\un_i-\un'_i)
=\sum_{i\in A_2}(\un'_j-\un_i)=j$. Thus, we must place $j$ units within $A_1\subset\supp(\un)$ and the other $j$ units in $A_2\subset [q]\backslash A_1$. 
Since we do not have control over the size of $A_2$, the best upper bound on the number of points $n'$ that we can get is 
$\binom{j+m-1}{m-1} \binom{j+q-1}{q-1}$. Summing over all $j$, we obtain the claimed estimate.
\end{proof}

\subsection{Combinatorial and probabilistic results}

\begin{lemma}{\rm (The continuous mappting theorem, \cite[Theorem 2.7]{Billingsley1999})}\label{lem:ctsMap}
Let $(\s{S}_n)_n$ be a sequence of random variables with values in a metric space $(M,d)$, and assume that $T:M\to M'$ is a continuous map to another metric space $(M',d')$. Then if $\s{S}_n$ coverage to some variable $\s{S}$ in distribution, then $T(\s{S}_n)$ converge to $T(\s{S})$ in distribution. 
\end{lemma}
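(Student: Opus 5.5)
This is a cited standard result (Billingsley, Thm 2.7), so the plan is to prove it with the Portmanteau theorem, stated in its closed-set form. Recall that $\s{S}_n\to\s{S}$ in distribution means $\E f(\s{S}_n)\to\E f(\s{S})$ for every bounded continuous $f:M\to\R$. With that definition the argument is essentially one line: take any bounded continuous $g:M'\to\R$ and note that $g\circ T:M\to\R$ is also bounded and continuous. Applying the definition of convergence in distribution to $g\circ T$ gives
\[
\E g(T(\s{S}_n))=\E (g\circ T)(\s{S}_n)\to \E (g\circ T)(\s{S})=\E g(T(\s{S})).
\]
Since $g$ was arbitrary, $T(\s{S}_n)\to T(\s{S})$ in distribution.

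If one instead works with distribution functions, as in the proof of Lemma~\ref{lem:Bessel} where convergence of CDFs is used, I would first invoke the Portmanteau theorem to pass to the bounded-continuous-test-function characterization. On $\R^k$, CDF convergence at continuity points is equivalent to weak convergence, so the argument above then applies unchanged. Alternatively, I would use the closed-set criterion: for closed $F\subseteq M'$, the preimage $T^{-1}(F)$ is closed by continuity, so
\[
\limsup_n \P[T(\s{S}_n)\in F]=\limsup_n\P[\s{S}_n\in T^{-1}(F)]\le \P[\s{S}\in T^{-1}(F)]=\P[T(\s{S})\in F].
\]

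The only point needing care is this equivalence of definitions (Portmanteau); composing continuous maps is routine. In the application within the paper, $T(x,z)=|x-z|$ is continuous on $\Z^2$, and in fact on a discrete space every map is continuous, so no subtlety arises there.
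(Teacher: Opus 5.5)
Your proposal is correct: with convergence in distribution defined through bounded continuous test functions, the fact that $g\circ T$ is bounded and continuous for every bounded continuous $g:M'\to\R$ is the whole proof, and your remark that CDF convergence on $\R^k$ is equivalent to this definition covers the way the lemma is used in Lemma~\ref{lem:Bessel}. The paper gives no proof of its own and only cites Billingsley, whose Theorem~2.7 is proved by your alternative closed-set Portmanteau argument, in a slightly more general form where $T$ need only be measurable with discontinuity set $D_T$ satisfying $\P[\s{S}\in D_T]=0$ (using $\overline{T^{-1}(F)}\subseteq T^{-1}(F)\cup D_T$ for closed $F$).
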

Another useful result concerns the convergence of expectations for sequences that converge in distribution: 
\begin{lemma} {\rm \cite[Theorem 3.5]{Billingsley1999}} \label{lem:ConvExp} Let $(\s{S}_n)_n$ be a sequence of random variables that converge in distribution to $\s{S}$. Then $\E\pp{\s{S}_n}\to \E\pp{\s{S}}$ if $(\s{S}_n)_n$ are uniformly integrable, that is, 
\[\sup_n\E\pp{\abs{\s{S}_n}}<\infty.\]
    
\end{lemma}

A function $f:A_1,\dots, A_n $ and $f:A_1\times \cdots \times A_n\to \R$ is said to satisfy the {\em bounded differences property} if changing the value of the $i$th coordinate $x_i$ changes the value of $f$ by at most $c_i$. 
Formally, this means that 
\begin{equation}\label{eq: bounded differences}
\max_{x_i'\in A_i}|f(x_1,\dots,x_{i-1},x_i,x_{i+1},\dots, x_n)-f(x_1,\dots,x_{i-1},x_i',x_{i+1},\dots, x_n)|\le c_i
\end{equation}
for all $i\in \{1,\dots, n\}$.

\begin{lemma}{\rm (McDiarmid's inequality, \cite[Theorem 3.1]{mcdiarmid1989method}; \cite[Sec.~6.1]{Boucheron2013})}\label{lem:McDiarmind}
Let $\s{X}_1,\dots, \s{X}_n$ be independent random variables taking values in $A_1,\dots, A_n $ and $f:A_1\times \cdots \times A_n\to \R$ be a measurable function with bounded differences $c_1,\dots, c_n$. Then 
    \[\P\pp{\abs{f(\su{X})-\E[f(\su{X})]}>\varepsilon}\leq 2\exp\p{-\frac{2\varepsilon^2}{\sum_{i=1}^n c_i^2}}.\]
\end{lemma}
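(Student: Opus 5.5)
The statement to prove is McDiarmid's inequality. My plan is the standard martingale (Doob) decomposition combined with Hoeffding's lemma and a Chernoff bound.

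First I set up the martingale. Define $\mathsf{V}_k=\E[f(\su{X})\mid \s{X}_1,\dots,\s{X}_k]$ for $k=0,\dots,n$, so that $\mathsf{V}_0=\E f(\su{X})$ and $\mathsf{V}_n=f(\su{X})$. Write the deviation as a telescoping sum,
\[
f(\su{X})-\E f(\su{X})=\sum_{k=1}^n \mathsf{D}_k,\qquad \mathsf{D}_k=\mathsf{V}_k-\mathsf{V}_{k-1}.
\]
Here $\E[\mathsf{D}_k\mid \s{X}_1,\dots,\s{X}_{k-1}]=0$.

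The key step is to show that, conditionally on $\s{X}_1,\dots,\s{X}_{k-1}$, the increment $\mathsf{D}_k$ lies in an interval of length at most $c_k$. This is the step that uses independence. Because the variables are independent, I can write
\[
\mathsf{V}_k=g_k(\s{X}_1,\dots,\s{X}_k),\qquad g_k(x_1,\dots,x_k)=\E f(x_1,\dots,x_k,\s{X}_{k+1},\dots,\s{X}_n).
\]
Also $\mathsf{V}_{k-1}=\E_{\s{X}_k}\, g_k(\s{X}_1,\dots,\s{X}_{k-1},\s{X}_k)$. Hence $\mathsf{D}_k\in[L_k,U_k]$, where
\[
L_k=\inf_{x}g_k(\dots,x)-\mathsf{V}_{k-1},\qquad U_k=\sup_x g_k(\dots,x)-\mathsf{V}_{k-1}.
\]
The bounded differences property, applied inside the expectation defining $g_k$, gives $U_k-L_k\le c_k$. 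Both endpoints are measurable with respect to the past. I expect this to be the main point of care: without independence, conditioning would change the law of the future coordinates.

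With that in hand, I apply Hoeffding's lemma conditionally. For a mean-zero variable supported in an interval of length $c$, $\E e^{\lambda \mathsf{D}}\le e^{\lambda^2c^2/8}$. Applying this conditionally to each increment and peeling off one term at a time via the tower property yields
\[
\E e^{\lambda(f-\E f)}\le \exp\Big(\frac{\lambda^2}{8}\sum_k c_k^2\Big).
\]
Markov's inequality then gives
\[
\P[f-\E f>\varepsilon]\le \exp\Big(-\lambda\varepsilon+\frac{\lambda^2}{8}\sum_k c_k^2\Big),
\]
and the choice $\lambda=4\varepsilon/\sum_k c_k^2$ produces $\exp\big(-2\varepsilon^2/\sum_k c_k^2\big)$.

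Finally, applying the same argument to $-f$, which has the same bounded differences, and taking a union bound gives the factor $2$ in the two-sided statement. If needed, Hoeffding's lemma itself follows from convexity of $e^{\lambda x}$ on $[a,b]$ together with a second-order Taylor bound on the log-moment generating function, whose second derivative is at most $(b-a)^2/4$.
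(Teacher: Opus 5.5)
The paper does not prove this lemma; it quotes it from McDiarmid and from Boucheron--Lugosi--Massart. Your proof is correct and is essentially McDiarmid's original martingale proof: the Doob decomposition, the independence-based bound of each increment's conditional range by $c_k$, conditional Hoeffding's lemma, a Chernoff bound, and a union bound over $\pm f$. (The cited Section~6.1 of Boucheron et al.\ reaches the same constant by a different route, the entropy method with tensorization and the Herbst argument.)
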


\begin{lemma}{\rm (Hoeffding's inequality, \cite[Theorem 2]{hoeffding1963probability}; \cite[Lemma 2.2]{Boucheron2013})} \label{lem:Hoeffding}
Let $\s{X}_1,\dots, \s{X}_n$ be independent random variables such that $a_i\leq X_i\leq b_i$  with probability $1$. Then 
    \begin{equation}\label{eq: Hoeffding}
    \P\pp{\abs{\sum_{i=1}^n (\s{X_i}-\E[\s{X}_i])}>\varepsilon}\leq 2\exp\p{-\frac{2\varepsilon^2}{\sum_{i=1}^n (b_i-a_i)^2}}.
    \end{equation}
\end{lemma}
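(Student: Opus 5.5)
The plan is the standard Chernoff--Cramér argument combined with Hoeffding's lemma on the moment generating function of a bounded, centered random variable. Set $\s{Y}_i=\s{X}_i-\E[\s{X}_i]$, so that $\s{Y}_i$ has mean zero and takes values in an interval $[a_i',b_i']$ of length $b_i-a_i$. Write $S=\sum_{i=1}^n \s{Y}_i$. I will bound the upper tail $\P[S>\varepsilon]$, obtain the lower tail by applying the same bound to $-\s{Y}_i$, and add the two; this union of tails is where the factor $2$ comes from.

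\emph{Step 1 (Hoeffding's lemma).} First I show that a mean-zero random variable $\s{Y}$ with values in $[a,b]$ satisfies
\[
\E e^{s\s{Y}}\le \exp\big(s^2(b-a)^2/8\big)\quad\text{for all } s\in\R .
\]
By convexity of $y\mapsto e^{sy}$ on $[a,b]$,
\[
e^{sy}\le \frac{b-y}{b-a}e^{sa}+\frac{y-a}{b-a}e^{sb}.
\]
Taking expectations and using $\E\s{Y}=0$ gives $\E e^{s\s{Y}}\le e^{\phi(u)}$, where $u=s(b-a)$, $p=-a/(b-a)\in[0,1]$, and
\[
\phi(u)=-pu+\ln\big(1-p+pe^{u}\big).
\]
One checks directly that $\phi(0)=\phi'(0)=0$. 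Moreover, $\phi''(u)=\theta(1-\theta)\le 1/4$ with $\theta=pe^u/(1-p+pe^u)$, since $\phi''$ is the variance of a Bernoulli variable. Taylor's theorem with remainder then gives $\phi(u)\le u^2/8$, which is the claim.

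\emph{Step 2 (Chernoff bound).} For $s>0$, Markov's inequality applied to $e^{sS}$ gives $\P[S>\varepsilon]\le e^{-s\varepsilon}\E e^{sS}$. By independence, $\E e^{sS}$ factors as $\prod_i \E e^{s\s{Y}_i}$. Applying Step 1 to each factor,
\[
\P[S>\varepsilon]\le \exp\Big(-s\varepsilon+\frac{s^2}{8}\sum_{i=1}^n (b_i-a_i)^2\Big).
\]
Minimizing over $s$ at $s=4\varepsilon/\sum_i(b_i-a_i)^2$ yields the bound
\[
\exp\Big(-2\varepsilon^2\Big/\sum_i(b_i-a_i)^2\Big).
\]

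\emph{Step 3.} The variables $-\s{Y}_i$ are also mean-zero and independent, and each lies in an interval of the same length $b_i-a_i$. Hence Step 2 gives the identical bound for $\P[S<-\varepsilon]$. Summing the two tail bounds yields \eqref{eq: Hoeffding}.

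The only nontrivial step is Hoeffding's lemma in Step 1, specifically the uniform bound $\phi''\le 1/4$. I would handle it by recognizing $\phi''$ as the variance of a Bernoulli$(\theta)$ variable, as above, rather than by differentiating directly. The remaining steps are routine.
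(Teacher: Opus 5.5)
Your proof is correct and complete: the Chernoff bound combined with Hoeffding's lemma, obtained from the convexity reduction and the bound $\phi''=\theta(1-\theta)\le 1/4$, is exactly Hoeffding's original argument. The paper does not prove this lemma and only cites Hoeffding's Theorem~2 and Boucheron et al., so your proof follows the cited sources. The tacit assumptions $\varepsilon>0$ and $\sum_i(b_i-a_i)^2>0$ are harmless, since the degenerate case is trivial.
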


Finally, we repeatedly use the asymptotic approximation for binomial coefficients: for a fixed $\alpha$ and $N\to\infty$,
   \begin{align}
    \binom{N}{\floor{\alpha N}} =  2^{N (h_2(\alpha)+o(1))},\label{eq:asymbinom}
\end{align}
where $h_2(\cdot)$ is the binary entropy function; e.g., \cite[Ch.~10, Lemma 7]{macwilliams1977theory}.

\subsection{Additional proofs}\label{sec:additProof}
 \begin{proof}[{Proof of Corollary~\ref{cor:identification}}]
     For $\ur\in\calS_{q,r}$ 
  let $\Pi_{\ur}$ be the orthogonal projection on the space $A_{\ur} (Q_{N})=\{A_{\ur}\ket{\un} ~:~ \un \in C_N\}$, and let $\Pi_{t>1}=I-\sum_{r<t}\Pi_{\ur}$.
     Let  $\ket{\psi}\in Q_N$ be any input state, and let $\ur\in \calS_{q,r}$, $r\leq t$ 
     be fixed. Note that the noisy state $\calN (\ket{\psi}\bra{\psi})$ satisfies
     \begin{align*}
         \Pi_{\ur}\calN (\ket{\psi}\bra{\psi}) \Pi_{\ur}&=\Pi_{\ur}
         \Biggl\{(1-p_{N,t})\cdot \calN_{> t} (\ket{\psi}\bra{\psi})+  \sum_{\substack{\ur' \in \calS_{q,r'}\\
         r'\leq t}}A_{\ur'}\ket{\psi}\bra{\psi}A_{\ur'}^\dag \Biggr\} \Pi_{\ur}
     \end{align*}
    where  the channel $\calN_{>t}$ is defined in  and \eqref{eq:Trunc>t}, respectively.  We observe that $\calN_{>t}$ maps states in $L(\calH_{q,N})$ to states supported on $\calH_{q,<N-t}\triangleq\sr{Span}\set{\bigcup_{0<i<N-t}\calH_{q,i}}$. We also observe that $A_{\ur} (Q_N)\subseteq \calH_{q,N-r}\subset \calH_{q,<N-t}^{\perp} $ and therefore $\Pi_{\ur}\calN(\ket{\psi}\bra{\psi}) \Pi_{\ur}=0$. By \eqref{eq:condOrth}, any error operators $A_{\ur'}$, $\ur\neq \ur'\in \calS_{q,r}$, $r\leq t$, maps the code to orthogonal subspaces of $\calH_{q}$. In particular, $\Pi_{\ur} A_{\ur'}\ket{\psi}=0$. Combining these observations, we obtain: 
     \begin{align*}
         \Pi_{\ur}\calN (\ket{\psi}\bra{\psi}) \Pi_{\ur}&=(1-p_{N,t})\cdot \overset{0}{\overbrace{\Pi_{\ur} \calN_{> t} (\ket{\psi}\bra{\psi})\Pi_{\ur} }}+\sum_{\substack{\ur'\in \calS_{q,r}\\ r'\leq t}} \overset{\delta_{\ur,\ur'}\cdot A_{\ur}\ket{\psi}\bra{\psi}A_{\ur}^\dag}{\overbrace{\Pi_{\ur}A_{\ur'}\calN(\ket{\psi}\bra{\psi})A_{\ur'}^\dag}\Pi_{\ur}}=A_{\ur}\ket{\psi}\bra{\psi}A_{\ur}^\dag.
     \end{align*}
     We also note that the operator $\Pi_{>t}$ is measured with probability:
     \begin{align}\label{eq:InnerPgen}
         \tr\p{\Pi_{>t}\calN (\ket{\psi}\bra{\psi}) \Pi_{>t}}=(1-p_{N,t})\cdot\tr\p{ \calN_{> t} (\ket{\psi}\bra{\psi})}=1-p_{N,t}.
     \end{align}

     We conclude by observing that we do not need the full power of the condition \eqref{eq:condOrth}, only that $\bra{c_i}A_{\ur}A_{\ur'}^\dag \ket{c_j}=0$ holds for $\ur\neq \ur'$. Note that \eqref{eq:InnerPgen} holds in general, without any assumption on $I_i$ and $I_j$. Note that if $\ur\neq  \ur'$ the inner products $\braket{\un-\ur|\un'-\ur'}$  in \eqref{eq:InnerPgen} must be all zero, since $\un \neq \un'+\ur-\ur'$ as $\ur-\ur'$ is a nonzero vector of $\ell_1$ norm at most $2t$ and $d(\un,\un')>t$ by the distance assumption.  
 \end{proof}

 \begin{proof}[{Proof of Proposition~\ref{obs:quantumChannelTrunc}}]
     We note that $\calN_{\leq t}$ is a completely positive map since any map that admits a Kraus representation is such.  To show that it is also trace-preserving, let $X\in L(\calH_{q,N})$ be any operator. It is always possible to decompose it as $X=T_1+i\cdot T_2$, where $T_1$ and $T_2$ are Hermitian. Since $\calH_{q,N}$ is finite-dimensional,
these operators can be written as $T_l=\sum_{j}\lambda_{l,j} \ket{\psi_{l,j}}\bra{\psi_{l,j}} $, $l=1,2$. By Proposition~\ref{prop:ConstExFidel} we have
 \begin{align*}
     \tr\p{\calN_{\leq t}(X)}&=\tr\p{\calN_{\leq t}(T_1+i\cdot T_2)}\\
     &=\frac{1}{p_{N,t}}\cdot \sum_{j}\biggl(\lambda_{1,j}\cdot\sum_{\substack{\ur\in \calS_{q,r}\\r\leq t}}  \bra{\psi_{1,j}}A_{\ur}^\dag A_{\ur} \ket{\psi_{1,j}}+i\cdot \lambda_{2,j}\cdot \sum_{\substack{\ur\in \calS_{q,r}\\r\leq t}}\bra{\psi_{2,j}}A_{\ur}^\dag A_{\ur} \ket{\psi_{2,j}}\biggr)\\
     &=\frac{1}{p_{N,t}}\cdot \sum_{j}\p{\lambda_{1,j}\cdot p_{N,t}+i\cdot \lambda_{2,j}\cdot p_{N,t}}= \tr(T_1)+i\cdot \tr(T_2)=\tr(X). \qedhere
 \end{align*}
 \end{proof}

\bibliographystyle{abbrvurl}
\bibliography{quantum,PI}

\end{document}